\newtheorem{thm}{Theorem}
\newtheorem{lem}{Lemma}
\newtheorem{defi}{Definition}
\newtheorem{cor}{Corollary}
\newcommand{\fref}[1]{Fig.~\ref{#1}}
\newcommand{\sref}[1]{Section~\ref{#1}}
\newcommand{\cref}[1]{Chapter~\ref{#1}}
\begin{document}

\title{Greening File Distribution: Centralized or Distributed?}

\author{\IEEEauthorblockN{Kshitiz Verma\IEEEauthorrefmark{1}\IEEEauthorrefmark{2},
Gianluca Rizzo\IEEEauthorrefmark{1},
Antonio Fern\'andez Anta\IEEEauthorrefmark{1},
Rub\'en Cuevas Rum\'in\IEEEauthorrefmark{2},
Arturo Azcorra\IEEEauthorrefmark{2}}
\IEEEauthorblockA{\{kshitiz.verma, gianluca.rizzo, antonio.fernandez\}@imdea.org, \{rcuevas, aa\}@it.uc3m.es}
\IEEEauthorblockA{\IEEEauthorrefmark{2}Universidad Carlos III de Madrid, Spain}
\IEEEauthorblockA{\IEEEauthorrefmark{1}Institute IMDEA Networks, Madrid, Spain}
}

\maketitle

\begin{abstract}

Despite file-distribution applications are responsible for a major portion of the current Internet traffic, so far little effort has been dedicated to study file distribution from the point of view of energy efficiency. In this paper, we present a first approach at the problem of energy efficiency for file distribution. Specifically, we first demonstrate that the general problem of minimizing energy consumption in file distribution in heterogeneous settings is NP-hard. For homogeneous settings, we derive tight lower bounds on energy consumption, and we design a family of algorithms that achieve these bounds. Our results prove that collaborative p2p schemes achieve up to $50\%$ energy savings with respect to the best available centralized file distribution scheme. Through simulation, we demonstrate that in more realistic cases (e.g., considering network congestion, and link variability across hosts) we validate this observation, since our collaborative algorithms always achieve significant energy savings with respect to the power consumption of centralized file distribution systems.

%Recent studies have demonstrated the growing impact of the ICT sector in worldwide power consumption. Hence, a substantial amount of work has recently gone into defining novel energy efficient networking solutions.
%However, little effort has been dedicated to study file-distribution in this context, despite the fact that file-distribution applications are responsible for a major portion of the current Internet traffic.
%In this paper we present the first deep general study in the field of energy-efficiency for file-sharing. Specifically, we first demonstrate that the problem of minimizing energy consumption in file distribution is NP-Hard. Hence, we analyze restricted versions of the problem, yet maintaining a balance between simplicity and applicability in real networks. Our main contributions are: (i) we derive lower bounds on energy consumption, and design algorithms that achieve these bounds; (ii) we demonstrate that collaborative (p2p) schemes are more energy-efficient than centralized file-distribution systems. Finally, we conduct an empirical simulation study that, first, verifies the obtained analytical results and, second,  demonstrates that also in more realistic cases (e.g., considering network congestion) simple collaborative (or p2p-based) heuristics suffice to reduce the power consumption with respect to centralized file distribution systems.
\end{abstract}

%% A category with the (minimum) three required fields
%\category{H.4}{Information Systems Applications}{Miscellaneous}
%%A category including the fourth, optional field follows...
%\category{D.2.8}{Software Engineering}{Metrics}[complexity measures, performance measures]

%\terms{Theory, algorithms}

%\keywords{Energy efficiency, P2P file dissemination, theory, algorithms.}

%\section{Introduction}
%
%The need to compromise the satisfaction of energy demand with the control of global warming and carbon emissions 
%has triggered the interest on the design of novel energy-efficient solutions in multiple disciplines. 
%Specifically, recent studies reveal that the ICT (Information and Communications Technologies) sector is becoming 
%a major contributor to the worlwide energy consumption, overpassing even the aviation sector \cite{parliament}. 
%Furthermore, the energy consumption of the ICT sector is expected to double in the next decade \cite{pickavetEnergyConsumption}, unless new mechanisms and solutions are implemented. This situation has motivated the research community to investigate new mechanisms and solutions for saving energy, to be deployed by telecommunication network operators, Internet Service Providers (ISPs), content providers, and datacenter owners \cite{marcoMeliaEnergyCostISP,NikosDSLAMSigcom,ElasticTreesDataCenters,ValanciusNanoDataCenters}. 

\section{Introduction}

The need for a reduction in the carbon footprint of all human activities while satisfying an ever growing energy demand has triggered the interest on the design of novel energy-efficient solutions in several domains. Specifically, recent studies reveal that the ICT (Information and Communications Technologies) sector is becoming  a major contributor to the worldwide energy consumption, comparable to the aviation sector \cite{parliament}. 
Furthermore, the energy consumption of the ICT sector is expected to double in the next decade \cite{pickavetEnergyConsumption}, unless new mechanisms and solutions are implemented. This situation has motivated the research community to investigate novel mechanisms and solutions for saving energy in ICT, to be deployed by telecommunication network operators, Internet Service Providers (ISPs), content providers, and datacenter owners \cite{marcoMeliaEnergyCostISP,NikosDSLAMSigcom,ElasticTreesDataCenters,ValanciusNanoDataCenters}.
The proposed approaches in the field of energy efficient networking at either the device level (e.g. new hardware design \cite{bolla2011}) or the system level (energy efficient routing \cite{restrepo2009energy, andrews2010routing2} or sleep modes in wired and wireless networks \cite{gupta2003greening,agarwal2009somniloquy} aim to achieve an ``energy proportional'' network. This is, making the energy consumed by the network proportional to its traffic load. Specifically, hosts (servers and user terminals) are responsible of the major portion of the whole Internet power consumption \cite{pickavetEnergyConsumption}. Current energy efficient strategies in this domain aim at making the energy consumed proportional to the level of CPU or network activity of hosts, and often imply switching off or to a low power mode the devices when not active. However,  energy proportionality of hardware does not suffice to define a complete energy efficient framework for hosts. Indeed, new solutions must be found that implement energy efficient services (e.g. file sharing, web browsing, etc.) to optimize the utilization of hosts and network resources. 

In this paper, we focus on the file distribution service, which is one of the most widespread services on the Internet. Indeed, some of the existing file distribution services, such as peer-to-peer (p2p), one-click-hosting (OCH), software release, etc., represent a major fraction of current Internet traffic \cite{gkantsidis2006planet,LabovitzInternetTraffic,sandvineReportFall2011}. Despite of the importance of these services, to the best of the authors' knowledge, little effort has been dedicated to understanding and achieving energy-efficiency in the context of file distribution applications. In addition, within the context of corporate/LAN networks, other operations such as software updates are also file distribution processes. All this makes essential to deeply investigate energy-efficiency in file distribution, in order achieve a truly \emph{Green Internet}.

This paper is a first step into this direction. Our aim is to define the analytical and algorithmic basis for the design of energy efficient file distribution protocols. For this purpose, we first prove that the general problem of minimizing energy consumption in a file distribution process is NP-hard. Hence, we analytically study restricted versions of the problem, yet maintaining a balance between simplicity and applicability in real scenarios. Our analysis defines lower bounds and proposes collaborative p2p optimal (and near-optimal) algorithms for reducing energy consumption in the studied file distribution scenarios. Afterwards, we present an empirical evaluation through simulation, that allows us to $(i)$ validate our analytical results and $(ii)$ relax several assumptions imposed in the analytical study. %(due to the NP-hardness of the problem).
 Simulations show that, even in more realistic cases (considering energy costs associated to on-off state transitions or network congestion), our collaborative p2p schemes achieve significant energy savings with respect to centralized file distribution systems. These savings range between 50\% and two order of magnitude depending on the centralized scheme under consideration.

In summary, the main contributions of this paper are the following:

\begin{itemize}
\item We prove that the general problem of minimizing energy consumption in a file distribution process is NP-hard.
\item We derive lower bounds for the energy consumed in a file distribution process for simple yet realistic scenarios.
\item We design algorithms that achieve optimal (or near-optimal) energy consumption for these simple scenarios.
\item We demonstrate that the proposed collaborative p2p scheme is an appropriate approach to reduce the energy consumption in a file distribution process showing an improvement factor of at least $50\%$  with respect to any centralized file distribution schemes in the studied scenarios.

\item We perform an empirical simulation study that validates all the previous statements and quantify the energy savings achievable with our algorithms on a representative set of scenarios. 
\end{itemize}

The rest of the paper is structured as follows. Section \ref{sec:model} provides the network and energy model along with definitions and terminology used throughout the paper. Section \ref{sec:thanalysis} presents theoretical results obtained,
in the form of bounds and file distributions schemes.
In Section \ref{sec:numeval}, we present our simulation study. Section \ref{sec:relatedwork} revises the related work and Section \ref{sec:fwop} concludes the paper.

\section{System Model, Problem Definition and Assumptions}
\label{sec:model}

\subsection{System Model and Assumptions}
\label{subsec:netmodel}
%\added[G]{(Make it heterogeneous. Simplify notation)}

We consider a system of $n+1$ hosts ($n \geq 1$) that are fully connected via a wired network. 
%We assume full connectivity of the nodes, i.e., all the hosts are able to reach all the other hosts. 
One of these hosts, called the {\it server} and denoted by $S$, has initially a file of size $B$ that it has to distribute to all the other hosts, which we call the {\it clients}. 
We assume that the file is divided into $\beta \geq 1$ blocks of equal size $s=B/\beta$. 
%(Note that if $s=B,$ then $\beta=1$.)
The set of hosts is denoted as ${\cal H}=\{S,H_0,H_1, ..., H_{n-1} \}$, 
and the set of blocks as ${\cal B}=\{b_0, b_1, ..., b_{\beta-1}\}$.
We will also use in this paper a set of indexes, defined as ${\cal I}=\{S,0,\ldots,n-1 \}$. For simplicity of notation and presentation, 
we will often use an index $i \in {\cal I}$ to denote a host, and even talk about host $i$ instead of host $H_i$ (or $S$ when $i=S$).
 
% \par All the hosts can upload the blocks of the file to the other hosts (initially only $S$ can do so). A client can start uploading block $b_i$ only if it has received $b_i$ completely. Each host has upload capacity $u_i$ and download capacity $d_i$. Moreover, we assume that $d=ku$, for some integer $k \geq 1$. All the hosts are identical with respect to the processing speed and memory. No host can upload more than a block at any given time but can simultaneously upload and download from other host(s). Moreover, it can simultaneously download from up to $k$ hosts, assuming that all the links in the network have the sufficient capacity. 

All the hosts in ${\cal H}$ can potentially upload blocks of the file to other hosts (initially only $S$ can do so). A client can start uploading block $b_i$ only if it has received $b_i$ completely. 
Hosts have upload capacity $u_i$ and download capacity $d_i$, for $i \in {\cal I}$. (Observe that the server has upload capacity $u_S$.) We assume that all capacities are integral.
%\replaced[A][A: I do not think we use that $u \leq d$ anywhere]{We assume that all capacities are integral. }{We assume, as it is usual in practice, that all capacities are integral and that $\max_i\{u_i\}\leq\min_i\{d_i\}$.}
All the hosts are assumed to be identical with respect to processing speed, and to have enough memory to sustain the distribution process. 
No host can upload more than a block at any given time instant, but can simultaneously upload and download from other hosts. 
Moreover, it can simultaneously download from multiple hosts as long as the download capacity allows it. 
%We assume that all the links in the network have the sufficient capacity and the bottleneck is the host's capacity, not the link capacities of the network. 
We also assume that hosts always upload at their full capacity. 

We assume that time in the file distribution process is slotted. Each block transmission between hosts starts and finishes within the same slot.
We assume that no host uploads to more than one host in one slot. 
In general, the slot duration
may vary from one slot to the next. However, unless otherwise stated, we will assume during the rest of the paper that all slots have the same duration $\gamma$.
%
%Since each block size is $s$ and the upload capacity of each host $H_i$ is $u_i$, the duration of a time slot is given by $\frac{s}{\min\{u_i\}}$. 
Then, if the process of file distribution starts at time $t=0$, the time interval $[0, \gamma]$ corresponds to slot $\tau = 1$ and, in general, slot $\tau$ spans the time interval [$(\tau-1)\gamma, \tau \gamma$]. 
In each slot of a scheme, a host is assigned another host to serve (if any), and the set of blocks it will serve during that slot.
Note that hosts can only serve blocks that have been received completely.%(by the end of the previous slot).
%We assume for simplicity that the upload or download of the blocks start at the beginning of a slot.

In this work we consider only the energy consumed by hosts during the file distribution process. We do not consider the energy consumed by other network devices. In our model, the energy consumption has the following three components:
\begin{enumerate}
\item Each host $i \in {\cal I}$, just for being on, consumes power $P_i$  (when a host is off, we assume that it consumes no power). 
\item In addition, each host consumes $\delta_i \geq 0$, $i \in {\cal I}$ for each block served and/or received.
\item A host consumes energy while being switched on or off. If host $i \in {\cal I}$ takes time $\alpha_i$ to switch on or off, the energy consumed by switching is given by $P_i\alpha_i$.
\end{enumerate}

%
%\begin{table}[t]
%\caption{Some of the notation used in this work.}\label{tab:Notation}
%\begin{center}
%\begin{tabular*}{85.7mm}{|c|p{65mm}|}
%  \hline
%  \textbf{Symbol} & \textbf{Definition} \\
%  \hline\hline
%  $n$ & Total number of clients\\ \hline
%  $H_i$ & $i^{th}$ client \\ \hline
%  $S$ & Server, host that has the file initially  \\ \hline
%  ${\cal I}$ & Set of host indexes \\ \hline
%  $\beta$ & Number of blocks into which the file is divided\\ \hline
%  $b_j$ & $j^{th}$ block  \\ \hline
%  $B$ & Size of the file in bits \\ \hline
%  $s$ & Size of a block in bits \\ \hline
%  $u$ & Upload link speed (bits/s)\\ \hline
%  $d$ & Download link speed (bits/s)\\ \hline
%  $k$ & Ratio of the download to upload capacity ($d/u$) \\ \hline
%  $P_i$ & Power consumed by host $i$ when on (in Watt)\\ \hline
%   $\Delta_i$ & Energy consumed by host $i$ involved in a block transfer in a slot \\ \hline
%  %$\Delta_S$ & Energy consumed by the server in a block transfer in a slot \\ \hline
%  $\tau$ & Any arbitrary time slot \\ \hline
% % $\alpha$ & Energy consumed in switching on/off a host \\ \hline
% $z$ & A scheme to accomplish the distribution process \\ \hline
%  $c_{j,i}^z $ & Energy to transfer block $b_j$ to host $H_i$ under $z$ \\ \hline
%  $serv(j,i)$ & Index of the host that serves $b_j$ to host $H_i$ \\ \hline
%  $\mathcal{I}^z_{\tau}$ & Set of active hosts in slot $\tau$ under scheme $z$ \\ \hline
%  $\tau_f^z$ & Makespan of scheme $z$ \\ \hline
%\end{tabular*}
%\end{center}
%\normalsize
%\end{table}

\subsection{Problem and its Complexity}
% (talk about complexity)
% (talk about our approach) 

We define a \emph{file distribution scheme}, or \emph{scheme} for short, as a schedule of block transfers between hosts such that, after all the transfers, all the hosts have the whole file. Observe that a scheme must respect the model previously defined. Then,
the problem we study in this paper is defined as follows.
\begin{defi}
The \emph{file distribution energy minimization problem} is the problem of finding or designing a file distribution scheme that minimizes the total energy consumed.
\end{defi}
The bad news is that this problem is NP-hard even if switching on and off is free and there is no additional energy consumption per block (i.e., $\alpha_i=\delta_i = 0, \forall i \in {\cal I}$). Please refer to Appendix \ref{sec:np-hard} for the NP-hardness proof.
The good news is that, as will be shown later, even though the general problem is NP-hard, by making a few simplifying but still realistic assumptions, we can solve the file distribution energy minimization problem optimally. 

\subsection{Additional Assumptions}
\label{subsec:schemes}

Henceforth, we assume that all the hosts have the same upload capacity $u$, and the same download capacity $d$.
%, and the same additional energy consumption $\delta$ when transmitting or receiving.
We also assume that $\frac{d}{u}=k$ for some positive integer $k$. 
%\deleted[A][Does not seem to be useful]{Note that it is useless to have $k > \min\{n,\beta\}$, since one host uploads at least one block per slot and only to one host.
%Hence from now onwards we assume $1 \leq k \leq \min\{n,\beta\}$.}
Unless otherwise stated, we assume that hosts are switched on and off 
instantaneously, i.e., $\alpha_i=0, \forall i$, and hence switching consumes no energy.
%We assume that the switch off and switch on processes are instantaneous but consume energy $\alpha\geq0$. For analysis, we consider $\alpha=0$.

The uniformity of capacities results in a uniform slot duration, equal to $\gamma=\frac{s}{u}$, for all the block transfers. 
A host is said to be \emph{active} in a time slot if it is receiving or serving blocks in the slot. Otherwise, it is said to be \emph{idle}. 
The energy $\Delta_i$ consumed by an active host $i \in {\cal I}$ in one slot can be computed as follows.
\begin{equation}
\label{def:delta}
\Delta_i=P_i \gamma + \delta_i = \frac{P_is}{u}+\delta_i = \frac{P_iB}{u\beta}+\delta_i.
\end{equation}
Without loss of generality, we assume that %$P_0 \leq \cdots \leq P_{n-1}$, which implies that 
$\Delta_0 \leq \cdots \leq \Delta_{n-1}$. 

In some cases below we will assume that the system is \emph{energy-homogenous}. This means that all hosts have the same energy consumption parameters, i.e., $P_i = P$ and $\delta_i=\delta$, for all $i \in {\cal I}$. In such a homogeneous system, also all hosts have the same value of $\Delta_i=\Delta$. Note that, unless otherwise stated, we assume a heterogeneous system.

Let us consider parameters $n$, $k$, and $\beta$ of the file distribution energy minimization problem. Let us define the set of all possible schemes with these parameters by $\mathcal{Z}_k^{n,\beta}$. Let $E(z)$ be the energy consumed by scheme $z\in \mathcal{Z}_k^{n,\beta}$. 
\begin{defi}
%Among all possible schemes, 
A scheme $z_0 \in \mathcal{Z}_k^{n,\beta}$ is \emph{energy optimal} (or optimal for short) if $E(z_0) \leq E(z), \forall z \in \mathcal{Z}_k^{n,\beta}$.
\end{defi}
Hence, our objective in the rest of the paper is to find optimal (or quasi-optimal) schemes.

\subsection{Normal Schemes}

To rule out redundant and uninteresting schemes, we will consider only what we call normal schemes. 
Observe that the block transfers of a scheme $z$ in a slot $\tau$ can be modeled as a directed \emph{transfer graph} with the hosts as vertices and block transfers as edges (see \fref{f:transfer}). Then, a \emph{normal scheme} is a distribution scheme in which there are no idle hosts, there are no slots without active hosts, and each slot has a connected transfer graph. We denote the set of normal schemes with parameters $n$, $\beta$, and $k$ by $\mathcal{\hat Z}_k^{n,\beta}$. From now onwards, we will consider only normal schemes.
It is easy to observe that any optimal scheme can be transformed into a normal scheme that is also optimal. Hence, we are not
losing anything by concentrating only on normal ones.

Observe that in a transfer graph the out-degree of each vertex is at most 1 (by the upload constraint).
Thus, the transfer graph of a slot in a normal scheme can either be a tree (Fig. \ref{fig:graphtree}) 
or a graph with exactly one cycle (Fig. \ref{fig:graphcycle}).
Note also that in a slot with cycle all hosts upload blocks, while in a tree slot there are hosts that do not upload.

\begin{figure}[t]
\centering

\subfigure[Tree slot]
{
 \includegraphics[width=2.5cm]{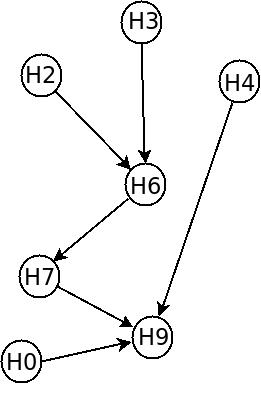}
 \label{fig:graphtree}
}
\qquad
\subfigure[Slot with cycle]
{
 \includegraphics[width=2.5cm]{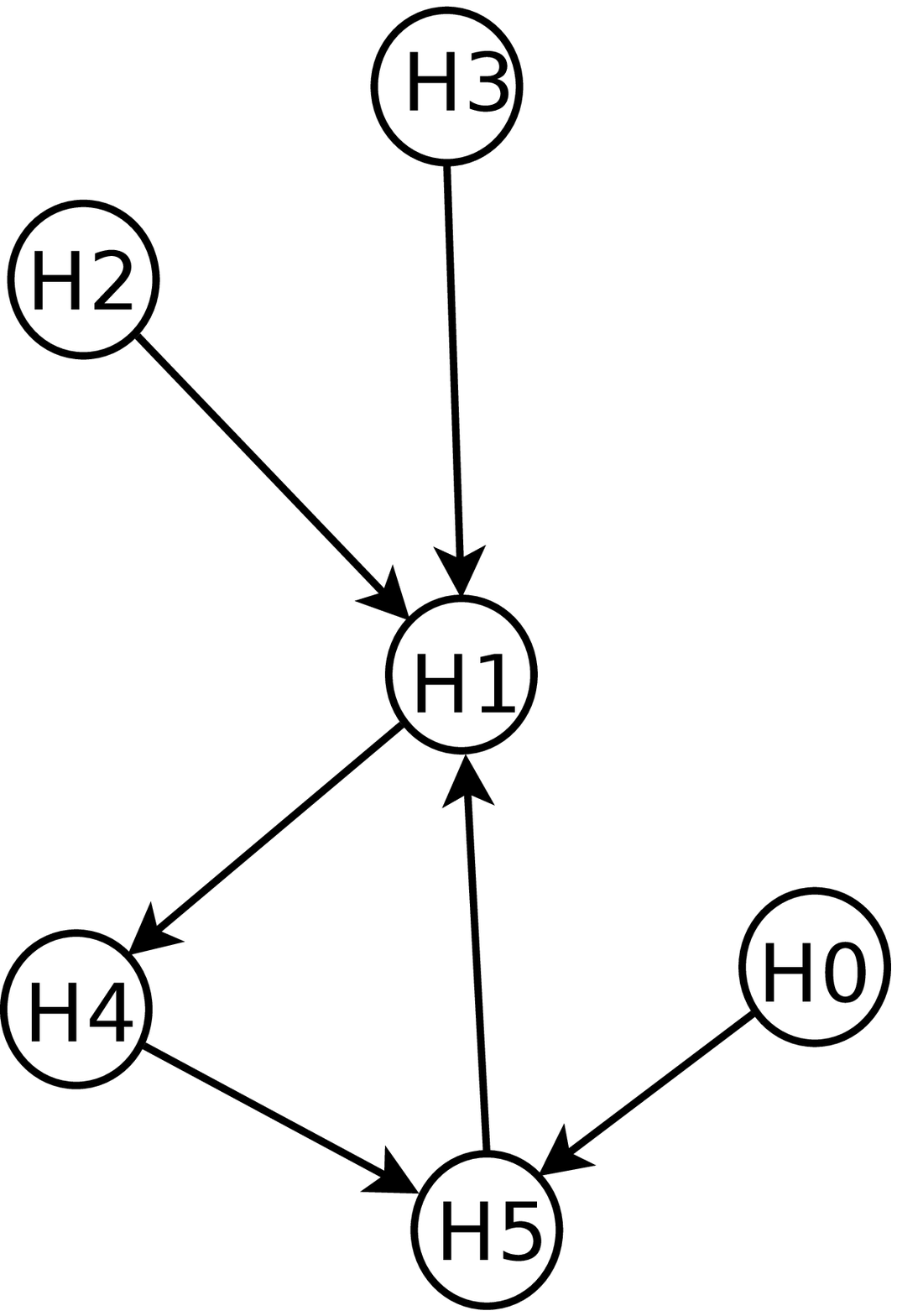}
 \label{fig:graphcycle}
}

\caption{A slot as a directed transfer graph. The number of blocks served in \ref{fig:graphcycle} is one more than the number of blocks served in \ref{fig:graphtree}, with the same energy consumption.}
\label{f:transfer}
\end{figure}
\vspace{-1mm}

\subsection{Costs}

%If define the set $\mathcal{\hat Z}_k$ of normal schemes when the download capacity is $k$ times the upload capacity. 
%Observe that $\mathcal{\hat Z}^{n,\beta}_1 \subseteq \mathcal{\hat Z}^{n,\beta}_2 \subseteq \cdots \subseteq \mathcal{\hat Z}^{n,\beta}_{\min\{n,\beta\}}$. 
%We denote the energy consumed by any scheme $z \in \mathcal{\hat Z}^{n,\beta}_k$ by $E_k(z)$, and the energy consumed by any optimal scheme $z_0 \in \mathcal{\hat Z}^{n,\beta}_k$ by $E_k^{\min}$. Let us denote the indexes of the set of active hosts in a time slot $\tau$ under a scheme $z$ by $\mathcal{I}_{\tau}^z$ (i.e., $\mathcal{I}_{\tau}^z \subseteq \{S,0,\ldots,n-1\}$). 
Let us consider scheme $z \in \mathcal{\hat Z}^{n,\beta}_k$. Denote with $\mathcal{I}_{\tau}^z \subseteq {\cal I}$ the indexes of the set of active hosts in time slot $\tau$ under scheme $z$. 
\begin{defi}
The \emph{cost of slot} $\tau$ under scheme $z$, denoted $c_{\tau}^z$, is the energy consumed by all active hosts $\mathcal{I}_{\tau}^z$ in $\tau$, i.e, 
\begin{equation}
c_{\tau}^z = \sum_{i\in\mathcal{I}_{\tau}^z}\Delta_i \nonumber
\end{equation}
\end{defi}

Let $\tau_f^z$ be the makespan of scheme $z$, i.e., the time slot of $z$ in which the distribution of the file is completed. Then, the energy consumed by the scheme $z$ can be obtained as 
\begin{equation}
\label{slotcost}
E(z) = \sum^{\tau_f^z}_{\tau=1}\sum_{i\in\mathcal{I}_{\tau}^z}\Delta_i
\end{equation}

The cost of a slot, as defined above, %considers the set of hosts that are active in a particular time slot,but it 
does not take into account which host is serving which block to which host. However, the total energy consumption 
of a scheme also depends on this. Thus, for a better insight on the schemes, we also associate a cost to a block transfer. 

We denote the set of blocks downloaded by host $i \in {\cal I}$ in slot $\tau$ under scheme $z$ by ${\mathcal{S}_{i,\tau}^z}$ and the index of the host serving $b_j \in {\mathcal{S}_{i,\tau}^z}$ as $serv(j,i)$.  

\begin{defi} We define the \emph{cost $c_{j,i}^z$ of a block} $b_j$ received by $H_i$ under scheme $z$ as,  
\label{costblockdef}
\begin{equation}
\label{costblock}
c_{j,i}^z = \mathcal{D}_{j,i}^z\cdot\Delta_i + \mathcal{U}_{j,i}^z\cdot \Delta_{serv(j,i)}
\end{equation}

where, if $b_j$ is received by $H_i$ in slot $\tau$,

 \[\mathcal{D}_{j,i}^z = \left\{ 
 \begin{array}{l l}
   1 & \quad \mbox{if $j=\min\{j'|b_{j'}\in \mathcal{S}_{i,\tau}^z\}$ }\\
   0 & \quad \mbox{Otherwise}\\ \end{array} \right. \]

\[\mathcal{U}_{j,i}^z = \left\{ 
\begin{array}{l l}
  1 & \quad \mbox{if $\mathcal{S}_{serv(j,i),\tau}^z=\emptyset$ }\\
  0 & \quad \mbox{Otherwise}\\ \end{array} \right. \]
\end{defi}

$\mathcal{D}_{j,i}^z$ accounts for the energy consumption of host $H_i$ (in units of $\Delta_i$) that is receiving the block. A block contributes to the energy consumed by $H_i$ if it is downloading. If a host is downloading more than one block in parallel, then we assume that only {\it one block} adds to the cost, as the rest of the blocks can be received without incurring any further cost. $\mathcal{U}_{j,i}^z$ accounts for the energy consumption of the host that is serving the block when $\mathcal{S}_{serv(j,i),\tau}^z=\emptyset$ (the host that is serving $b_j$ to $H_i$ is not downloading any block). 
%If a host is on for downloading a block, then we assume that it can serve a block without incurring any further cost (because it was on anyway). 
  
With the above definition, the sum of the costs of all blocks transferred in slot $\tau$ should be equal to the cost of the slot $\tau$, $c_{\tau}^z$. The next result establishes that this is indeed true for all the schemes. The proof can be found in Appendix~\ref{thmeq-proof}.

\begin{thm}
\label{thmeq}
The sum of the costs of all the blocks transferred during slot $\tau$ is equal to the cost of that slot, i.e.,
\begin{equation}
\label{equationeq}
\sum_{i \in \mathcal{I}_{\tau}^z } \sum_{b_j\in \mathcal{S}_{i,\tau}^z}  c_{j,i}^z = c_{\tau}^z
\end{equation}
\end{thm}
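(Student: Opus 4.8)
The plan is to prove the identity by a \emph{charging argument}: I will show that the block-cost bookkeeping on the left-hand side charges each active host $i \in \mathcal{I}_\tau^z$ for exactly one unit of its per-slot energy $\Delta_i$, which is precisely what the slot cost $c_\tau^z = \sum_{i \in \mathcal{I}_\tau^z}\Delta_i$ records. First I would expand the left-hand side with the block-cost definition \eqref{costblock} and split the double sum into its two contributions,
\begin{equation}
\sum_{i \in \mathcal{I}_\tau^z} \sum_{b_j \in \mathcal{S}_{i,\tau}^z} c_{j,i}^z = \sum_{i \in \mathcal{I}_\tau^z} \sum_{b_j \in \mathcal{S}_{i,\tau}^z} \mathcal{D}_{j,i}^z\,\Delta_i \;+\; \sum_{i \in \mathcal{I}_\tau^z} \sum_{b_j \in \mathcal{S}_{i,\tau}^z} \mathcal{U}_{j,i}^z\,\Delta_{serv(j,i)} , \nonumber
\end{equation}
and analyze the \emph{download term} (the first sum) and the \emph{upload term} (the second) separately.

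For the download term I would invoke the definition of $\mathcal{D}_{j,i}^z$. For a fixed host $i$ that downloads at least one block, exactly one block in $\mathcal{S}_{i,\tau}^z$ (the one of minimum index) has $\mathcal{D}_{j,i}^z = 1$ and all others have $\mathcal{D}_{j,i}^z = 0$, so the inner sum collapses to $\Delta_i$; if $i$ downloads nothing then $\mathcal{S}_{i,\tau}^z = \emptyset$ and it contributes $0$. Hence the download term equals $\sum \Delta_i$ taken over \emph{exactly} the hosts with positive in-degree in the transfer graph, i.e. the downloaders, each counted once.

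For the upload term I would reindex the sum by the serving host. Because we restrict to normal schemes, the out-degree of every vertex in the transfer graph is at most $1$, so each uploading host $k$ is the server $serv(j,i)$ of \emph{exactly one} transferred block $b_j$. For that unique block, $\mathcal{U}_{j,i}^z = 1$ precisely when $\mathcal{S}_{k,\tau}^z = \emptyset$, i.e. when $k$ is a \emph{pure uploader} that serves but does not download. Consequently the upload term equals $\sum \Delta_k$ taken over exactly the hosts with positive out-degree and zero in-degree, again each counted once.

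Finally I would combine the two terms and verify that they partition the active set without overlap or omission. A host is active iff it downloads or uploads, so $\mathcal{I}_\tau^z$ is the disjoint union of $\{i : \text{in-degree}\geq 1\}$ (captured by the download term) and $\{k : \text{in-degree}=0,\ \text{out-degree}\geq 1\}$ (captured by the upload term); summing gives $\sum_{i\in\mathcal{I}_\tau^z}\Delta_i = c_\tau^z$, as required. The step I expect to be the main obstacle is exactly this bookkeeping of the cross-cases: I must confirm that a host which \emph{both} uploads and downloads is not double-charged. Its $\mathcal{D}$ term charges it once, while each of its $\mathcal{U}$ contributions as a server vanishes because $\mathcal{S}_{k,\tau}^z \neq \emptyset$; symmetrically a pure downloader never appears as a server and is charged only through $\mathcal{D}$. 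The resolution hinges on two structural facts already available, namely the out-degree-at-most-one property of normal schemes and the minimum-index selector built into $\mathcal{D}_{j,i}^z$, which together pin each active host to a single charging block.
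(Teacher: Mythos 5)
Your proposal is correct and follows essentially the same route as the paper: the paper's proof introduces per-host quantities $\phi_i$ (charged via $\mathcal{D}$ when $\mathcal{S}_{i,\tau}^z\neq\emptyset$) and $\psi_i$ (charged via $\mathcal{U}$ when $\mathcal{S}_{i,\tau}^z=\emptyset$) and observes that exactly one of them equals $\Delta_i$ for each active host, which is precisely your partition of $\mathcal{I}_\tau^z$ into downloaders and pure uploaders. Your write-up is somewhat more explicit than the paper's (in particular about reindexing the upload term via the out-degree-at-most-one property), but the underlying argument is the same.
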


%\begin{proof}
%We transform the cost of a block as defined in Equation~\ref{costblock} to the following one. For each host $i \in \mathcal{I}_{\tau}^z$, define $\phi_i$ and $\psi_i$ as 

%\[\phi_{i} = \left\{ 
%\begin{array}{l l}
%  \Delta_i & \quad \mbox{if $\mathcal{S}_{i,\tau}^z\neq \emptyset$ }\\
%% 	& \quad \mbox{$w=$min$\{u|(u,v) \in E, d(u)=0\}$}\\
%  0 & \quad \mbox{Otherwise}\\ \end{array} \right. \]

%\[\psi_{i} = \left\{ 
%\begin{array}{l l}
%  \Delta_i & \quad \mbox{if $\mathcal{S}_{i,\tau}^z=\emptyset$ }\\
%  0 & \quad \mbox{Otherwise}\\ \end{array} \right. \]

%Note that $\sum_{b_j\in \mathcal{S}_{i,\tau}^z} \mathcal{D}_{j,i}^z =1$ iff $|\mathcal{S}_{i,\tau}^z|\geq 1$ (i.e., when $\phi_i = \Delta_i$). 
%It is easy to see that $\mathcal{U}_{j,i}^z =1$ iff $\psi_{serv(j,i)} = \Delta_{serv(j,i)}$, i.e., $S_{{serv(j,i)},\tau}^z=\emptyset$. 
%Therefore, for a host $i \in \mathcal{I}_{\tau}^z$, either $\phi_i=\Delta_i$ or $\psi_i=\Delta_i$, never both 0 or both $\Delta_i$. Hence,

%\[ \sum_{i\in \mathcal{I}_{\tau}^z} (\phi_i + \psi_i) = \sum_{i\in\mathcal{I}_{\tau}^z}\Delta_i \]
%\end{proof}

Thus, we can express the energy of a scheme $z$ in terms of the cost of blocks $c^z_{j,i}$ as 
% \begin{equation}
% \label{costschblock}
% E(z)= \sum_{i=1}^{n-1}\sum_{j=1}^{\beta}c_{j,i}^z
% \end{equation}

\begin{equation}
\label{costschblock}
E(z) = \sum_{i=0}^{n-1}\sum_{j=0}^{\beta-1}c_{j,i}^z =  \sum_{i=0}^{n-1}\sum_{j=0}^{\beta-1} \left(\Delta_i \cdot \mathcal{D}_{j,i}^z + \Delta_{serv(j,i)} \cdot \mathcal{U}_{j,i}^z\right)
\end{equation}
\section{Theoretical Analysis}
\label{sec:thanalysis}

In this section we provide analytical results for the file distribution energy minimization problem, under the additional assumptions described previously.
The results in this section are classified depending on the ratio $k$ between the download and upload capacities. First, we derive lower bounds on the energy consumption, and provide optimal schemes for the case $k=1$. For $k>1$, we provide optimal and near-optimal bounds and algorithms.

\subsection{Download Capacity = Upload Capacity}
\label{sec:homo}

%time taken to upload a block to a host is same as time taken to download a block in a slot. Thus, 
In this setting, a host can download at most one block during a slot. 
We first provide lower bounds on the energy consumed by any scheme.
Then, we present several optimal schemes, and we derive the value of $\beta$ that minimizes the energy of
optimal schemes in energy-homogenous systems.

%The number of blocks $(\beta)$ into which file is divided, is considered to be fixed and we may not change it unless stated otherwise.
%
%We find that there is no fundamental difference between the scenarios when all the hosts consume the same power in contrast to the scenario in which all the hosts consume different power. We also discuss the effect of the number of blocks on energy consumption of a scheme to distribute file in homogeneous power consumption scenario. To accomplish this, we relax the assumption that the value of $\beta$ is known in advance. 

\subsubsection{Lower Bound}
\label{subsec:bounds}

The following theorem provides a lower bound on the energy consumed by any distribution scheme when $k=1$.
\begin{thm} 
\label{minhetero}
The energy required by any scheme $z$ to distribute a file divided into $\beta$ blocks among $n$ clients when $k=d/u=1$, satisfies
$$
E(z) \geq \beta \left( \Delta_S + \sum_{i=0}^{n-1} \Delta_i \right) +  \max\{0,n-\beta\} \min\{\Delta_S,\Delta_0\}
$$
\end{thm}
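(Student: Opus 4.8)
The plan is to prove the bound by decomposing the energy of any normal scheme into a \emph{download} part and a \emph{pure-upload} part and lower-bounding each separately. Since $k=1$, each host downloads at most one block per slot, so in every slot both the in-degree and out-degree of the transfer graph are at most $1$; being connected, each slot is therefore a single directed path or a single directed cycle. Writing $E(z)=\sum_\tau\sum_{i\in\mathcal{I}_\tau^z}\Delta_i$ and charging each active host in a slot either to ``downloading'' or to ``uploading only,'' I would first observe that every client must download all $\beta$ blocks, one per slot, so the download part is at least $\beta\sum_{i=0}^{n-1}\Delta_i$ (the server never downloads). The only hosts that are active without downloading are the \emph{sources} of path slots, so the pure-upload part equals $\sum_{\text{path slots}}\Delta_{\text{source}}$; this is the term I must bound from below to reach the stated inequality.

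For the pure-upload part I would establish two lower bounds: one on the sources that must be $S$, and one on the total number of path slots $P$. First, since $S$ never downloads it is always a source, and it must perform the first transfer of each of the $\beta$ distinct blocks; as $S$ uploads at most one block per slot, $S$ is the source of at least $\beta$ path slots, contributing at least $\beta\Delta_S$. Second --- and this is the crucial combinatorial step --- I would bound $P$ by tracking $f(\tau)$, the number of hosts that already hold at least one block at the start of slot $\tau$, which must grow from $1$ to $n+1$. The key obstacle and insight is that a host can upload only a block it held \emph{before} the slot (a block received during a slot is not completely received until the slot ends), so in a path $v_0\to v_1\to\cdots\to v_m$ every internal node uploads and hence already held a block: only the sink $v_m$ can be newly informed. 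Thus each path slot raises $f$ by at most one and each cycle slot by zero, forcing $P\ge n$; combined with the $\ge\beta$ paths sourced by $S$ this gives $P\ge\max\{n,\beta\}$.

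Finally I would combine these facts with the weighting. Since $\Delta_0\le\Delta_i$ for all clients, every client-sourced path slot contributes at least $\Delta_0$, and every source contributes at least $\min\{\Delta_S,\Delta_0\}$. Splitting into the cases $\Delta_S\le\Delta_0$ and $\Delta_S>\Delta_0$, and using that $S$ sources at least $\beta$ of the $P\ge\max\{n,\beta\}$ path slots, a short computation gives $\sum_{\text{path slots}}\Delta_{\text{source}}\ge\beta\Delta_S+\max\{0,n-\beta\}\min\{\Delta_S,\Delta_0\}$. Adding the download bound $\beta\sum_{i=0}^{n-1}\Delta_i$ yields exactly the claimed inequality. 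I expect the dissemination argument showing $P\ge n$ to be the main obstacle; the rest is bookkeeping that is robust to redundant transfers, since any redundancy only increases both parts and hence preserves the lower bound.
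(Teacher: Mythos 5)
Your proof is correct and follows essentially the same route as the paper's: your download/pure-upload split is exactly the paper's decomposition of the slot cost into the $\mathcal{D}_{j,i}^z$ and $\mathcal{U}_{j,i}^z$ terms, and your two bounds on path-slot sources (at least $\beta$ sourced by $S$ since the server must first-serve every block, and at least $n$ in total because each path component can inform at most one new host) are precisely the paper's Corollaries~\ref{cor1} and~\ref{cor2} on transfers with $\mathcal{U}_{j,i}^z=1$. The only cosmetic difference is that you argue the ``at least $n$ pure uploaders'' step via a potential function counting informed hosts, whereas the paper uses a per-slot lemma about first-block receptions; the final weighting by $\min\{\Delta_S,\Delta_0\}$ is identical.
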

The key observation behind this result
%behind the optimality when download capacity of a host is equal to its upload capacity 
is that each host has to be active for at least $\beta$ slots to receive the file, whereas the server has to be active for at least $\beta$ slots to upload one copy of each block among the clients. 
%One cannot do better than this which is the essence of the next theorem. 
The proof of the theorem can be found in Appendix~\ref{s-lb-du}.

\subsubsection{Optimal Distribution Schemes}
% and to provide some intuition behind them.\\
%Moreover, even though Algorithm \ref{algo1} is a special case of both Algorithms \ref{algo2} and \ref{algo3}, it is worth writing it separately because of its simplicity to convey the idea for optimality.\\
%design algorithms accordingly. 
%%Though dividing in two cases may be enough, dealing with the case $n = \beta$ separately makes it easy to understand the approach. 
%%We generalize it for $n>\beta$ and $n<\beta$ as well. 
%%
%The schemes are presented in the form of algorithms as Algorithms \ref{algo1}, \ref{algo2}, and \ref{algo3}. 

We now present optimal schemes achieving the lower bound of Theorem~\ref{minhetero}. 
%Note that Equation \ref{eq:minhetero} also hints that the optimal scheme may depend on the relation between $n$ and $\beta$. Hence, 
We distinguish among three cases, depending on the relation between $n$ and $\beta$, and we indicate the resulting schemes as Algorithms \ref{algo1}, \ref{algo2}, and \ref{algo3}. Note that in pseudocode, the transfer of block $b_j$ from host $H$ to host $H'$ is
expressed as $H \xrightarrow{{j}} H'$. Also, all the transfers that occur in the same slot are enclosed by the lines \emph{begin slot} and \emph{end slot}. While the three algorithms could be merged into a single one, we have chosen to present them separately for clarity.\\
We now provide some intuition on the algorithms. We start from Algorithm \ref{algo1}, which assumes that the number of clients is equal to the number of blocks. As each host has to be active at least $\beta$ slots to receive the complete file, Algorithm \ref{algo1} makes sure that the hosts are active for exactly $\beta$ slots. In the first $n$ slots of the algorithm, the server uploads a different block of the file to each of the $n$ clients. Since $n=\beta$, the server can upload the whole file to the clients in $n$ slots. Then the server goes off. At this point, all the hosts have one block and they all need to get the remaining $n-1$ blocks. Each client chooses a client to serve, in a way that the resulting transfer graph is a cycle of $n$ nodes. All the hosts start uploading the latest block they have received, and this process continues for $\beta-1$ slots, until all the hosts have all the blocks.\\
Algorithm \ref{algo2}, which assumes $n<\beta$, is more involved, but uses similar ideas as Algorithm \ref{algo1}. In Fig.~\ref{figtc}, we present a toy example of an scheme obtained from Algorithm~\ref{algo2}. In Algorithm \ref{algo3}, the number of clients is larger than the number of blocks. Thus some hosts will have to upload the same block more than once. In this algorithm, after that the server has served the first $\beta$ blocks, the host with the smallest energy consumption per slot uploads block $b_0$ to those hosts without any block.
\begin{thm}
\label{thmalgo}
When $d=u$, Algorithms \ref{algo1}, \ref{algo2}, \ref{algo3} describe optimal distribution schemes, with energy
$$
E(z) = \beta \left( \Delta_S + \sum_{i=0}^{n-1} \Delta_i \right) +  \max\{0,n-\beta\} \min\{\Delta_S,\Delta_0\}
$$
% and $P_i=P$, $\forall H_i\in \mathcal{H}$.
\end{thm}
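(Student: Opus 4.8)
The plan is to leverage the fact that Theorem~\ref{minhetero} already supplies a universal lower bound on $E(z)$ for every scheme with $k=1$: once I show that each of the three algorithms outputs a valid normal scheme whose energy equals \emph{exactly} the right-hand side of that bound, optimality is immediate, since by Theorem~\ref{minhetero} no scheme can do better. Thus the proof reduces, for each algorithm, to two verifications carried out separately in the three regimes $n=\beta$, $n<\beta$, and $n>\beta$: feasibility of the produced schedule, and an exact count of its energy.

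For feasibility I would argue by induction on the slot index $\tau$, maintaining as invariant a precise description of which blocks each host holds at the end of slot $\tau$. The inductive step must check that every transfer respects the model: the sender already holds the block it serves (precedence); no host uploads to more than one host and no host receives more than one block per slot (the $k=1$ capacity constraint, which together with the out-degree bound forces each slot's transfer graph to be a single path or a single cycle); and each slot's graph is connected with no idle hosts (normality). The base case is slot~$1$, in which only $S$ can upload, and the termination condition to verify is that after the last slot every client holds all $\beta$ blocks.

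The energy count is where the three cases diverge. Since $E(z)=\sum_{\tau}\sum_{i\in\mathcal{I}_{\tau}^z}\Delta_i=\sum_i a_i\Delta_i$, where $a_i$ denotes the number of slots in which host $i$ is active, it suffices to determine each $a_i$. For Algorithms~\ref{algo1} and~\ref{algo2} (the cases $n=\beta$ and $n<\beta$, where $\max\{0,n-\beta\}=0$) I would show that every client is active in exactly $\beta$ slots and downloads a new block in each of them, never being active merely to upload, while $S$ is active in exactly $\beta$ slots, uploading each of the $\beta$ blocks once. This gives $a_i=\beta$ for all $i$ and hence $E(z)=\beta\bigl(\Delta_S+\sum_{i=0}^{n-1}\Delta_i\bigr)$, matching the bound. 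For Algorithm~\ref{algo3} ($n>\beta$) the same baseline gives $\beta$ active slots per host, but the $n-\beta$ clients that the server's $\beta$ distinct-block uploads cannot seed directly require additional seeding. I would show that this incurs exactly $n-\beta$ extra active slots, all charged to a single cheapest host---either $S$ performing extra uploads of $b_0$, or $H_0$ doing so, whichever is cheaper---contributing precisely $\max\{0,n-\beta\}\min\{\Delta_S,\Delta_0\}$.

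The main obstacle is Algorithm~\ref{algo2}, whose pipelined structure for $n<\beta$ is the most intricate: I expect the bookkeeping needed to prove that no client ever spends a slot active \emph{without} downloading---so that $a_i=\beta$ exactly rather than merely $a_i\ge\beta$---to be the delicate part, since a single stray upload-only slot would push the energy above the lower bound and destroy optimality. The analogous subtlety in Algorithm~\ref{algo3} is confirming that the extra active slots number exactly $n-\beta$ and all sit on the cheaper of $S$ and $H_0$, with no further waste elsewhere; the toy example in Fig.~\ref{figtc} should guide the exact form of the inductive invariant.
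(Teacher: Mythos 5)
Your proposal is correct and follows essentially the same route as the paper's appendix proof: invoke the lower bound of Theorem~\ref{minhetero}, establish feasibility of each algorithm by induction on the iteration index with an invariant describing the blocks held (the paper dualizes this for Algorithm~\ref{algo3}, tracking for each block the set of clients holding it), and then count energy by verifying that every client is active only in slots where it downloads, with the $(n-\beta)\min\{\Delta_S,\Delta_0\}$ surplus in Algorithm~\ref{algo3} charged entirely to the upload-only slots of $H_{\min}\in\{S,H_0\}$. The delicate points you flag (no stray upload-only slots in Algorithm~\ref{algo2}, exactly $n-\beta$ extra slots in Algorithm~\ref{algo3}) are precisely what the paper's lemmas and explicit per-loop energy sums verify.
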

For the proof, please refer to Appendix \ref{s-algo-proofs}. In what follows, with $Opt(n,\beta)$ we indicate the algorithm corresponding to the values of $n$ and $\beta$.
%
%We will refer to the schemes described by these algorithms as $Opt(n,\beta)$. The particular scheme/algorithm referred will depend on .
 
\begin{figure}[t!]
\centering
\subfigure[]
{
 \includegraphics[height=0.5in]{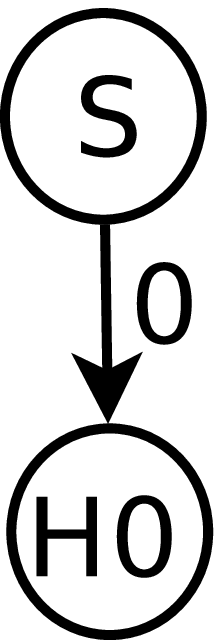}
 \label{fig:first_stepa}
}
~
\subfigure[]
{
 \includegraphics[height=0.5in]{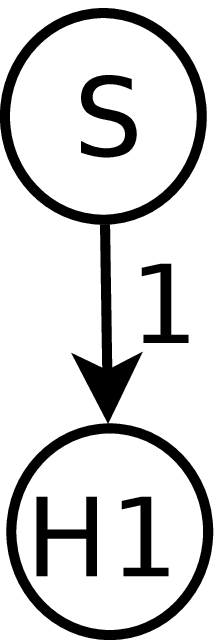}
 \label{fig:first_stepb}
}
~
\subfigure[]
{
 \includegraphics[height=0.5in]{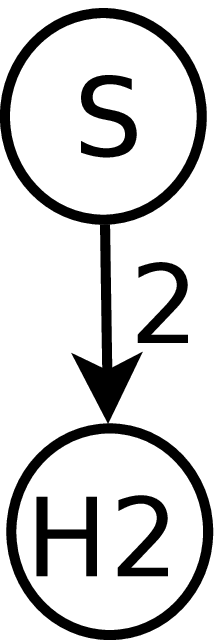}
 \label{fig:first_stepc}
}
~
\subfigure[]
{
\includegraphics[height=0.5in]{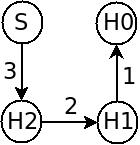}
\label{fig:second_step}
 }
~
\subfigure[]
{
  \includegraphics[height=0.5in]{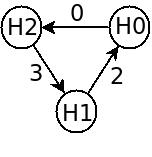}
  \label{fig:third_step}
}
~
\subfigure[]
{
\includegraphics[height=0.5in]{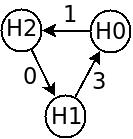}
\label{fig:fourth_step}
}
\caption{Example of Algorithm \ref{algo2}, for $n=3$ and $\beta=4$. The label on each arrow is the index of the block being served. 
%The first three transfers are performed one after the other such that all the hosts have a different block. After that, concurrent transfers are done. 
%The scheme has 4 tree slots and 2 slots with a cycle.
}
\label{figtc}
\end{figure}

\begin{algorithm}[th]
\caption{Optimal scheme for $\beta=n$}
\label{algo1}
\footnotesize
\begin{algorithmic}[1]
\FOR{$j=0:n-1$} \label{alg1:fl1}
\STATE {\it \textbf{begin slot}}
\STATE $S \xrightarrow{j} H_j$
\STATE {\it \textbf{end slot}}

\ENDFOR \label{alg1:fl1end}
\FOR{$j=n:2n-2$}\label{alg1:fl2}
\STATE {\it \textbf{begin slot}}
~~~~\FOR{$i=0:n-1$}
\STATE $H_i \xrightarrow{{(i+j)\bmod{n}}} H_{(i-1)\bmod{n}}$
\ENDFOR 
\STATE {\it \textbf{end slot}}
\ENDFOR \label{alg1:fl2end}
\end{algorithmic}
\end{algorithm}

\begin{algorithm}[th]
\caption{Optimal scheme for $\beta > n$}
\label{algo2}
\footnotesize
\begin{algorithmic}[1]
\FOR{$j=0:n-1$} \label{alg2:fl1}
\STATE {\it \textbf{begin slot}}
\STATE $S \xrightarrow{j} H_j$
\STATE {\it \textbf{end slot}}
\ENDFOR \label{alg2:fl1end}

\FOR{$j=n:\beta-1$ } \label{alg2:fl2} 
\STATE {\it \textbf{begin slot}}
\STATE $S \xrightarrow{j} H_{n-1} $ 
\FOR{$i=1:n-1$}
\STATE $H_i \xrightarrow{i+j-n} H_{i-1}$
\ENDFOR 
\STATE {\it \textbf{end slot}}
\ENDFOR \label{alg2:fl2end}

\FOR{$j=\beta:\beta+n-2$} \label{alg2:fl3}
\STATE {\it \textbf{begin slot}}
\FOR{$i=1:n$ }
\STATE $H_{i \bmod{n}} \xrightarrow{(i+j-n)\bmod{\beta}} H_{i-1}$
\ENDFOR
\STATE {\it \textbf{end slot}}
\ENDFOR \label{alg2:fl3end}
\end{algorithmic}
\end{algorithm}

\begin{algorithm}[th]
\caption{Optimal scheme for $\beta < n$. \newline 
$H_{\min}$ is the host with smallest $\Delta_i$. ($H_{\min} \in \{S,H_0\}$.) }
\label{algo3}
\footnotesize
\begin{algorithmic}[1]
\FOR{$j=0:\beta-1$} \label{alg3:fl1}
\STATE {\it \textbf{begin slot}}
\STATE $S \xrightarrow{j} H_j$
\STATE {\it \textbf{end slot}}
\ENDFOR \label{alg3:fl1end}

\FOR{$j=\beta:n-1$} \label{alg3:fl2}
\STATE {\it \textbf{begin slot}}
\STATE $H_{\min} \xrightarrow{0} H_{j+1-\beta}$ 
\FOR{$i=1:\beta-1$  } \label{alg3:flchange} 
\STATE $H_{i+j-\beta} \xrightarrow{i} H_{i+j+1-\beta}$
\ENDFOR                        \label{alg3:flchangend}
\STATE {\it \textbf{end slot}}
\ENDFOR \label{alg3:fl2end}

\FOR{$j=n:n+\beta-2$}  \label{alg3:fl3}
\STATE {\it \textbf{begin slot}}
\STATE $H_{2n-(j+1)} \xrightarrow{\beta-1} H_{n+\beta-(j+2)}$
\FOR{$i=0:\beta-2$}
\STATE $H_{(n+i-j)\bmod{n}} \xrightarrow{i} H_{(n+i-j-1)\bmod{n}}$
\ENDFOR
\STATE {\it \textbf{end slot}}
\ENDFOR \label{alg3:fl3end}
\end{algorithmic}
\end{algorithm}

%All our optimal schemes perform quite well with respect to the time taken to complete the distribution. They all finish the distribution in $\beta+n-1=O(n+\beta)$ slots. Thus, the algorithms presented here are not just energy optimal but time efficient as well. 

%-------------------------------------------

%\subsubsection{Energy dependence on number of blocks}
\subsubsection{Optimal Number of Blocks in Energy Homogenous Systems}
\label{subsec:tunable}

In this section we consider an energy-homogenous system, in which all hosts have the same energy consumption parameters, i.e., $P_i = P$ and $\delta_i=\delta$, for all $i \in {\cal I}$.
In this system we want to find the optimal value of $\beta$ into which the file should be divided for minimum energy consumption. Intuitively, the number of blocks into which the file must be divided depends on the value of $\delta$. If $\delta$ is very large, then it is better to divide the file in 
a small number of blocks, since each block transmission consumes additional energy $\delta$.
% because this will result in more number of transmissions, increasing the total energy consumed in transmissions. 
On the other hand, if $\delta$ is small, we can divide the file into a number of blocks such that the energy consumed is reduced due to
concurrent transfers.
% $\delta=0$ is a special case in which the energy consumed by a scheme does not depend on the number of blocks.

%
%\begin{equation}
%\label{costschblock}
%E(z)  =  \left( \frac{PB}{u\beta} + \delta \right) \sum_{i=0}^{n-1}\sum_{j=0}^{\beta-1} \left( \mathcal{D}_{j,i}^z + \mathcal{U}_{j,i}^z \right)
%\end{equation}

%Until now we had assumed that the number of blocks in which the file should be divided is given to us. In this subsection, we study the effect of choosing $\beta$ on the amount of energy consumed. From Equation \ref{costschblock} it is clear that the energy consumption of a scheme is inversely proportional to the number of blocks. Thus, we have to find the optimal value of $\beta$ into which the file should be divided for the minimum energy consumption. Intuitively, the number of blocks into which the file must be divided depends on the value of $\delta$. If $\delta$ is very large, then it is better not to divide the file in blocks because this will result in more number of transmissions, increasing the total energy consumed in transmissions. On the other hand, if $\delta$ is small, we can divide the file into a number of blocks such that the energy consumed is minimum.
%% $\delta=0$ is a special case in which the energy consumed by a scheme does not depend on the number of blocks.  

The following theorem presents the optimal value of $\beta$.
\begin{thm}
\label{thm-beta}
In a energy-homogenous system with $k=d/u=1$, the value of $\beta$ that minimizes the energy consumption of an optimal scheme is  
\begin{equation}
\label{valuenb-1}
\beta = \min\left\{\sqrt{\frac{PB}{u\delta}},n\right\}
\end{equation}
\end{thm}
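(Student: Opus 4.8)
The plan is to start from the closed-form energy of an optimal scheme established in \tref{thmalgo} and specialize it to the energy-homogeneous case, where every host, including the server, has the same per-slot cost $\Delta_i = \Delta$. The crucial point, which is easy to overlook, is that $\Delta$ is itself a function of $\beta$: by \eqref{def:delta} we have $\Delta = \frac{PB}{u\beta} + \delta$, so increasing $\beta$ shrinks the per-slot power term $\frac{PB}{u\beta}$ while the per-block overhead $\delta$ stays fixed. Hence the energy is not linear in $\beta$, and the optimization is a genuine trade-off between few large blocks (small $\delta$ overhead, little concurrency) and many small blocks (more concurrency, lower power per slot), exactly the intuition stated before the theorem.

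Next I would substitute the homogeneous values $\Delta_S = \Delta_0 = \cdots = \Delta$ into the optimal-energy expression, using $\Delta_S + \sum_{i=0}^{n-1}\Delta_i = (n+1)\Delta$ and $\min\{\Delta_S,\Delta_0\} = \Delta$, and then split on the sign of $n-\beta$. For $\beta \le n$ the $\max$ term equals $(n-\beta)\Delta$ and the whole expression collapses to $E(\beta) = n(\beta+1)\Delta = n\bigl[\tfrac{PB}{u} + \delta + \tfrac{PB}{u\beta} + \beta\delta\bigr]$. For $\beta > n$ the $\max$ term vanishes and $E(\beta) = (n+1)\bigl[\tfrac{PB}{u} + \beta\delta\bigr]$, which is strictly increasing in $\beta$ whenever $\delta>0$.

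The heart of the argument is the $\beta \le n$ branch: here only $f(\beta) = \tfrac{PB}{u\beta} + \beta\delta$ depends on $\beta$, and by the AM--GM inequality (equivalently, by solving $f'(\beta) = -\tfrac{PB}{u\beta^2} + \delta = 0$) this strictly convex function attains its unconstrained minimum at $\beta^\star = \sqrt{\tfrac{PB}{u\delta}}$. I would then stitch the two branches together: the $\beta>n$ branch is increasing, so it never beats the value at $\beta = n$, and the two formulas agree at $\beta = n$, so the global minimizer equals the minimizer of the convex $f$ restricted to $\beta \le n$. If $\beta^\star \le n$ the interior optimum $\beta^\star$ is feasible and wins; if $\beta^\star > n$ then $f$ is decreasing on all of $(0,n]$, so the constrained minimum sits at the boundary $\beta = n$. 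Both outcomes are captured by $\beta = \min\{\sqrt{PB/(u\delta)},\, n\}$, which is the claim.

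The main obstacle, and really the only place demanding care, is bookkeeping the $\beta$-dependence of $\Delta$ so the two regimes are compared on equal footing, together with checking that the piecewise energy is continuous at the seam $\beta = n$ (so no minimizer is lost there) and that the $\beta>n$ branch is monotone and hence irrelevant precisely when $\beta^\star > n$. I would treat $\beta$ as a continuous variable to match the real-valued expression in the statement; a remark on rounding to the nearest feasible integer could be appended but does not affect the structure of the argument.
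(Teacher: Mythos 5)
Your proposal is correct and follows essentially the same route as the paper: both substitute the homogeneous $\Delta=\frac{PB}{u\beta}+\delta$ into the optimal energy from \tref{thmalgo}, split into the cases $\beta\ge n$ (linear, increasing, hence dominated by its value at $\beta=n$) and $\beta\le n$ (convex in $\beta$, minimized at $\sqrt{PB/(u\delta)}$), and cap the unconstrained minimizer at $n$. Your explicit checks of continuity at the seam $\beta=n$ and of monotonicity on $(0,n]$ when $\beta^\star>n$ are slightly more careful than the paper's treatment but do not change the argument.
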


%\begin{proof}
%From Theorems~\ref{minhetero} and \ref{thmalgo}, the energy consumption of an optimal scheme $z$ in a homogeneous system is
%\begin{equation}
%\label{mink=1}
%E(z) = \left( n\beta + \max\{n,\beta\} \right)\cdot\left( \frac{PB}{u\beta} + \delta \right)
%\end{equation}
%%
%To find the optimal value of $\beta$, we need to minimize the right hand side of Equation \ref{mink=1}. This can be written as a function of $\beta$ as
%%Using Equation \ref{mink=1}, we can rewrite Equation \ref{costschblock} as a function of $\beta$ and $\delta$ as,
%\begin{numcases}{E(\beta)=}
%\frac{PB}{u}(n+1) + \delta (n+1)\beta,&$ \beta \geq n$ \label{first} \\
%\frac{nPB}{u}\left(1+\frac{1}{\beta}\right)+\delta n (\beta+1),&$\beta \leq n$ \label{second}
%\end{numcases}
%%
%Note that in Equation \ref{first} the first term is a constant and the second is linear in $\beta$. This is a straight line with positive slope $\delta (n+1)$.
%Hence, the function attains the minimum at the lower extreme $\beta=n$, where it intersects Equation \ref{second}. Hence it is enough to consider
%Equation \ref{second} for $\beta \leq n$.
%Minimizing Equation \ref{second} with respect to $\beta$ we get, 
%\begin{equation}
%\label{valuenb-2}
%\beta = \sqrt{\frac{PB}{u\delta}}.
%\end{equation} 
%When this value is larger than $n$ the value $\beta=n$ has to be used.
%\end{proof}

Note that if the value of $\sqrt{\frac{PB}{u\delta}}$ is not an integer, it has to be rounded to one of the two closest integer values, such that $E(\beta)$ is minimum.

\subsection{Download Capacity $>$ Upload Capacity}
\label{subsec:d=ku}

In this subsection, we consider an energy homogenous system
%, which implies that all the hosts consume the same energy per slot $\Delta$, 
in which $k>1$. %In this scenario we provide lower and upper bounds on the energy consumed by an optimal scheme.

%a scenario in which all the hosts have the same power consumption $P$ which implies that the energy consumed per slot $\Delta_i=\Delta$ $\forall i\in\{0,1,...,n-1\}$ and the server. We derive lower and upper bounds on the minimum energy $E_k^{min}$, when the download capacity $(d)$ of the hosts is an integral multiple of its upload capacity $(u)$ for $2\leq \frac{d}{u}=k\leq \min\{n,\beta\}$. 

\subsubsection{Lower Bound}

In this section, we present a lower bound on the energy of a schedule in an energy homogenous system with $k>1$. %This bound is different from the one for $k=1$, because  
In this setting, the possibility to download more than one block in a slot implies that the minimum number of slots in which a host has to be on can be less than $\beta$.%, a host does not need to be on for for $k>1$ there is more flexibility in the schemes.
% as the value of $\mathcal{D}^z_{j,i}+\mathcal{U}^z_{j,i}$ depends on $k$. 
%It is because of this proving upper bound on the required number of tree slots as the number of blocks increase is hard. 
%The following theorem gives a lower bound on the energy consumed by an optimal schedule in a homogenous system. 
%Note that these bounds are derived assuming $k=2$ but are trivially true for any $k>2$ as well.    

\begin{thm}\label{th5}
Let $z$ be an optimal schedule in an energy homogenous system. Then the energy consumed by $z$ satisfies
\begin{equation}
\label{mink=2lb}
E(z) \geq n(\beta+1)\cdot\Delta
\end{equation}
\end{thm}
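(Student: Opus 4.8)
The plan is to exploit energy-homogeneity to turn the energy into a pure counting quantity, and then bound that count from below by two separate contributions. Since every host has $\Delta_i=\Delta$, the cost of a slot is $c_\tau^z=|\mathcal{I}_\tau^z|\,\Delta$, so, writing $a_\tau:=|\mathcal{I}_\tau^z|$ for the number of active hosts in slot $\tau$, \eqref{slotcost} becomes
\[
E(z)=\Delta\sum_{\tau=1}^{\tau_f^z} a_\tau .
\]
Hence it suffices to prove that the total number of active host-slots satisfies $\sum_\tau a_\tau\ge n(\beta+1)$. Note this makes no real use of $k>1$; the bound will in fact hold for every normal scheme, so in particular for the optimal one.

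Next I would split the active hosts of each slot according to their role in the transfer graph. Since the out-degree of every vertex is at most $1$, each active host in slot $\tau$ is either an \emph{uploader} (out-degree exactly $1$) or a \emph{pure downloader} (out-degree $0$, hence active only by receiving). Let $U_\tau$ and $D_\tau$ denote their numbers, so that $a_\tau=U_\tau+D_\tau$ and the number of block transfers (edges) in slot $\tau$ equals exactly $U_\tau$. Summing over all slots, $\sum_\tau U_\tau$ is the total number of block receptions over the whole schedule. Because each of the $n$ clients must download each of the $\beta$ blocks at least once, this total is at least $n\beta$, giving $\sum_\tau U_\tau\ge n\beta$. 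This alone only yields $E(z)\ge n\beta\Delta$, so the whole difficulty is to recover the extra additive $n\Delta$ from the pure downloaders.

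The key step is therefore to show $\sum_\tau D_\tau\ge n$. For this I would consider, for each client $H_i$, the first slot $\tau_i$ in which it receives a block. At the start of $\tau_i$ the host $H_i$ holds no complete block (otherwise it would have received one earlier), and since a host may upload a block only after having received it completely in an \emph{earlier} slot, $H_i$ cannot upload anything in $\tau_i$. Thus in slot $\tau_i$ the client $H_i$ is active but has out-degree $0$, i.e.\ it is a pure downloader and contributes to $D_{\tau_i}$. These contributions come from $n$ distinct clients (their slots $\tau_i$ may coincide, but the host-slot incidences are distinct), so $\sum_\tau D_\tau\ge n$.

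Combining the two bounds gives
\[
\sum_\tau a_\tau=\sum_\tau U_\tau+\sum_\tau D_\tau\ge n\beta+n=n(\beta+1),
\]
and multiplying by $\Delta$ yields the claim. I expect the only delicate points to be the clean justification that \emph{active} partitions exactly into the uploader / pure-downloader classes (every active non-uploader really is receiving a block, and every uploader carries exactly one edge), and the temporal argument that a client's first reception slot cannot host any upload by that client — this is precisely where the ``$+n$'' that separates this bound from the trivial $n\beta\Delta$ comes from.
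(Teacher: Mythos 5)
Your proof is correct, and the two quantities you isolate --- $n\beta$ block transfers plus $n$ first-reception events --- are precisely the two contributions the paper's own proof adds up; only the bookkeeping differs. The paper routes the argument through its per-block cost decomposition (Theorem~\ref{thmeq}) and the tree/cycle classification of transfer graphs: it shows that a tree slot serving $x$ blocks costs $(x+1)\Delta$ while a cycle slot costs $x\Delta$, and then obtains the extra $n\Delta$ by arguing that every first-reception slot is a tree slot and that no two clients can receive their first block in the same slot (a step that relies on connectedness of the transfer graph in a normal scheme). You instead count active host-slot incidences directly, partitioning each slot's active hosts into uploaders (one per edge, so $\sum_\tau U_\tau \ge n\beta$) and pure downloaders (at least one incidence per client, at its first reception, so $\sum_\tau D_\tau \ge n$). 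Your version is a little more elementary and slightly more general: it needs neither the block-cost machinery of Theorem~\ref{thmeq} nor normality, since counting incidences rather than slots makes it irrelevant whether several clients' first receptions fall in the same slot. What the paper's phrasing buys is the explicit structural fact that each tree slot contributes exactly one ``free'' $\Delta$ beyond the blocks it serves, which matches the intuition stated after the theorem that at least $n$ tree slots are unavoidable. Both arguments hinge on the same delicate point you correctly flag: a block received in slot $\tau$ is available for upload only from slot $\tau+1$, so a client is necessarily a pure receiver in its first active slot.
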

The derivation of this bound is based on proving that the required number of tree slots is at least $n$, because there are $n$ clients.
For the complete proof, please refer to Appendix \ref{sec:th5}.
%find the minimum number of tree slots. 
%Since a host can receive multiple blocks in a slot, the number of such slots may not be equal to $\max\{n,\beta\}$, as it is in the case of $d=u$. 
%prove that 
%
%
%The main difference between this case and the case $k=1$ is that in the latter case, there was no block with cost 0 (in fact $\mathcal{D}_{j,i}^z = 1$ always), whereas in the case $k>1$ there can be.

%The lower bound mentioned in Equation \ref{mink=2lb} is true for $d=ku, \forall k\geq1$. 

\subsubsection{(Quasi-)Optimal Distribution Schemes}

Observe that the energy consumption of Algorithms \ref{algo1} and \ref{algo3} in an energy homogenous system with $\beta \leq n$ is
exactly $n(\beta+1)\Delta$ (Theorem~\ref{thmalgo}). Hence, these algorithms describe optimal schemes for this system. 
%It is very interesting from this result that having a larger download does not help to reduce energy, 
%since these optimal schemes never use parallel downloads.
%
However, if $\beta > n$, the algorithm for $k=1$ (Algorithm \ref{algo2}) is not optimal anymore if $k>1$. 
In this section we present an algorithm, namely Algorithm \ref{algok=2}, that describes a distribution scheme for this case. 
In fact, the scheme works with $k=2$, as no host has more than two downloads in parallel.

Algorithm \ref{algok=2} distributes the file among the clients using ideas from Algorithms \ref{algo1} and \ref{algo2}. 
We represent the state of process with a two dimensional array $A$ of size $n \times \beta$ (Fig.~\ref{exp5}) with the rows and the columns representing 
the clients and the blocks, respectively. We set an entry $A_{ij}=1, i \in \{0,1,..,n-1\}, j\in \{0,1,..,\beta-1\}$ if and only if $H_i$ has received $b_j$, and 0 otherwise. 
At the beginning, all the entries are $0$ and after the completion of the algorithm they all should be $1$. 
Furthermore, imagine the array $A$ divided in $\lfloor \frac{\beta}{n} \rfloor - 1$
%$\frac{\beta-(n+b)}{n}$ 
square subarrays of size $n\times n$ and one rectangular subarray of size $n\times (n+b)$. (Note that this is just a conceptual division to understand Algorithm \ref{algok=2} in terms of Algorithms \ref{algo1} and \ref{algo2}.)

After the first loop, the diagonal of the first square subarray is set to 1, i.e., $A_{ii}=1, \forall i\in\{0,...,n-1\}$. Additionally, after the second loop,
the top left corner position (see Fig.~\ref{exp5}) of each subarray has also been set to 1, i.e., $A_{0j}=1$,$\forall j\in \{0,n,2n,..,(\lfloor \frac{\beta}{n} \rfloor - 1)n \}$. In each iteration of the for loop at Line \ref{shift}, the elements of one of the subarrays of $n\times n$ are set to 1 by serving in the same fashion as in Algorithm \ref{algo1}, while the server completes serving the diagonal of the next square/rectangular subarray. When Line \ref{run} is reached, all the elements of all the square subarrays are marked as 1. The remaining blocks are served using Lines \ref{alg2:fl2}-\ref{alg2:fl3end} of Algorithm \ref{algo2}, with an appropriate relabeling of the blocks.

\begin{algorithm}[th]
\caption{Energy saving scheme for case $k=2$ and $\beta > n$}
\label{algok=2}
\footnotesize
\begin{algorithmic}[1]
\STATE $b=\beta\bmod{n}$
\FOR{$j=0:n-1$} \label{alg5:fl1}
\STATE {\it \textbf{begin slot}}
\STATE $S \xrightarrow{j} H_j$ 
\STATE {\it \textbf{end slot}}
\ENDFOR \label{alg5:fl1end}

\FOR{$j=1: \lfloor \frac{\beta}{n} \rfloor - 1$} \label{alg5:fl2} 
\STATE {\it \textbf{begin slot}}
\STATE $S \xrightarrow{nj} H_0$
\STATE {\it \textbf{end slot}}
\ENDFOR \label{alg5:fl2end}

\FOR{$l=0:\lfloor \frac{\beta}{n} \rfloor - 2$} \label{shift} \label{alg5:fl3} 
\FOR{$j=0:n-2$}
\STATE {\it \textbf{begin slot}}
\STATE $S \xrightarrow{(l+1)n+j+1} H_{j+1}$ \label{algo5S} 
\FOR{$i=0:n-1$}
\STATE $H_i \xrightarrow{ln + ((i+j)\bmod{n})} H_{(i-1)\bmod{n}}$ \label{algo51}
\ENDFOR
\STATE {\it \textbf{end slot}}
\ENDFOR
\ENDFOR \label{alg5:fl3end}

\STATE Run Lines \ref{alg2:fl2}-\ref{alg2:fl3end} of $Opt(n,n+b)$ after renaming the block $b_{\beta-(n+b)+j}$ to $b_j,$ $\forall j\in \{0,1,..,n+b-1\}$ \label{run}
\end{algorithmic}
\end{algorithm}

\setlength{\unitlength}{1.5cm}
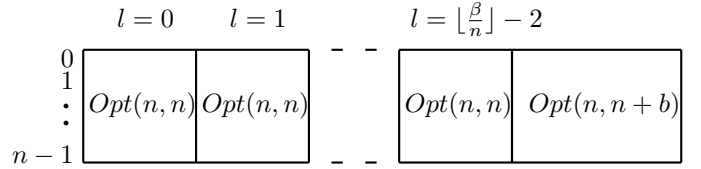
\begin{figure}
\vspace{-5mm}
\hspace{6.5mm}
\begin{centering}
\begin{picture}(1.8,1.8)

\thicklines
\put(0,0){\line(0,1){1}}
\put(0,0){\line(1,0){1}}
\put(0,1){\line(1,0){1}}
\put(1,0){\line(0,1){1}}
\put(1,0){\line(1,0){1}}
\put(1,1){\line(1,0){1}}
\put(2,0){\line(0,1){1}}
\put(2.2,1){\line(1,0){0.1}}
\put(2.5,1){\line(1,0){0.1}}
% \put(2.8,1){\line(1,0){0.1}}
\put(2.2,0){\line(1,0){0.1}}
\put(2.5,0){\line(1,0){0.1}}
% \put(2.8,0){\line(1,0){0.1}}
\put(2.8,0){\line(0,1){1}}
\put(2.8,0){\line(1,0){1}}
\put(2.8,1){\line(1,0){1}}
\put(3.8,0){\line(0,1){1}}
\put(3.8,0){\line(0,1){1}}
\put(3.8,0){\line(1,0){1.5}}
\put(3.8,1){\line(1,0){1.5}}
\put(5.3,0){\line(0,1){1}}
\put(0.3,1.2){$l=0$}
\put(1.3,1.2){$l=1$}
\put(2.9,1.2){$l=\lfloor \frac{\beta}{n} \rfloor - 2$}
\put(-0.2,0.85){$0$}
\put(-0.2,0.65){$1$}
\put(-0.15,0.5){\circle{0.01}}
\put(-0.15,0.35){\circle{0.01}}
\put(-0.63,0.0){$n-1$}
\put(0.05,0.45){$Opt(n,n)$}
\put(1.05,0.45){$Opt(n,n)$}
\put(2.85,0.45){$Opt(n,n)$}
\put(3.95,0.45){$Opt(n,n+b)$}
\end{picture}
\end{centering}
\caption{A representation of Algorithm \ref{algok=2} to visualize the distribution of blocks using the ideas of Algorithm \ref{algo1} and \ref{algo2}.}
\label{exp5}
\end{figure}

We present the bounds achieved in this section in the following
theorem. The proof of the second claim can be found in Appendix~\ref{sec:alg4}. 

\begin{thm}
\label{thm-alg4}
In a homogeneous system with $k>1$, 
\begin{itemize}
\item
If $\beta \leq n$, then Algorithms \ref{algo1} and \ref{algo3} describe optimal distribution schemes with energy $E(z)=n(\beta+1)\cdot\Delta$.
\item
If $\beta > n$, then Algorithm \ref{algok=2} describes a distribution scheme with energy 
\begin{equation}
\label{mink=2ub}
E(z)= \left( n(\beta+1) + \left\lfloor \frac{\beta}{n} \right\rfloor +b- 1 \right) \cdot \Delta
\end{equation}
where, $b=\beta\bmod{n}$
\end{itemize}
\end{thm}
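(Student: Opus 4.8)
The plan is to treat the two bullets separately, since the first is essentially a one-line specialization of an earlier theorem while the second needs a genuine correctness argument for Algorithm~\ref{algok=2}. For $\beta \le n$, I would specialize the energy formula of Theorem~\ref{thmalgo} to the homogeneous case by setting $\Delta_S = \Delta_i = \Delta$. Since $\beta \le n$ gives $\max\{0,n-\beta\}=n-\beta$, the formula collapses to $\beta(\Delta+n\Delta)+(n-\beta)\Delta = n(\beta+1)\Delta$. I would then observe that Algorithms~\ref{algo1} and~\ref{algo3}, although written for $k=1$, remain valid when $k>1$: they never ask any host to download more than one block in a slot, so the surplus download capacity simply goes unused and the set of active hosts per slot is unchanged. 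Optimality is then immediate, since $n(\beta+1)\Delta$ equals the lower bound of Theorem~\ref{th5}.

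For $\beta > n$ the argument has a correctness part and an energy-counting part. Writing $m=\lfloor\beta/n\rfloor$ and $b=\beta\bmod n$, and viewing the state as the array $A$ split into $m-1$ square subarrays and one $n\times(n+b)$ subarray, the heart of the proof is the invariant maintained by the loop at Line~\ref{shift}: \emph{at the start of iteration $l$, the diagonal of the $(l+1)$-th subarray is present, i.e.\ $H_i$ holds $b_{ln+i}$.} The base case $l=0$ holds because the first loop places $b_i$ at $H_i$. The inductive step combines two things occurring in the same slots: (i) the inner cycle of Line~\ref{algo51} is precisely the conveyor-belt cycle of Algorithm~\ref{algo1} shifted onto the blocks of subarray $l+1$, so by the correctness of Algorithm~\ref{algo1} it fills that subarray given its diagonal; and (ii) while this runs, Line~\ref{algo5S} has the server deposit $b_{(l+1)n+i}$ at $H_i$ for $i=1,\dots,n-1$, which together with the corner block $b_{(l+1)n}$ preloaded into $H_0$ by the second loop (Line~\ref{alg5:fl2}) completes the diagonal of subarray $l+2$. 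Hence after the loop at Line~\ref{shift} the subarrays $1,\dots,m-1$ are full and the diagonal of the last subarray is set; the relabeling in Line~\ref{run} then makes the current state coincide exactly with the post-first-loop state of $Opt(n,n+b)$, so the correctness of Algorithm~\ref{algo2} finishes the distribution of the last subarray. I would also check the capacity constraints: every host uploads at most one block per slot, and the only host downloading two blocks simultaneously is $H_{j+1}$ inside the loop at Line~\ref{shift} (fed by both the server and the cycle), which is exactly why $k=2$ suffices.

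With validity established, the energy follows from the fact that in a homogeneous system each slot costs $\Delta$ times its number of active hosts. I would tabulate the phases: the first loop gives $n$ slots of cost $2\Delta$; the second loop gives $m-1$ slots of cost $2\Delta$; the loop at Line~\ref{shift} gives $(m-1)(n-1)$ slots of cost $(n+1)\Delta$ (server plus a full $n$-client cycle); and the two loops of $Opt(n,n+b)$ invoked at Line~\ref{run} give $b$ tree-slots of cost $(n+1)\Delta$ and $n-1$ pure-cycle slots of cost $n\Delta$. Summing these contributions and substituting $\beta=mn+b$ collapses to $\big(n(\beta+1)+m+b-1\big)\Delta$.

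The main obstacle is the inductive invariant above: one must verify that the server's ``look-ahead'' deposits of each next diagonal are never scheduled before the server owns those blocks (harmless, since the server owns the whole file) and, more delicately, that the corner blocks preloaded into $H_0$ by the second loop are exactly the ones the later cycles rely on, so that the diagonals line up across subarray boundaries. Matching the index arithmetic of Lines~\ref{algo5S}--\ref{algo51} and of the relabeling in Line~\ref{run} to the Algorithm~\ref{algo1}/Algorithm~\ref{algo2} templates is the fiddly core of the proof; the energy count and the $\beta\le n$ case are routine by comparison. I would finally confirm the two boundary cases: $m=1$ (the loops at Lines~\ref{alg5:fl2} and~\ref{shift} are empty and Algorithm~\ref{algok=2} degenerates to Algorithm~\ref{algo2}) and $b=0$ (the last subarray is square, so the second loop of $Opt(n,n)$ is empty); in both the formula continues to hold.
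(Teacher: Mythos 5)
Your proposal is correct and follows essentially the same route as the paper: the same four-part array invariant (first-loop diagonal, second-loop corner preloads, the inductive claim that each iteration of the loop at Line~\ref{shift} fills one square subarray while the server completes the next diagonal, and the reduction of the tail to $Opt(n,n+b)$), and the same slot-by-slot energy tabulation $2n\Delta + 2(\lfloor\beta/n\rfloor-1)\Delta + (\lfloor\beta/n\rfloor-1)(n^2-1)\Delta + \left(b(n+1)+n(n-1)\right)\Delta$. Your explicit verification that only $H_{j+1}$ ever has two simultaneous downloads (so $k=2$ suffices) and your check of the boundary cases $m=1$ and $b=0$ are welcome details the paper leaves implicit, but they do not change the argument.
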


While Algorithm \ref{algok=2} does not achieve optimal energy when $\beta > n$, it is quasi-optimal, since it is off from the lower bound by an additive term of 
$(\lfloor \beta/n \rfloor +b- 1) \Delta$, which is smaller than the term $n(\beta+1)\Delta$.
%
% If $\beta=O(n^p)$ for some constant $p\geq1$ then,
%\begin{equation}
%\label{eq:asymptotic}
%E_k^{min} = O(n^{p+1})+O(n^{p-1}) \nonumber
%\end{equation}
%which is asymptotically same as the lower bound on $E^{min}_k$.
%
It is important to note that Algorithm \ref{algok=2} uses $k=2$. 
%Since $\mathcal{\hat Z}_2 \subseteq \mathcal{\hat Z}_k$ for all $k \in \{2,...,{min\{n,\beta\}}\}$, 
Then, the upper bounds on the minimum energy presented here hold for all values of $k>1$. 
%This also suggests that the energy consumption does not depend on the ratio between download and upload capacities as long as all the hosts consume same power. 

%\input{np-hard}
% \input{forpaper}
\section{Performance Evaluation}
\label{sec:numeval}

In order to assess the performance of our scheme, we have run an extensive simulation study with two objectives. First, to evaluate quantitatively the results of our analysis in \sref{sec:thanalysis}. Second, to understand the impact on the performance of our schemes of some effects (like energy cost associated to on/off transitions, network congestion, or the variable power consumption among the devices involved in the file distribution process) not considered in our analysis, but typical of real scenarios.

%In order to quantify the amount of energy savings made possibe by our schemes, and to evaluate the impact of different factors which typically impact the energy consumption of file distribution in realistic scenarios, we have conduced an extensive set of simulations. 

\subsection{Experimental Setup}

In this section we briefly present a description of the experimental setup.

\subsubsection{Scenarios}
\label{sec-scenarios}
In our experiments we have considered two different scenarios, corresponding to two different application contexts for the file distribution problem.

\noindent \textbf{\textit{- Homogeneous scenario}}: In this case, all the hosts participating in the file distribution process have the same configuration. Specifically, we have considered the following values for the relevant input parameters in our experiments: nominal power $P= 80$ W, $\delta = 1$ Joule, and upload and download capacity $u=d=10$ Mbps. Finally, unless otherwise stated, we consider a scenario with one server and $200$ hosts.%, as well as \added[A][Do we really do this? The curves seems to be done with fixed block size]{the optimal block size defined in \sref{subsec:tunable}.}

This homogeneous scenario models a corporate network in which both the network infrastructure and the whole set of devices belong to the same company/organization, and are centrally managed. Typical file distribution processes in this context are software updates (e.g. OS, antivirus), which are usually centrally coordinated by system administrators. These environments are typically characterized by a relatively high uniformity in the network infrastructure and in the user terminals, especially if compared with the Internet. It is expected that communications among hosts in this type of intranet scenario happen at high bit rates, and that the bottleneck for file transfers happens at the terminals rather than in the network. Finally it is worth to mention that, in these settings, energy expenditure is a concern for the organization, as it directly impacts the OPEX of the IT infrastructure.

\noindent \textbf{\textit{- Heterogeneous scenario}}: In this setting, we analyze the impact of heterogeneity in host configurations on the performance of our schemes. This scenario captures the case in which hosts are typical Internet nodes (including home users), and it is therefore characterized by a significant variability across hosts in both the energy consumption profile and the observed network performance (i.e. different access speed and congestion conditions). In this case, the file distribution process is represented by, for instance, a software being released\footnote{Other applications such as entertainment content (video, music) file distribution also fit into this scenario.} (e.g., a new Linux distribution). In this scenario, the incentive for saving energy comes from corporate and indvidual sensibility towards reducing the carbon footprint, since the potential economical benefits for a single host are usually negligeable.%generally marginal.

In this setting we assume $u_i=d_i, \forall i\in\cal{I}$. In order to simplify our study, in our experiments we consider separately the effect of heterogeneity in power consumption and the effect of varying network conditions. %(we rather consider separatel homo and het scenarios). %for the hosts involved in the file distribution process. 
%We then refer to these scenarios as heterogeneous settings.

\subsubsection{File Distribution Schemes}

The file distribution schemes that we have considered in the performance evaluation are:

\noindent \textbf{\textit{- Opt}}: This is the file distribution scheme detailed in \sref{sec:homo}. It is a distributed scheme, since the upload capacity for distributing the file is made available by the same hosts that are downloading the file.

\noindent \textbf{\textit{- Parallel}}: This is a centralized scheme, in which all users download the same file at the same time from the same server in parallel. This is one of the most common architectures for file distribution, and it models a large number of file distribution services present in the current Internet (e.g., One Click Hosting systems such as Megaupload or RapidShare).

\noindent \textbf{\textit{- Serial}}: In this centralized scheme, the server uploads in sequence the complete file to the hosts involved in the file distribution process. That is, the server uploads the complete file to the first host. Once it finishes, it uploads the file to the second host, and so on. We consider this scheme because when $u_i=d_i$ it minimizes the amount of time each host is active in order to receive a file, and therefore the amount of energy spent by each host in the distribution process. This is realized at the expense of the server, who has to remain on for the whole duration of the scheme.% whose energy consumed in this scheme accounts for $50\%$ of the whole energy cost of the scheme (in the homogeneous scenario).    

\subsubsection{Energy Model}
\label{sec:emodel}

For our experiments we considered two different energy models. In a first one, the hosts only have two power states: an \textit{OFF} state, in which they do not consume anything, and an \textit{ON} state, in which they consume the full nominal power, equal to $80$W (typical nominal power consumption for notebooks and desktop PCs lies in the range $60$W-$80$W \cite{NordmanC09}). Unless otherwise stated, this is the default energy model for our experiments.\\
In order to understand the impact of load proportional energy consumption in our schemes, we consider a model that fits most of the current network devices \cite{NordmanC09}, in which the energy consumed has some dependency on the CPU utilization and network activity. This energy model is characterized by four states. Besides the OFF state, the other states are: the \textit{IDLE} state, in which the device is active but not performing any task, and consuming $80\%$ of the nominal power; the \textit{TX-or-RX} state, in which the device is active and either transmitting or receiving, and consuming $90\%$ of the nominal power; the \textit{TX-and-RX} state, in which the device is active and both transmitting and receiving, and consuming its full nominal power. We considered this model to analyze the impact of load proportionality on the overall energy consumption of the schemes considered in our experiments.\\
In \sref{sec-heter} we analyze the effect of having devices with heterogeneous power consumption profiles. For this purpose we use the previously described two-state model, but we assume that for each host its nominal power consumption is drawn from two different distribution: $(i)$ a Gaussian distribution with an average of $80$ W and a standard deviation of $20$ W, and $(ii)$ an exponential distribution, with an average of $80$ W. 

Note that, despite large servers typically present a larger nominal power, in our experiments we assign to the server the same nominal power as a regular host. This assumption is consistent with our intention to be conservative in our study, since our schemes require the server to be active far less time than the serial and parallel schemes.

\subsubsection{Goodness Metric}

The goodness metric we have used in order to compare the energy consumption of different file distribution schemes is \textsl{energy per bit}, computed as the ratio of the total amount of energy consumed by the distribution process, divided by the sum of the sizes of all the files delivered in the scheme. 

\subsection{Homogeneous Scenario}
\label{sec:num-hom}

%We start by considering the homogenous scenario described. 
%\deleted[A][I think this is not correct, maybe left from a previous version]{Furthermore, in this specific case we considered a file size of $100MB$, and a block size of 256kB footnote(This is a typical block size in several p2p systems such as BitTorrent) 
%Ruben: [We said in the description of the homogeneous scenario (Section IV.A.1 that we use the optimal block size. However it seems that are using 256KB. We need to clarify this and write down what is appropriate in sectoin IV.A.11], 
%hence a total of $400$ blocks.
%}

\subsubsection{Validation of the Analysis}
In Fig.~\ref{fig:421} we have plotted the energy per bit consumed by the file distribution process as function of the size of the file, for the three different file distribution schemes considered. As we can see, our schemes perform consistently better than both serial and parallel schemes. In particular, by maximizing the amount of time in which hosts serve while being served, our schemes tend towards reducing by half the total energy cost of serving a block with respect to the serial scheme.  This performance improvement with respect to the serial scheme is due to the use of (p2p-like) distribution, and indeed it decreases as the file size (and the number of blocks into which it is split) decrease. With respect to the serial scheme, our optimal schemes make the most out of the energy consumed by all hosts which are active and being served at a given time, by having them contributing as much as possible to the file distribution. As a consequence, despite each host spends more time in an active state than in the serial scheme, the net effect is a decrease of the total energy.

Moreover, we can also observe how the parallel scheme performs consistently worse than any other scheme, consuming up to two orders of magnitude more than the serial scheme. Since the utilization of this parallel scheme is widespread in the current Internet, our observations confirm the great potential of distributed schemes for saving energy. 

Fig.~\ref{fig:421} also depicts the performance of our \emph{Opt} algorithm for different number of hosts ($50$, $200$, and $400$). We observe that the energy per bit consumed by our algorithm as well as by the serial scheme are not affected by the number of hosts in the scheme. Hence for the rest of the section we will present results exclusively for a setting with $200$ hosts.

Finally, it is worth noting that, for the optimal scheme, the nonsmooth variation of the energy per bit with file size, observable at low values of file size, is due to quantization in the number of blocks. The serial and parallel schemes (for which there is no partition of the file into blocks) have a smoother behavior with respect to file size. %We address this issue in the next subsection.

\subsubsection{Block Size}
The impact of the total number of blocks on the energy consumed by our \emph{Opt} scheme can be seen in Fig.~\ref{fig:422}, where we plotted the energy per bit consumed with \emph{Opt} for variable file sizes, and for a total of $200$ hosts. The green curve corresponds to the case in which a fixed block size, equal to $256$ kB, is used, while the lower red one is obtained by using an optimal block size, according to the formula in \sref{subsec:tunable}. We see how the use of an optimal block size leads to an increment in energy savings mainly for small file sizes. The reason is that for small file sizes a fixed block size leads to a small number of blocks, and consequently to exploit less the distributed (p2p-like) mechanisms which, in our scheme, improve the efficiency of the distribution process. %\added[G]{Put n of blocks in the plot, (2 y axes)}

%  Here too we see that when on/off costs are considered, the performance gains over the serial scheme decrease for some values of file size, and therefore of number of blocks. One possibility to recover part of those gains is to optimize the energy cost of the scheme over the total number of blocks. Indeed, similarly to what done in sec. \ref{subsec:tunable}, it can be shown that when on/off costs are present, the optimal value of $\beta$ is given by 
%\[\beta=\sqrt{\frac{nPB}{u(\delta n+4Pt)}}\]
%As shown also in figure \ref{fig:fsalt}, 
%As shown in the figure, the use of the optimal number of blocks allows to improve the performance gains of our scheme over the serial scheme. \fref{fig:422} also shows that for file sizes which are not much larger than the block size, the amount of energy savings of our scheme when on/off costs are present is reduced. This is expected, as in realistic settings the use of any scheme which involves using the hosts to serve other hosts implies some amount of overhead which inevitably translates into additional energy costs.\\
%

\begin{figure*}[t]
\begin{minipage}[b]{0.33\linewidth}\centering
    \includegraphics[width=\textwidth]{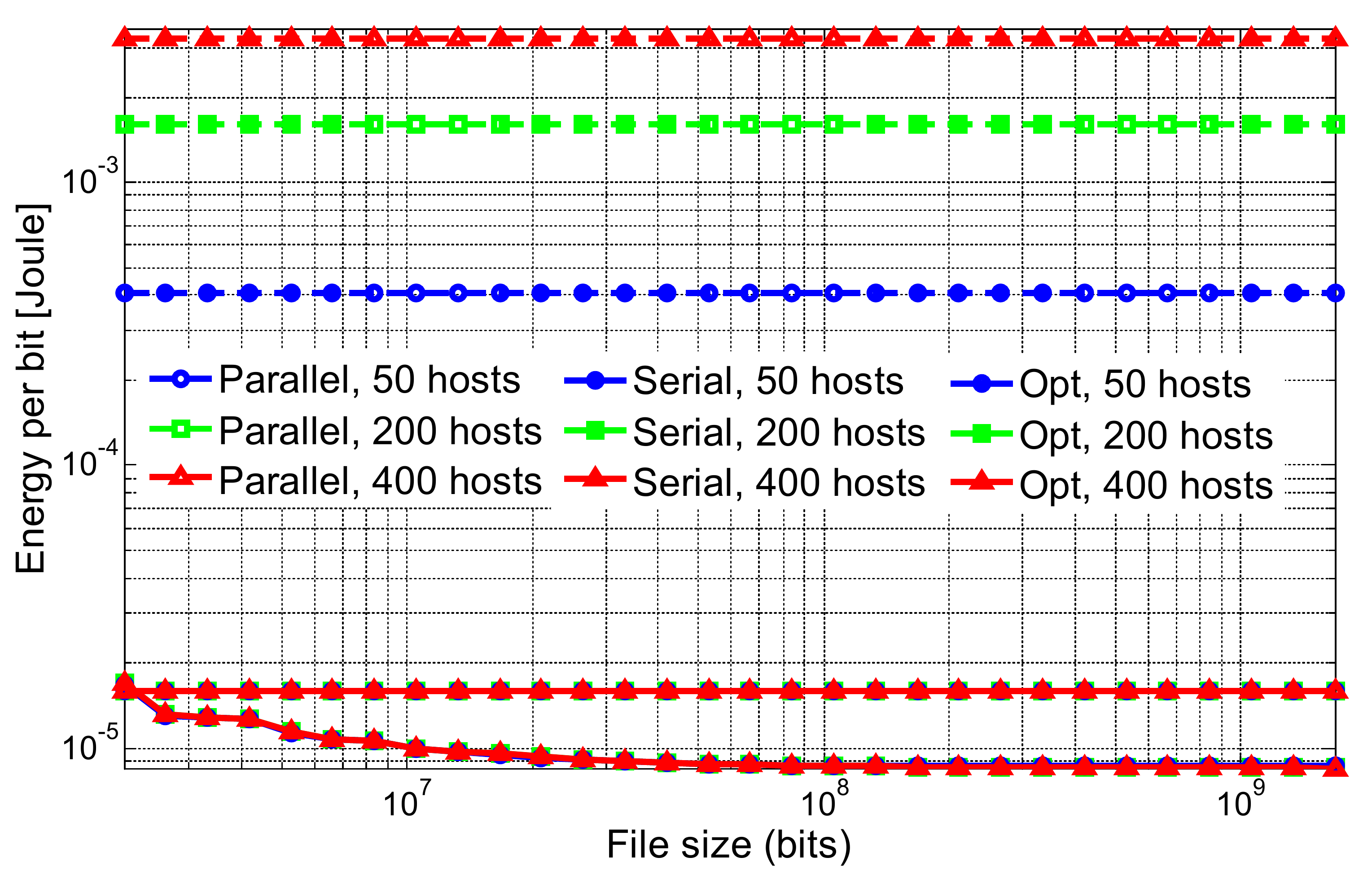}
 \vspace{-0.3in}
    \caption{Energy per bit consumed by \emph{Opt} in function of file size, compared with the serial and the parallel scheme. Block size: $256$kB.}\label{fig:421}
\end{minipage}
\begin{minipage}[b]{0.33\linewidth}\centering
    \includegraphics[width=\textwidth]{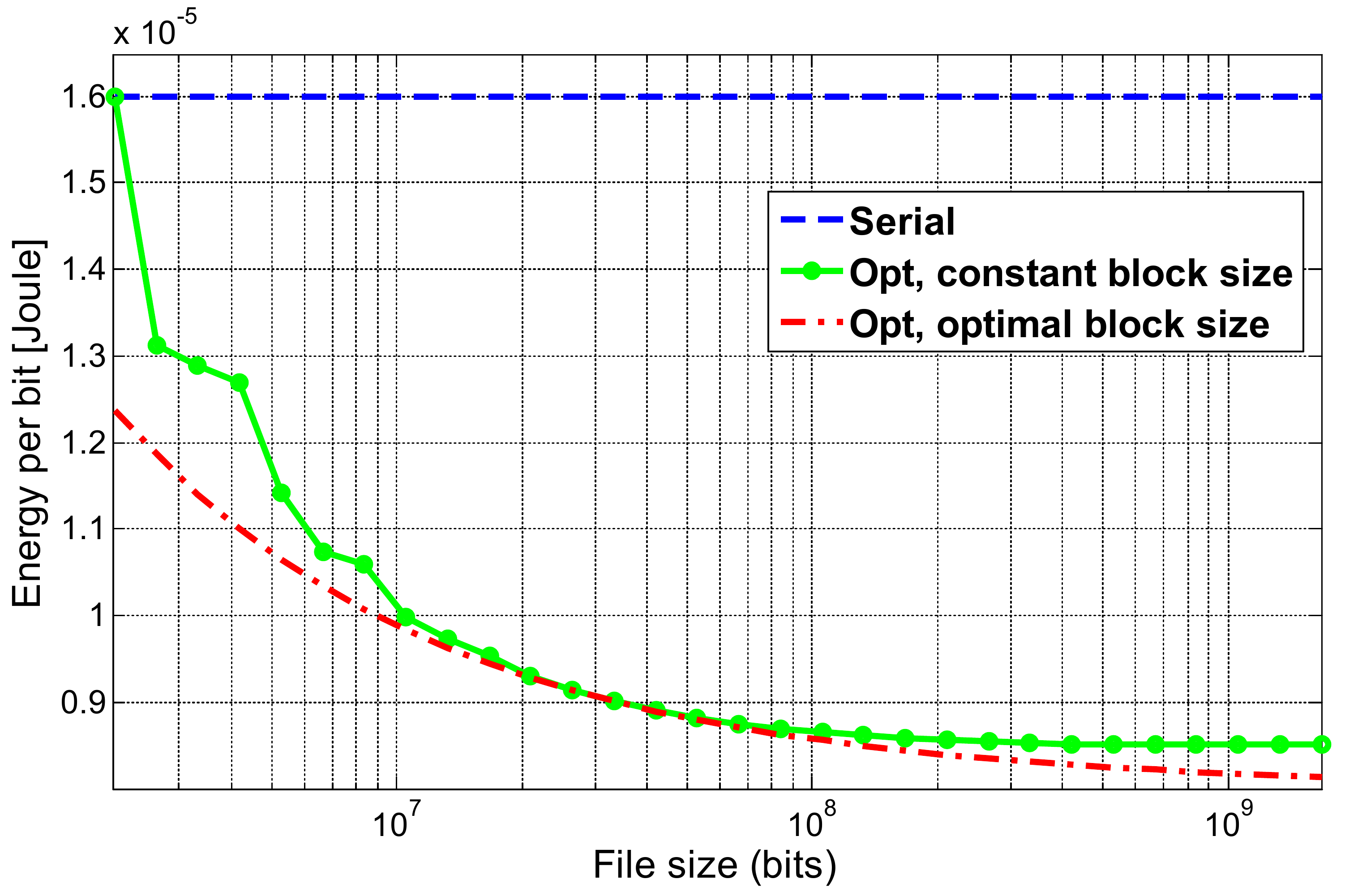}
  \vspace{-0.3in}
    \caption{Impact of the choice of number of blocks on the energy per bit consumed by our algorithm, in function of file size, with $200$ hosts.}\label{fig:422}
\end{minipage}
\begin{minipage}[b]{0.33\linewidth}\centering
    \includegraphics[width=\textwidth]{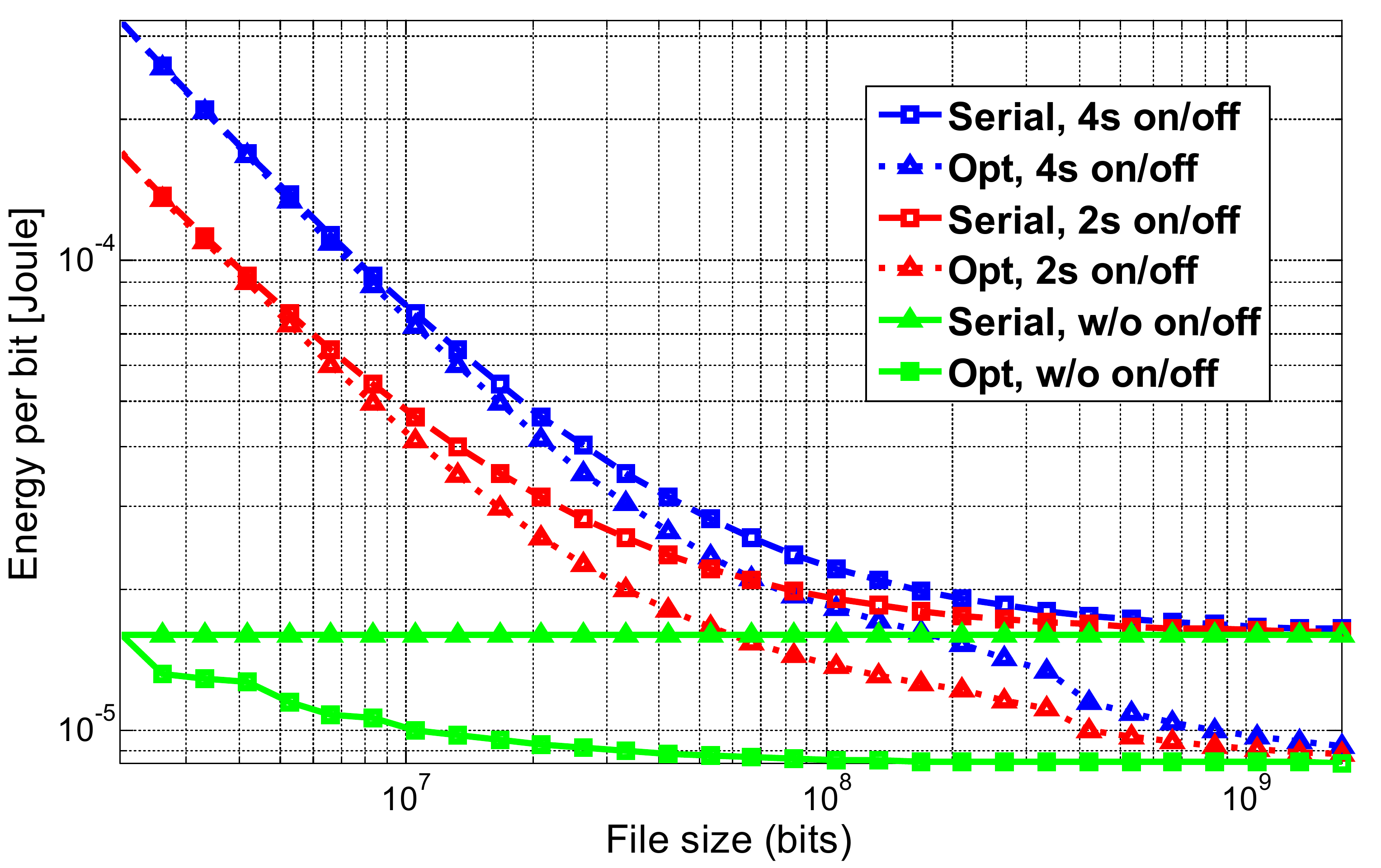}
  \vspace{-0.3in}
    \caption{Impact of on/off energy cost on the energy per bit consumed by our algorithm, in function of file size, with $200$ hosts.}\label{fig:423}
\end{minipage}

\begin{minipage}[b]{0.33\linewidth}\centering
    \includegraphics[width=\textwidth]{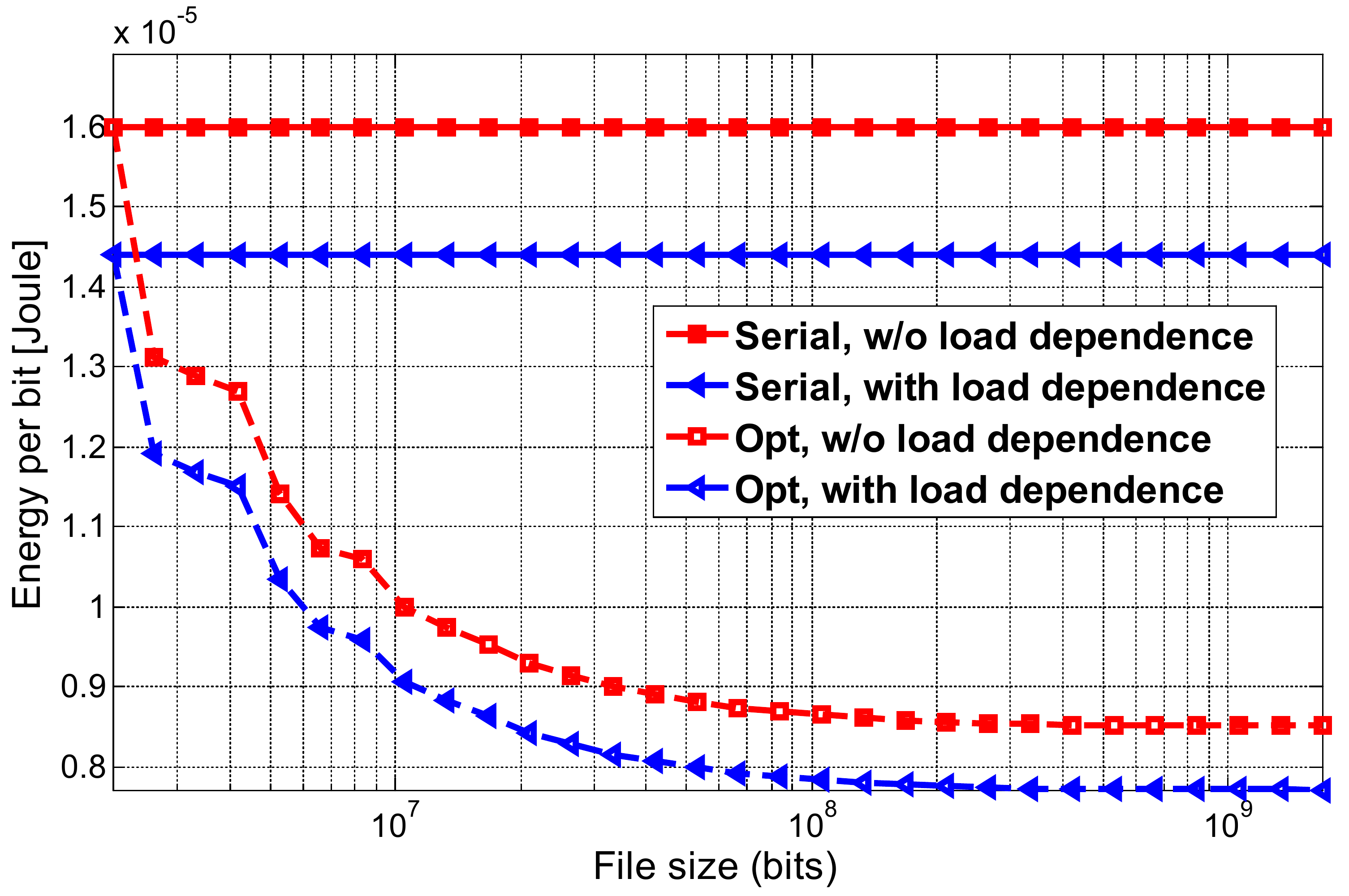}
  \vspace{-0.3in}
    \caption{Impact of the energy model on the energy per bit consumed by our algorithm, in function of file size, with $200$ hosts.}\label{fig:424}
\end{minipage}
\begin{minipage}[b]{0.33\linewidth}\centering
    \includegraphics[width=\textwidth]{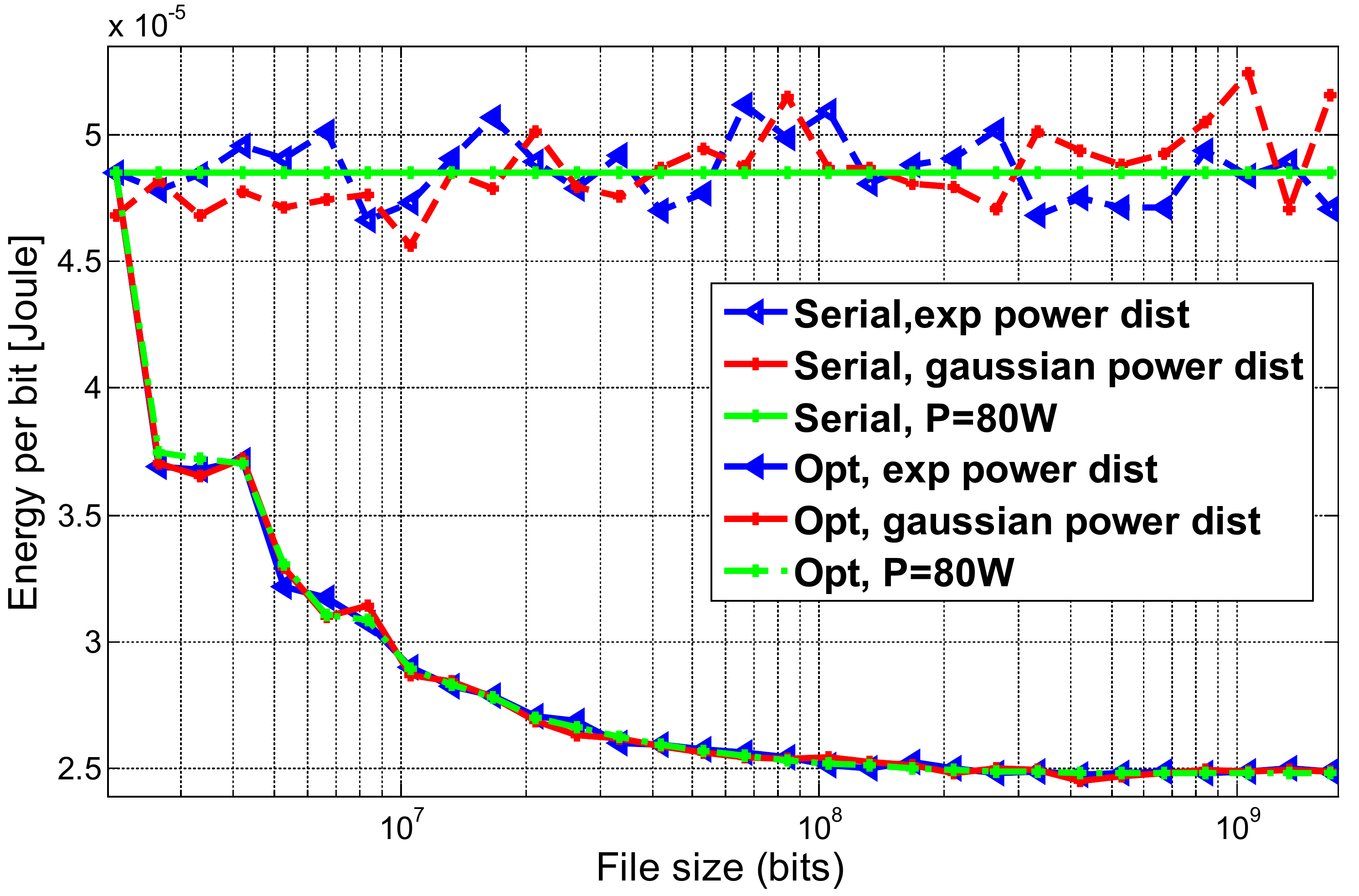}
  \vspace{-0.3in}
    \caption{Impact of heterogeneity in nominal power on the energy per bit consumed by our algorithm, in function of file size, with $200$ hosts.Curves are plotted with $95\%$ confidence interval.}\label{fig:431}
\end{minipage}
\begin{minipage}[b]{0.33\linewidth}\centering
    \includegraphics[width=\textwidth]{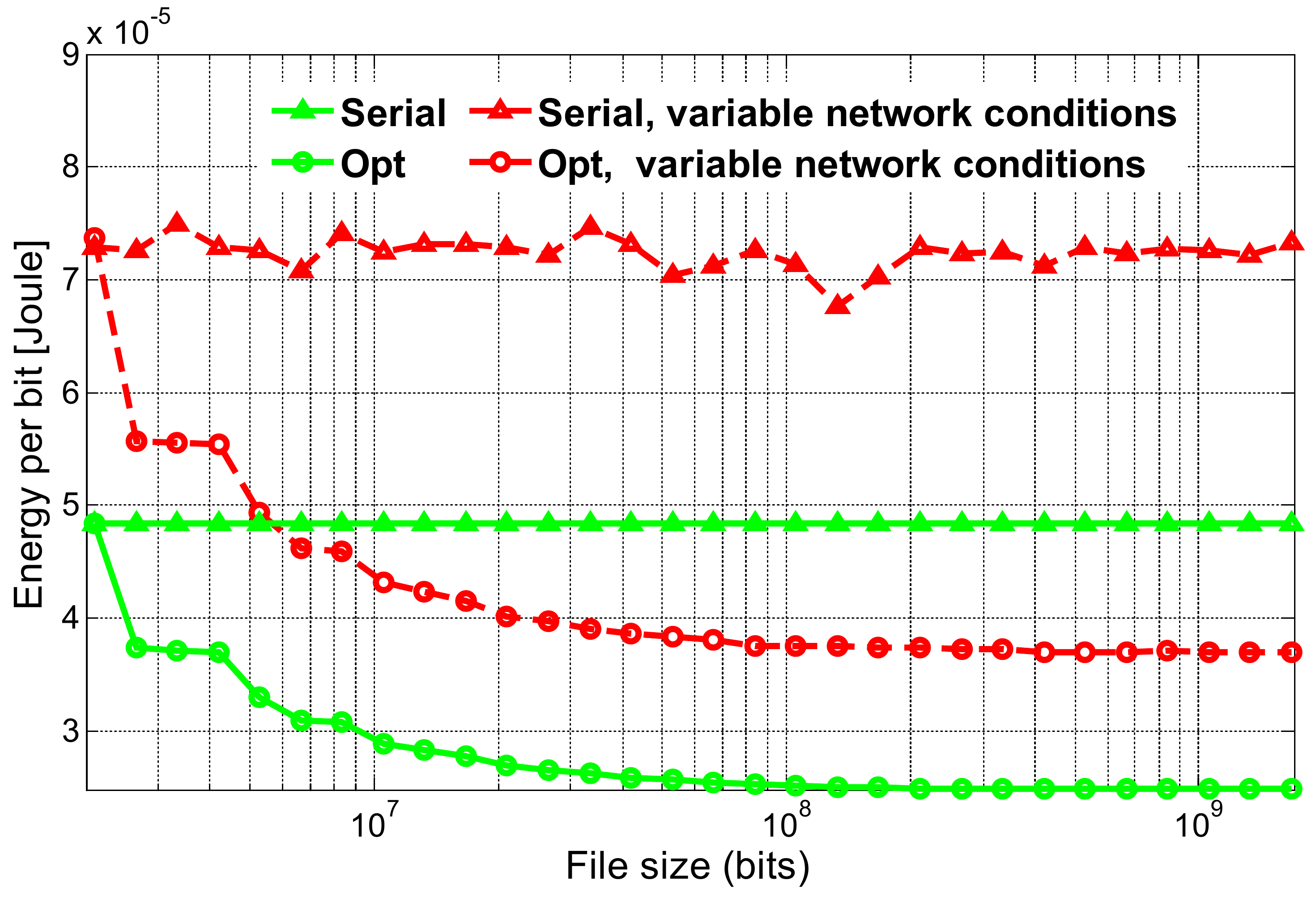}
  \vspace{-0.3in}
    \caption{Impact of variable network conditions on the energy per bit consumed by our algorithm, in function of file size, with $200$ hosts. Curves are plotted with $95\%$ confidence interval.}\label{fig:432}
\end{minipage}

\vspace{-0.15in}
\end{figure*}

\subsubsection{ON/OFF Energy Costs}

As seen in previous sections, our optimal algorithms develop in rounds. Typically, not every host is on in every round (i.e., some go on and off more than once during the file distribution process). In a realistic scenario, a host takes some time to both go off (or into a very low power mode), and to get back to active mode.  Usually, this on/off time is in the order of a few seconds \cite{softpedia2009}. The additional amount of energy consumed while switching between these power states (that we call here ``on/off costs") has potentially an important impact on the energy performance of a scheme, penalizing specifically those schemes in which host activity is more ``discontinuous" over time.

%When on/off costs are present, in order to minimize the impact of on/off costs on its performance, we assume that hosts which should be turned off decide whether to actually turn off or to stay on based on the difference between the time taken to go off and on again, and the estimated time until the next slot in which the host should be on according to the scheme: we assume that when the latter is smaller than the former, the host remains on and waits for the next slot \added[G]{talk about the issue of estimating this time in heterogeneous settings..we shoudl say the algorithm can be made more complex....but this is left for future work}. 

In order to mitigate the negative impact of on/off costs, in our simulations we implement the following mechanism. When a host $A$ has finished its activity (i.e. uploading or/and downloading a block) in an slot $t_1$, and has no activity until slot $t_2$, it computes the energy cost of staying on ($cost_{\mathrm{on}}$) until the slot $t_2$ and the cost of going off during the rest of slot $t_1$ and switching on at the beginning of slot $t_2$ ($cost_{\mathrm{off/on}}$). Hence, if $cost_{\mathrm{on}} \le cost_{\mathrm{off/on}}$, $A$ decides to stay on. Otherwise, it goes off for its non-active period between slots
$t_1$ and $t_2$.

Fig.~\ref{fig:423} presents the energy consumed by our scheme in comparison to the serial scheme considering a switch on/off time equal to $2$ and $4$s. As expected, the on/off costs increase the energy per bit consumed by all schemes. This increment is more pronounced for small file sizes, where we see that on/off costs make the performance of our scheme closer (but still better) to the serial scheme. Conversely, for medium/large file sizes, the contribution of on/off costs to the total energy consumed by a scheme becomes marginal, and the performance of both the optimal scheme and the serial approaches the one in the case without on/off costs. Note the widening of the gap between the serial scheme and our scheme for file sizes around $50$MB is due to the different behavior that our scheme has for the case $n<\beta$ and for the other case.

%To evaluate the effect that on/off costs on our scheme, we have run some experiments in which the time to switch on or off was assumed to be equal to $5$s and $10$s. 
%In  we see how including on/off costs 

\subsubsection{Load Dependency}
\label{sec:loaddep}
In this set of experiments, we have analyzed the impact of the four-states energy model described in \sref{sec:emodel}, which implies some degree of energy proportionality of the host devices. The research community is putting a lot of effort in energy proportionality. Hence, in the future it is expected that network devices will consume energy proportionally to the supported load. Fig.~\ref{fig:424} shows that with the four-states energy model the percentual decrease in the energy per bit consumed by our \emph{Opt} scheme and by the serial one is the same. This suggests that even with load proportional hardware our scheme enables significant energy savings with respect to the serial one. %This promising result suggests that future energy proportional hardware will help to largely reduce the power consumption in file distribution processes.\added[G]{(but of how much? change the plot and show just gains? how parallel ans serial are affected?)}

%From \fref{fig:424} \added[G]{(add plot on serial for energy prop)} we see that the small amount of energy proportionality in our four-states energy model improves the performance of our scheme with respect to the serial one. 

\subsection{Heterogeneous Scenario}
\label{sec:num-het}
In this subsection we consider two separated heterogeneous scenarios. On the one hand, we study the case in which different hosts present different power consumption profiles. On the other hand, we address the scenario in which each host observes different network conditions (i.e., different access speed and congestion level).

\subsubsection{Heterogeneous Power Consumption}
\label{sec-heter}

In Section \ref{sec:homo} we have proved analytically that our \emph{Opt} algorithm minimizes the overall power consumption of the file distribution process, even in a heterogeneous scenario in which each host presents a different energy consumption (as long as all the nodes have the same upload and download rate). To validate this statement, in this subsection we have run experiments in which the nominal power consumed by the hosts varies according to either a Gaussian or an exponential distribution as defined in Section~\ref{sec:emodel}. Then, the energy consumption has been compared with a homogeneous scenario. The results, presented in Fig.~\ref{fig:431}, validate our analysis, since the three curves for the \emph{Opt} scheme overlap perfectly.	We also observe that heterogeneous power consumption has some minor impact in the case of the serial scheme. Finally, it is worth to note that confidence intervals 
have been calculated for each curve (but not shown for clarity), being in any case lower than $5\%$.

%In \fref{fig:431} \added[G]{(revise labels in the legend)}we have the energy per bit consumed by the considered schemes when the nominal power of the hosts varies according to a gaussian or exponential distribution, compared with the performance of the scheme in the homogeneous case, where the nominal power of all hosts is equal to the average nominal power in the heterogeneous case. Curves are plotted with $95\%$ confidence interval (note that, for readability, confidence intervals are plotted only for the gaussian case).
%From this figure we can see that considering the more realistic case of heterogeneity in power consumption among hosts does not change the average performance neither of our scheme, nor of the serial scheme, as none of these schemes takes into account the nominal power of the hosts in determining the schedule for the file distribution process. Of course, in this case better algorithms (but less fair) could be deviced, which for instance maximize the use of the least energy-hungry devices in the scheme, but this is out of the scope of the present paper.    

\subsubsection{Heterogeneous Network Conditions}

In the results presented we have considered $(i)$ similar upload/download access speed for all host and $(ii)$ no network congestion. In this subsection we relax these assumptions, and consider a heterogeneous scenario where hosts have different access speeds and observe different network
state (e.g., congestion). This scenario accurately models a content distribution process in the Internet.

In particular, in the simulations we model the different nominal access speed of hosts using an exponential distribution, based on realistic speed values provided in \cite{akamai_ncdist}. Additionally, in order to model the variation in link speed over time due to network conditions (i.e., congestion) we multiply the nominal access speed by a positive factor taken from a Gaussian distribution with average $1$ and standard deviation $0.07$. Fig.~\ref{fig:432} presents the results for these heterogeneous network conditions, for both our \emph{Opt} scheme and the serial scheme, and compares them with the homogeneous case. The results show that both schemes suffer from an increment in the power consumption, with respect to the homogeneous case. However, the relative difference between the \emph{Opt} and serial schemes increases. This suggests that even in heterogeneous network conditions the proposed algorithm outperforms any centralized scheme.

Moreover, we observe that the energy per bit consumed is constant for both \emph{Opt} and serial schemes when considering heterogeneous network conditions. This occurs because none of the considered schemes takes into account host upload/downlad capacity in determining the schedule for file distribution.

Finally, note that confidence intervals have been obtained for the different curves and all of them present less than $5\%$ difference to the average value in the figure.

\section{Related work}
\label{sec:relatedwork} 

\noindent \textbf{Energy-Efficiency in Networks}:
In order to reduce the overall energy consumption of the Internet, many dimensions for energy savings have been explored. The main efforts include turning off the devices that are unnecessarily on \cite{gupta2003greening,agarwal2009somniloquy}, aggregating traffic streams to send data in bulk \cite{gupta2003greening,nedevschi2008reducing,andrews2010routing1}, network planning \cite{chabarek2008power}, energy efficient routing \cite{restrepo2009energy, andrews2010routing2} and virtualization and migration of routers \cite{wang2008virtual}. Furthermore, some works have addressed specific aspects of energy-efficiency in datacenters \cite{abtsDatacenter,chunDataCenters,ElasticTreesDataCenters}.

\noindent \textbf{Optimization problems in file-distribution processes:}
An important amount of effort has been dedicated to study the completion download time in a file distribution process \cite{kumar2006peer,lingjun2009improving,langner2011optimal}. The minimization of the average finish time in P2P networks is considered in  \cite{sanghavi2007gossiping,matthewtang2009,tsang2010}. Of interest to this paper, \cite{mundinger2008optimal} presents a theoretical study to derive the minimum time associated to a P2P file distribution process. However, an scheme guaranteeing a file distribution with minimum time does not generally leads to minimize the energy consumption. Moreover, schemes with similar distribution time may have different energy costs.
%\added[A][]{It is interesting to note that one of the algorithms presented in this paper is essentially the algorithm used in \cite{sanghavi2007gossiping}
%for a different problem.}

\noindent \textbf{Energy-Efficiency in file distribution:}
To the best of the authors knowledge energy consumption in file distribution processes has received little attention so far. On the one hand, practical studies \cite{ ValanciusNanoDataCenters,lee2011toward,feldmann2010energy, blackburn2009simulation, giuseppe2010} have discussed and compared the energy consumed by different content distribution architectures or protocols. However none of them relies on an analytical basis nor aims to design optimal algorithms, as is the case of our paper. On the other hand,  Mehyar et al. \cite{mehyar2007optimal} and Sucevic et al.\cite{sucevicpowering} (similarly as we do) address the energy-efficiency in file-distribution from an analytical point of view. However, their studies are restricted to P2P schemes whereas the current paper cover both centralized and distributed approaches in order to identify the most efficient scheme. In addition, their analysis is limited to networks of at most 3 nodes. For bigger network sizes, they provide heuristics and use simulations to evaluate energy efficiency. Instead, our analysis is valid for an arbitrary number of nodes. Finally, it is worth to mention that, to the best of our knowledge, we are the first on providing a proof of the NP-hardness of the energy-efficiency optimization problem for file-distribution processes.

\section{Conclusions}
\label{sec:fwop}

This paper presents one of the first dives into a novel and relevant field that has received little attention so far: energy-efficiency in file distribution processes. We present a theoretical framework that constitutes the analytical basis for the design of energy-efficient file distribution protocols. Specifically, this framework reveals two important observations: $(i)$ the general problem of minimizing the energy consumption in a file distribution process is NP-hard and $(ii)$ in all the studied scenarios there exists always a collaborative (i.e. p2p-like) distributed algorithm that reduces the energy consumption of any centralized counterpart. This suggests that in those file distribution processes in which reducing the energy consumption is of significant importance (e.g. software update over night in a corporative network) a distributed algorithm should be implemented. %Indeed, in this paper we present collaborative algorithms that minimize the energy consumption in simple yet realistic scenarios.

\bibliographystyle{IEEEtran}
\bibliography{p2pfiledist}

\begin{thebibliography}{10}

\bibitem{nada}
{\em The NanoDatacenters (NaDa) Project}.

\bibitem{agarwal2009somniloquy}
Y.~Agarwal, S.~Hodges, R.~Chandra, J.~Scott, P.~Bahl, and R.~Gupta.
\newblock Somniloquy: augmenting network interfaces to reduce pc energy usage.
\newblock In {\em Proceedings of the 6th USENIX symposium on Networked systems
  design and implementation}, pages 365--380. USENIX Association, 2009.

\bibitem{andrews2010routing1}
M.~Andrews, A.~Fernandez Anta, L.~Zhang, and W.~Zhao.
\newblock Routing and scheduling for energy and delay minimization in the
  powerdown model.
\newblock In {\em INFOCOM, 2010 Proceedings IEEE}, pages 1--5. IEEE, 2010.

\bibitem{andrews2010routing2}
M.~Andrews, A.~Fernandez Anta, L.~Zhang, and W.~Zhao.
\newblock Routing for energy minimization in the speed scaling model.
\newblock In {\em Transactions of Networking, Accepted for publication, DOI:
  10.1109/TNET.2011.2159864}. IEEE/ACM, 2011.

\bibitem{blackburn2009simulation}
J.~Blackburn and K.~Christensen.
\newblock A simulation study of a new green bittorrent.
\newblock In {\em Communications Workshops, 2009. ICC Workshops 2009. IEEE
  International Conference on}, pages 1--6. IEEE, 2009.

\bibitem{cagalj2002minimum}
M.~{\v{C}}agalj, J.P. Hubaux, and C.~Enz.
\newblock Minimum-energy broadcast in all-wireless networks: Np-completeness
  and distribution issues.
\newblock In {\em Proceedings of the 8th annual international conference on
  Mobile computing and networking}, pages 172--182. ACM, 2002.

\bibitem{chabarek2008power}
J.~Chabarek, J.~Sommers, P.~Barford, C.~Estan, D.~Tsiang, and S.~Wright.
\newblock Power awareness in network design and routing.
\newblock In {\em INFOCOM 2008. The 27th Conference on Computer Communications.
  IEEE}, pages 457--465. IEEE, 2008.

\bibitem{dale2009apt}
C.~Dale and J.~Liu.
\newblock apt-p2p: A peer-to-peer distribution system for software package
  releases and updates.
\newblock In {\em INFOCOM 2009, IEEE}, pages 864--872. IEEE, 2009.

\bibitem{feldmann2010energy}
A.~Feldmann, A.~Gladisch, M.~Kind, C.~Lange, G.~Smaragdakis, and F.J. Westphal.
\newblock Energy trade-offs among content delivery architectures.
\newblock In {\em Telecommunications Internet and Media Techno Economics
  (CTTE), 2010 9th Conference on}, pages 1--6. IEEE, 2010.

\bibitem{fragouli2006network}
C.~Fragouli, J.~Widmer, and J.Y.L. Boudec.
\newblock A network coding approach to energy efficient broadcasting: from
  theory to practice.
\newblock In {\em IEEE Infocom}, 2006.

\bibitem{gkantsidis2006planet}
C.~Gkantsidis, T.~Karagiannis, and M.~VojnoviC.
\newblock Planet scale software updates.
\newblock In {\em Proceedings of the 2006 conference on Applications,
  technologies, architectures, and protocols for computer communications},
  pages 423--434. ACM, 2006.

\bibitem{gupta2003greening}
M.~Gupta and S.~Singh.
\newblock Greening of the internet.
\newblock In {\em Proceedings of the 2003 conference on Applications,
  technologies, architectures, and protocols for computer communications},
  pages 19--26. ACM, 2003.

\bibitem{kumar2006peer}
R.~Kumar and K.W. Ross.
\newblock Peer-assisted file distribution: The minimum distribution time.
\newblock In {\em IEEE Workshop on Hot Topics in Web Systems and Technologies
  (HOTWEB’06)}, pages 1--11, 2006.

\bibitem{labovitz2010internet}
C.~Labovitz, S.~Iekel-Johnson, D.~McPherson, J.~Oberheide, and F.~Jahanian.
\newblock Internet inter-domain traffic.
\newblock {\em ACM SIGCOMM Computer Communication Review}, 40(4):75--86, 2010.

\bibitem{langner2011optimal}
T.~Langner, C.~Schindelhauer, and A.~Souza.
\newblock Optimal file-distribution in heterogeneous and asymmetric storage
  networks.
\newblock {\em SOFSEM 2011: Theory and Practice of Computer Science}, pages
  368--381, 2011.

\bibitem{lee2011toward}
U.~Lee, I.~Rimac, D.~Kilper, and V.~Hilt.
\newblock Toward energy-efficient content dissemination.
\newblock {\em Network, IEEE}, 25(2):14--19, 2011.

\bibitem{lingjun2009improving}
M.~Lingjun, P.S. Tsang, and K.S. Lui.
\newblock Improving file distribution performance by grouping in peer-to-peer
  networks.
\newblock {\em Network and Service Management, IEEE Transactions on},
  6(3):149--162, 2009.

\bibitem{mehyar2007optimal}
M.~Mehyar, W.H. Gu, S.H. Low, M.~Effros, and T.~Ho.
\newblock Optimal strategies for efficient peer-to-peer file sharing.
\newblock In {\em Acoustics, Speech and Signal Processing, 2007. ICASSP 2007.
  IEEE International Conference on}, volume~4, pages IV--1337. IEEE, 2007.

\bibitem{mundinger2008optimal}
J.~Mundinger, R.~Weber, and G.~Weiss.
\newblock Optimal scheduling of peer-to-peer file dissemination.
\newblock {\em Journal of Scheduling}, 11(2):105--120, 2008.

\bibitem{nedevschi2008reducing}
S.~Nedevschi, L.~Popa, G.~Iannaccone, S.~Ratnasamy, and D.~Wetherall.
\newblock Reducing network energy consumption via sleeping and rate-adaptation.
\newblock In {\em Proceedings of the 5th USENIX Symposium on Networked Systems
  Design and Implementation}, pages 323--336. USENIX Association, 2008.

\bibitem{restrepo2009energy}
J.C.C. Restrepo, C.G. Gruber, and C.M. Machuca.
\newblock Energy profile aware routing.
\newblock In {\em Communications Workshops, 2009. ICC Workshops 2009. IEEE
  International Conference on}, pages 1--5. IEEE, 2009.

\bibitem{sanghavi2007gossiping}
S.~Sanghavi, B.~Hajek, and L.~Massoulie.
\newblock Gossiping with multiple messages.
\newblock {\em Information Theory, IEEE Transactions on}, 53(12):4640--4654,
  2007.

\bibitem{sucevicpowering}
A.~Sucevic, L.L.H. Andrew, and T.T.T. Nguyen.
\newblock Powering down for energy efficient peer-to-peer file distribution.
\newblock GreenMetrics, 2011.

\bibitem{wang2008virtual}
Y.~Wang, E.~Keller, B.~Biskeborn, J.~van~der Merwe, and J.~Rexford.
\newblock Virtual routers on the move: live router migration as a
  network-management primitive.
\newblock In {\em ACM SIGCOMM Computer Communication Review}, volume~38, pages
  231--242. ACM, 2008.

\bibitem{widmer2005low}
J.~Widmer, C.~Fragouli, and J.Y. Le~Boudec.
\newblock Low-complexity energy-efficient broadcasting in wireless ad-hoc
  networks using network coding.
\newblock In {\em Proc. Workshop on Network Coding, Theory, and Applications},
  2005.

\end{thebibliography}


% Generated by IEEEtran.bst, version: 1.12 (2007/01/11)
\begin{thebibliography}{10}
\providecommand{\url}[1]{#1}
\csname url@samestyle\endcsname
\providecommand{\newblock}{\relax}
\providecommand{\bibinfo}[2]{#2}
\providecommand{\BIBentrySTDinterwordspacing}{\spaceskip=0pt\relax}
\providecommand{\BIBentryALTinterwordstretchfactor}{4}
\providecommand{\BIBentryALTinterwordspacing}{\spaceskip=\fontdimen2\font plus
\BIBentryALTinterwordstretchfactor\fontdimen3\font minus
  \fontdimen4\font\relax}
\providecommand{\BIBforeignlanguage}[2]{{%
\expandafter\ifx\csname l@#1\endcsname\relax
\typeout{** WARNING: IEEEtran.bst: No hyphenation pattern has been}%
\typeout{** loaded for the language `#1'. Using the pattern for}%
\typeout{** the default language instead.}%
\else
\language=\csname l@#1\endcsname
\fi
#2}}
\providecommand{\BIBdecl}{\relax}
\BIBdecl

\bibitem{parliament}
\BIBentryALTinterwordspacing
\emph{ICT and CO2 emissions}. [Online]. Available:
  \url{http://www.parliament.uk/documents/post/postpn319.pdf}
\BIBentrySTDinterwordspacing

\bibitem{pickavetEnergyConsumption}
M.~Pickavet, W.~Vereecken, S.~Demeyer, P.~Audenaert, B.~Vermeulen, C.~Develder,
  D.~Colle, B.~Dhoedt, and P.~Demeester, ``Worldwide energy needs for ict: The
  rise of power-aware networking,'' in \emph{ANTS}, 2008.

\bibitem{marcoMeliaEnergyCostISP}
L.~Chiaraviglio, M.~Mellia, and F.~Neri, ``Minimizing isp network energy cost:
  Formulation and solutions,'' \emph{IEEE/ACM Transactions on Networking,},
  2011.

\bibitem{NikosDSLAMSigcom}
E.~Goma, M.~Canini, A.~Lopez~Toledo, N.~Laoutaris, D.~Kosti\'{c}, P.~Rodriguez,
  R.~Stanojevi\'{c}, and P.~Yag\"{u}e~Valentin, ``Insomnia in the access: or
  how to curb access network related energy consumption,'' in \emph{ACM
  SIGCOMM}, 2011.

\bibitem{ElasticTreesDataCenters}
B.~Heller, S.~Seetharaman, P.~Mahadevan, Y.~Yiakoumis, P.~Sharma, S.~Banerjee,
  and N.~McKeown, ``Elastictree: saving energy in data center networks,'' in
  \emph{NSDI}, 2010.

\bibitem{ValanciusNanoDataCenters}
V.~Valancius, N.~Laoutaris, L.~Massouli\'{e}, C.~Diot, and P.~Rodriguez,
  ``Greening the internet with nano data centers,'' in \emph{ACM CoNEXT}, 2009.

\bibitem{bolla2011}
R.~Bolla, R.~Bruschi, F.~Davoli, and F.~Cucchietti, ``Energy efficiency in the
  future internet: A survey of existing approaches and trends in energy-aware
  fixed network infrastructures,'' \emph{Communications Surveys \& Tutorials,
  IEEE}, vol.~13, no.~2, pp. 223--244, 2011.

\bibitem{restrepo2009energy}
J.~Restrepo, C.~Gruber, and C.~Machuca, ``Energy profile aware routing,'' in
  \emph{Communications Workshops, IEEE ICC 2009.}, 2009, pp. 1--5.

\bibitem{andrews2010routing2}
M.~Andrews, A.~Fern\'andez~Anta, L.~Zhang, and W.~Zhao, ``Routing for energy
  minimization in the speed scaling model,'' in \emph{Transactions of
  Networking, Accepted for publication in 2011, DOI:
  10.1109/TNET.2011.2159864}, 2010.

\bibitem{gupta2003greening}
M.~Gupta and S.~Singh, ``Greening of the internet,'' in \emph{SIGCOMM}, 2003.

\bibitem{agarwal2009somniloquy}
Y.~Agarwal, S.~Hodges, R.~Chandra, J.~Scott, P.~Bahl, and R.~Gupta,
  ``Somniloquy: augmenting network interfaces to reduce pc energy usage,'' in
  \emph{NSDI}, 2009, pp. 365--380.

\bibitem{gkantsidis2006planet}
C.~Gkantsidis, T.~Karagiannis, and M.~VojnoviC, ``Planet scale software
  updates,'' in \emph{ACM SIGCOMM}, 2006.

\bibitem{LabovitzInternetTraffic}
C.~Labovitz, S.~Iekel-Johnson, D.~McPherson, J.~Oberheide, and F.~Jahanian,
  ``Internet inter-domain traffic,'' in \emph{ACM SIGCOMM}, 2010.

\bibitem{sandvineReportFall2011}
``Sandvine fall 2011 global internet phenomena report,''
  http://www.sandvine.com/news/global\_broadband\_trends.asp.

\bibitem{NordmanC09}
B.~Nordman and K.~J. Christensen, ``Greener pcs for the enterprise,'' \emph{IT
  Professional}, vol.~11, no.~4, pp. 28--37, 2009.

\bibitem{softpedia2009}
``"in windows 7 use sleep to resume the os in 2 seconds",''
  http://news.softpedia.com/news/In-Windows-7-Use-Sleep-to-Resume-the-OS-in-2-%
Seconds-101290.shtml.

\bibitem{akamai_ncdist}
``The real connection speeds for internet users across the world (charts),''
  http://royal.pingdom.com/2010/11/12/real-connection-speeds-for-internet-user%
s-across-the-world/.

\bibitem{nedevschi2008reducing}
S.~Nedevschi, L.~Popa, G.~Iannaccone, S.~Ratnasamy, and D.~Wetherall,
  ``Reducing network energy consumption via sleeping and rate-adaptation,'' in
  \emph{NSDI}, 2008.

\bibitem{andrews2010routing1}
M.~Andrews, A.~Fern\'andez~Anta, L.~Zhang, and W.~Zhao, ``Routing and
  scheduling for energy and delay minimization in the powerdown model,'' in
  \emph{IEEE INFOCOM}, 2010.

\bibitem{chabarek2008power}
J.~Chabarek, J.~Sommers, P.~Barford, C.~Estan, D.~Tsiang, and S.~Wright,
  ``Power awareness in network design and routing,'' in \emph{INFOCOM}, 2008,
  pp. 457--465.

\bibitem{wang2008virtual}
Y.~Wang, E.~Keller, B.~Biskeborn, J.~van~der Merwe, and J.~Rexford, ``Virtual
  routers on the move: live router migration as a network-management
  primitive,'' in \emph{ACM SIGCOMM}, 2008.

\bibitem{abtsDatacenter}
D.~Abts, M.~R. Marty, P.~M. Wells, P.~Klausler, and H.~Liu, ``Energy
  proportional datacenter networks,'' \emph{SIGARCH Comput. Archit. News},
  vol.~38, pp. 338--347, June 2010.

\bibitem{chunDataCenters}
B.-G. Chun, G.~Iannaccone, G.~Iannaccone, R.~Katz, G.~Lee, and L.~Niccolini,
  ``An energy case for hybrid datacenters,'' \emph{SIGOPS Oper. Syst. Rev.},
  vol.~44, pp. 76--80, March 2010.

\bibitem{kumar2006peer}
R.~Kumar and K.~Ross, ``Peer-assisted file distribution: The minimum
  distribution time,'' in \emph{IEEE Workshop on Hot Topics in Web Systems and
  Technologies (HOTWEB 06)}, 2006, pp. 1--11.

\bibitem{lingjun2009improving}
M.~Lingjun, P.~Tsang, and K.~Lui, ``Improving file distribution performance by
  grouping in peer-to-peer networks,'' \emph{IEEE Transactions on Network and
  Service Management}, vol.~6, no.~3, pp. 149--162, 2009.

\bibitem{langner2011optimal}
T.~Langner, C.~Schindelhauer, and A.~Souza, ``Optimal file-distribution in
  heterogeneous and asymmetric storage networks,'' \emph{SOFSEM 2011: Theory
  and Practice of Computer Science}, pp. 368--381, 2011.

\bibitem{sanghavi2007gossiping}
S.~Sanghavi, B.~Hajek, and L.~Massoulie, ``Gossiping with multiple messages,''
  \emph{IEEE Transactions on Information Theory}, vol.~53, no.~12, pp.
  4640--4654, 2007.

\bibitem{matthewtang2009}
L.~L.~A. G.~Matthew~Ezovski, Ao~Tang, ``Minimizing average finish time in p2p
  networks,'' in \emph{IEEE Infocom}, 2009.

\bibitem{tsang2010}
K.-S.~L. Pui-Sze~Tsang, Xiang~Meng, ``A novel grouping strategy for reducing
  average distribution time in p2p file sharing,'' in \emph{IEEE ICC}, 2010.

\bibitem{mundinger2008optimal}
J.~Mundinger, R.~Weber, and G.~Weiss, ``Optimal scheduling of peer-to-peer file
  dissemination,'' \emph{Journal of Scheduling}, vol.~11, no.~2, pp. 105--120,
  2008.

\bibitem{lee2011toward}
U.~Lee, I.~Rimac, D.~Kilper, and V.~Hilt, ``Toward energy-efficient content
  dissemination,'' \emph{Network, IEEE}, vol.~25, no.~2, pp. 14--19, 2011.

\bibitem{feldmann2010energy}
A.~Feldmann, A.~Gladisch, M.~Kind, C.~Lange, G.~Smaragdakis, and F.~Westphal,
  ``Energy trade-offs among content delivery architectures,'' in \emph{IEEE
  Telecommunications Internet and Media Techno Economics (CTTE),}, 2010, pp.
  1--6.

\bibitem{blackburn2009simulation}
J.~Blackburn and K.~Christensen, ``A simulation study of a new green
  bittorrent,'' in \emph{Communications Workshops, ICC}, 2009, pp. 1--6.

\bibitem{giuseppe2010}
A.~P. Giuseppe~Anastasi, Ilaria~Giannetti, ``A bittorrent proxy for green
  internet file sharing: Design and experimental evaluation,'' \emph{Computer
  Communications}, vol.~33, no.~7, pp. 794--802, 2010.

\bibitem{mehyar2007optimal}
M.~Mehyar, W.~Gu, S.~Low, M.~Effros, and T.~Ho, ``Optimal strategies for
  efficient peer-to-peer file sharing,'' in \emph{Acoustics, Speech and Signal
  Processing, ICASSP}, vol.~4, 2007.

\bibitem{sucevicpowering}
A.~Sucevic, L.~Andrew, and T.~Nguyen, ``Powering down for energy efficient
  peer-to-peer file distribution,'' 2011.

\end{thebibliography}
\appendix
\label{appendix}

\newtheorem{observation}{Observation}

\subsection{NP-hardness}
\label{sec:np-hard}

We show in the section that a general version of the problem considered in this paper is NP-hard. 
%In this version of the problem, hosts have different upload and download capacities, different power consumptions, and different block sizes. 
The following theorem summarizes the result.

\begin{thm}
Assume that time is slotted, that hosts must upload at their full capacity, and that no host can upload to more than one host in the same slot.
The problem of minimizing the energy of file distribution is NP-hard if hosts can have different upload capacities and power consumptions, even if
$\alpha_i=\delta_i=0, \forall i$.
\end{thm}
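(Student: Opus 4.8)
The plan is to exhibit a polynomial-time reduction from a known NP-hard problem to the decision version of the file distribution energy minimization problem, namely ``is there a scheme $z$ with $E(z)\le E^{*}$?''. Since $\alpha_i=\delta_i=0$, the energy of any scheme collapses to $\sum_i P_i t_i$, where $t_i$ is the total time host $i$ is switched on, so the whole difficulty is to design an instance in which minimizing this weighted on-time forces the solver to resolve a combinatorial packing constraint. Because the only numerical parameters at our disposal are the integral upload capacities and the powers, and a reduction from an ordinary (binary-encoded) partition problem would require exponentially many hosts, I would reduce from \textsc{3-Partition}, which is \emph{strongly} NP-hard and whose item magnitudes are polynomially bounded; this lets the gadget use a number of hosts, clients, and slots that is polynomial in the input.

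Given a \textsc{3-Partition} instance with $3m$ integers $a_1,\dots,a_{3m}$, each in $(V/4,V/2)$ and summing to $mV$, I would build the following instance. Fix a base rate $u_0$ and block size $s$, and turn each item $a_j$ into a \emph{worker} host $W_j$ of upload capacity $a_j u_0$; in a window of length $\Delta=s/u_0$ such a worker can complete $a_j$ block transfers (one per sub-slot of length $\Delta/a_j$), since it uploads at full capacity to one host per slot. I would create $m$ consumer groups $G_1,\dots,G_m$ with $|G_k|=V$, where group $G_k$ needs a distinct block $b_k$, and use a tightly scheduled seeding phase (the server uploading one block per slot) to force each worker to hold exactly one block, hence to serve exactly one group. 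Serving all $mV$ consumers within the single window $\Delta$ is then possible if and only if the workers split among the groups so that each group receives total capacity at least $V$; as the capacities sum to $mV$ and the bounds $a_j\in(V/4,V/2)$ force exact triples, this happens exactly when the \textsc{3-Partition} instance is a yes-instance. To convert feasibility into an energy budget I would add one high-power ``anchor'' host whose required on-time equals the makespan, arranged so that any assignment leaving a group under-served provably forces at least one extra window of anchor activity; setting $E^{*}$ just above the single-window cost makes ``$E(z)\le E^{*}$'' equivalent to ``every group is served in one window.''

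With the gadget in place, correctness splits into the two standard directions. For completeness, a valid \textsc{3-Partition} yields the canonical schedule (seed each worker with its group's block, let each worker serve its group within the single window), whose energy equals $E^{*}$. For soundness, I would show that any scheme with $E(z)\le E^{*}$ must be essentially canonical: the budget forbids a second anchor window, which forces every group to be fully served within $\Delta$, which forces the balanced capacity split, i.e.\ a \textsc{3-Partition}. Both the construction and this equivalence are computable in time polynomial in $m$ and $V$, which suffices precisely because \textsc{3-Partition} is strongly NP-hard.

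The main obstacle is the soundness direction, and specifically pinning the numerical parameters (the powers, capacities, group sizes, the seeding deadline, and $E^{*}$, including the cross-worker slot synchronization) tightly enough that the rich freedom of the model cannot be abused. A clever scheme might try to double-seed a worker so that it serves two groups, to relay blocks among the consumer clients themselves, to reuse a worker across windows, or to absorb an unbalanced split cheaply rather than by waking the expensive anchor. The crux of the proof is therefore a sequence of exchange and normalization arguments establishing that none of these shortcuts can beat $E^{*}$ unless the underlying capacity split is exact, so that the only way to meet the energy budget is to solve \textsc{3-Partition}.
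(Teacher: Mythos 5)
Your proposal is a plan rather than a proof: the direction you yourself identify as ``the crux'' --- soundness, i.e.\ ruling out double-seeding, client-to-client relaying, worker reuse across windows, and cheap absorption of an unbalanced split --- is left entirely open, and none of the numerical parameters (powers, group sizes, seeding deadline, $E^{*}$) is actually fixed. There is also a concrete incompatibility with the model: you want worker $W_j$ to serve $a_j$ \emph{distinct} consumers one block each inside a window $\Delta$, via per-worker sub-slots of length $\Delta/a_j$. But slots in this model are global (slot $\tau$ spans a fixed interval $[(\tau-1)\gamma,\tau\gamma]$ for all hosts), a host may upload to only one host per slot, and a block transfer must start and finish within one slot; so workers with different $a_j$ cannot each run on their own sub-slot grid. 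Within a single global slot a worker can deliver its $a_j$ blocks only to \emph{one} recipient, which breaks the ``one block to each of $V$ consumers per group'' accounting that your gadget relies on. Finally, your stated reason for preferring \textsc{3-Partition} --- that binary-encoded \textsc{Partition} would need exponentially many hosts --- does not hold up: the item sizes can be encoded in the (binary-written) upload capacities, so only $O(k)$ hosts are needed, and strong NP-hardness is not required.

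The paper's proof does exactly that, and is considerably simpler. It reduces from \textsc{Partition}: a file of $M=\sum_i x_i$ unit blocks, a server $S$ and a relay $T$ of capacity $M$ and power $P$, one host $H_i$ of upload capacity $x_i$ and power $P$ per item, and a single sink $R$ with download capacity $M/2$ and power $P'>2P(2k+1)$. Because hosts \emph{must} upload at full capacity and $R$ can absorb only $M/2$ per slot, $R$ must be on for exactly two slots in any schedule of energy below $3P'$, and the set of $H_i$'s serving it in the first such slot must have capacities summing to exactly $M/2$ --- the partition constraint is enforced directly by the sink's download capacity meeting the mandatory full-rate uploads, with no anchor host, no group gadgets, and no exchange arguments. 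If you want to salvage your route, you would need to restructure the gadget around global slots (e.g.\ one recipient per worker per slot) and then actually carry out the normalization arguments you defer; as it stands the reduction is not established.
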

\begin{proof}
We use reduction from the partition problem. 
The input of this problem is a set of integers (we assume all of them to be positive) $A=\{x_0, x_2, ..., x_{k-1}\}$, $k >1$. Let $M=\sum_{x_i \in A} x_i$ to be even. The problem is to decide whether there is a subset $A' \subset A$ such that $\sum_{x_i \in A'} x_i=M/2$.

We reduce an instance of the partition problem to an instance of our problem as follows. The file to distribute has $M$ blocks of size $1$.
There are $n=k+3$ hosts: server $S$, hosts $T$ and $R$, and hosts $H_i$, for $i \in [0,k-1]$.
All hosts have fixed setup energy $\delta_i=0$ and no cost for switching on and off, i.e., $\alpha_i=0$.
Server $S$ has upload capacity $M$ and power $P$.
Host $T$ has download and upload capacity $M$, and power $P$.
Hosts $H_i$, $i \in [0,k-1]$, have download capacity $M$, upload capacity $u_i=x_i$, and power consumption $P$.
Host $R$ has download capacity $M/2$ and power consumption $P' >2P(2k+1)$.
The slot length is one unit of time.

Observe that there is always a feasible solution that respects the assumptions of the model. It works as follows.
First, $S$ serves the whole file to $T$ in one slot. 
Then, $T$ serves the whole file to hosts $H_i$, $i \in [0,k-1]$, in consecutive slots.
Finally, each host $H_i$, $i \in [0,k-1]$, serves $x_i$ different blocks to $R$ in consecutive slots.

We claim that the subset $A'$ that satisfies $\sum_{x_i \in A'} x_i=M/2$ exists if and only if the file distribution problem can be solved with energy smaller than
$3P'$. Hence, the energy minimization problem is NP-hard.

If subset $A'$ exists, the following schedule is feasible.
First, $S$ serves $T$ the whole file in one slot. 
Then, $T$ serves each host $H_i$, $i \in [0,k-1]$, the whole file in consecutive slots.
Let $U=\cup_{x_i \in A'} \{H_i\}$, then the hosts in $U$ upload the file to $R$ in two slots, half the file in each slot.
The total energy consumed is 
$$
E=2P+2Pk+2(|A'|P + P') \leq 2P(2k+1)+2P'<3P'.
$$

Assume now that there is a schedule with energy less than $3P'$. Then, $R$ has been up two slots. Since
they cannot upload at full capacity to $R$, and they cannot serve more than one host, neither $S$ nor $T$ can serve $R$. 
Then, looking at the first slot in which $R$ is up, $R$ must have been served by a subset of hosts $H_i$ whose aggregate 
upload capacity is exactly $M/2$. This proves the existence of $A'$.
\end{proof}

%--------------------------------------------------------------------------------------------------------------------------------------------------------------------------------------------

\subsection{Proof of Theorem~\ref{thmeq}}
\label{thmeq-proof}

We transform the cost of a block as defined in Equation~\ref{costblock} to the following one. For each host $i \in \mathcal{I}_{\tau}^z$, define $\phi_i$ and $\psi_i$ as 

\[\phi_{i} = \left\{ 
\begin{array}{l l}
  \Delta_i & \quad \mbox{if $\mathcal{S}_{i,\tau}^z\neq \emptyset$ }\\
% 	& \quad \mbox{$w=$min$\{u|(u,v) \in E, d(u)=0\}$}\\
  0 & \quad \mbox{Otherwise}\\ \end{array} \right. \]

\[\psi_{i} = \left\{ 
\begin{array}{l l}
  \Delta_i & \quad \mbox{if $\mathcal{S}_{i,\tau}^z=\emptyset$ }\\
  0 & \quad \mbox{Otherwise}\\ \end{array} \right. \]

Note that $\sum_{b_j\in \mathcal{S}_{i,\tau}^z} \mathcal{D}_{j,i}^z =1$ iff $|\mathcal{S}_{i,\tau}^z|\geq 1$ (i.e., when $\phi_i = \Delta_i$). 
It is easy to see that $\mathcal{U}_{j,i}^z =1$ iff $\psi_{serv(j,i)} = \Delta_{serv(j,i)}$, i.e., $S_{{serv(j,i)},\tau}^z=\emptyset$. 
Therefore, for a host $i \in \mathcal{I}_{\tau}^z$, either $\phi_i=\Delta_i$ or $\psi_i=\Delta_i$, never both 0 or both $\Delta_i$. Hence,

\[ \sum_{i\in \mathcal{I}_{\tau}^z} (\phi_i + \psi_i) = \sum_{i\in\mathcal{I}_{\tau}^z}\Delta_i \]

%--------------------------------------------------------------------------------------------------------------------------------------------------------------------

\subsection{Proof of Theorem \ref{minhetero}  (Lower Bound for $k=1$)}
\label{s-lb-du}

The claim to shown is that if $k=1$ any scheme $z$ consumes energy
\begin{eqnarray}
\label{eq:minhetero}
E(z) \geq \beta \left( \Delta_S + \sum_{i=0}^{n-1} \Delta_i \right) +  \max\{0,n-\beta\} \min\{\Delta_S,\Delta_0\}
\label{eqn:opt1hetero}
\end{eqnarray}

Before proving the claim, we need some supporting claims.

\begin{lem}
For every block $b_j$ and every client $H_i$ it holds that $\mathcal{D}_{j,i}^z=1$.
\label{lem:noblock0}
\end{lem}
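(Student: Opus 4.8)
The plan is to prove this directly from the definition of $\mathcal{D}_{j,i}^z$ together with the defining constraint of the $k=1$ regime, namely that in this setting a host can download at most one block per slot. The statement is really an unpacking of definitions once that constraint is in hand, so the proof should be short.

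First I would fix an arbitrary block $b_j$ and client $H_i$. Since $z$ is a valid file distribution scheme, by definition every client ends up with the whole file, so there is a (unique) slot $\tau$ in which $H_i$ receives $b_j$. This is exactly the slot referenced in the definition of $\mathcal{D}_{j,i}^z$, so the quantity is well defined and I only need to decide which of the two cases applies.

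Next I would invoke the capacity constraint. Because $k=d/u=1$, the download capacity of a host equals the upload capacity used by any sender (recall hosts always upload at full capacity $u$). Hence a single incoming transfer already saturates $H_i$'s download capacity, so $|\mathcal{S}_{i,\tau}^z|\le 1$ in every slot; this is precisely the observation, stated at the start of Section~\ref{sec:homo}, that ``a host can download at most one block during a slot.'' In the particular slot $\tau$ in which $H_i$ receives $b_j$, this forces $\mathcal{S}_{i,\tau}^z=\{b_j\}$.

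Finally, since $b_j$ is the only element of $\mathcal{S}_{i,\tau}^z$, it is trivially the one of smallest index, i.e. $j=\min\{j'\mid b_{j'}\in\mathcal{S}_{i,\tau}^z\}$. Thus the first case in the definition of $\mathcal{D}_{j,i}^z$ holds, giving $\mathcal{D}_{j,i}^z=1$, as claimed. There is no genuine obstacle here; the only point requiring care is the justification that $k=1$ truly forces single-block downloads, which follows immediately from the capacity model and the full-upload assumption, and which I would state explicitly so that the rest of the lower-bound argument can treat $\mathcal{D}_{j,i}^z=1$ as automatic for all $b_j$ and all clients $H_i$.
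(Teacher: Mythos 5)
Your proof is correct and follows essentially the same route as the paper's: since $k=1$ forces $|\mathcal{S}_{i,\tau}^z|=1$ in the slot where $H_i$ receives $b_j$, the minimum-index condition in the definition of $\mathcal{D}_{j,i}^z$ is trivially satisfied. The extra care you take in noting that the slot $\tau$ exists and is unique is a harmless elaboration of what the paper leaves implicit.
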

\begin{proof}
Since $d=u$, each host can receive only one block in a time slot. Hence, if block $b_j$ is transferred to client $H_i$ in slot $\tau$, we have $|\mathcal{S}_{i,\tau}^z|=1$. Then, by definition, 
$\mathcal{D}_{j,i}^z=1$.
\end{proof}

\begin{lem}
For every block $b_j$ served by $S$ to client $H_i$, it holds $\mathcal{U}_{j,i}^z = 1$. 
\label{lem:server2}
\end{lem}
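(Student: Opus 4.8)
The plan is to reduce the statement to a single structural fact — that the server never downloads a block — and then read off $\mathcal{U}_{j,i}^z=1$ directly from the cost definition in Definition~\ref{costblockdef}. First I would fix the slot $\tau$ in which $S$ serves $b_j$ to $H_i$. Since $S$ is the host performing this upload, we have $serv(j,i)=S$ by definition. Recalling that $\mathcal{U}_{j,i}^z=1$ holds exactly when $\mathcal{S}_{serv(j,i),\tau}^z=\emptyset$, the claim becomes equivalent to showing that $\mathcal{S}_{S,\tau}^z=\emptyset$, i.e., that the server receives no block during slot $\tau$.

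The core of the argument is that the server possesses the entire file from time $t=0$, and hence never has any block left to acquire. Any transfer directed into $S$ would deliver a block that $S$ already holds, contributing nothing to the progress of the distribution; such a redundant transfer can be deleted without increasing the energy of the scheme. Consequently no normal scheme — and in particular not $z$, since from this point on we restrict attention to normal schemes — contains a download into $S$. This gives $\mathcal{S}_{S,\tau'}^z=\emptyset$ for every slot $\tau'$, and in particular $\mathcal{S}_{S,\tau}^z=\emptyset$, which combined with $serv(j,i)=S$ yields $\mathcal{U}_{j,i}^z=1$.

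The one point requiring care is the justification that a normal scheme never has the server downloading, and this is where I expect the only real work to lie. I would support it by appealing to the earlier observation that any optimal scheme can be transformed into a normal optimal one: deleting each redundant download into $S$ leaves every client's set of received blocks unchanged, so the resulting schedule is still a valid scheme and consumes no more energy, letting us assume without loss of generality that $S$ never downloads. This is the only place where normality (rather than the bare model) is invoked; everything else follows immediately from the definitions of $serv(j,i)$ and $\mathcal{U}_{j,i}^z$.
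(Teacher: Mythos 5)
Your proposal is correct and follows essentially the same route as the paper: both reduce the claim to the single fact that $\mathcal{S}_{S,\tau}^z=\emptyset$ because the server, already holding the entire file, never receives a block, and then read $\mathcal{U}_{j,i}^z=1$ off the definition. The only difference is that the paper simply asserts the no-download fact as immediate from the model, whereas you spend a paragraph justifying it via redundancy removal; that extra care is harmless but not needed.
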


\begin{proof}
%This follows from the definition of the cost of a block. 
Let $S$ be serving $b_j$ to $H_i$ in slot $\tau$. 
%From the previous lemma 
%, hence $|\mathcal{S}_{i,\tau}^z|=1$
%$\forall z\in \mathcal{\hat Z}^{n,\beta}_1 $, $\forall i\in\{0,1,..,n-1\}, \forall \tau \in \{1,2, .., \tau_f^z \}$. Hence, 
%$\mathcal{D}_{j,i}^z=1$.
% for $\forall j\in \{0,1,..,\beta-1\}$. 
Then, $\mathcal{S}_{S,\tau}^z$ is always $\emptyset$, because the server never receives any block from the clients, which means that $\mathcal{U}_{j,i}^z=1$ for any block $b_j$ served by $S$.  
\end{proof}

Since $S$ has to serve each block of the file at least once, we obtain the following corollary.
\begin{cor}
\label{cor1}
For at least $\beta$ block transfers $\mathcal{U}_{j,i}^z =1$.
% because the server has to serve at least $\beta$ slots. 
%Furthermore, $\mathcal{U}_{j,i}^z=1$ $\forall i,j | H_{serv(j,i)}^z = S$
\end{cor}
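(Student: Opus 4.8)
The plan is to derive this corollary as a short counting argument built on top of Lemma~\ref{lem:server2}, which has already settled the value of $\mathcal{U}$ for every transfer originating at the server. The substantive content is already contained in that lemma; what remains is to show that the server performs at least $\beta$ block transfers.

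First I would argue that $S$ must serve each of the $\beta$ blocks at least once. The justification comes directly from the model: initially only $S$ holds any block, and a host may upload $b_j$ only after it has received $b_j$ completely. Consequently, for each block $b_j$, consider the very first transfer in the whole schedule in which some client obtains $b_j$; at that moment no client yet possesses $b_j$, so the serving host can only be $S$. Ranging over all $\beta$ blocks produces $\beta$ transfers that originate at $S$, and these are distinct transfer events because they deliver $\beta$ distinct blocks.

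Next I would invoke Lemma~\ref{lem:server2}: every block transfer in which $S$ is the serving host satisfies $\mathcal{U}_{j,i}^z = 1$, since $S$ never downloads and hence $\mathcal{S}_{S,\tau}^z = \emptyset$ in any slot. Applying this to each of the $\beta$ server-originated transfers identified above immediately yields at least $\beta$ transfers with $\mathcal{U}_{j,i}^z = 1$, which is precisely the claim.

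The only point requiring any care is ensuring that the counted transfers are genuinely distinct, so that the lower bound of $\beta$ is not collapsed; I would secure this by anchoring each counted transfer to the first appearance of a distinct block at a client, one block at a time. I do not expect a real obstacle here: the corollary is essentially a counting wrapper around Lemma~\ref{lem:server2}, with all the analytic work already discharged by that lemma and by the observation that the server is the unique initial source of every block.
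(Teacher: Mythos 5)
Your proposal is correct and follows exactly the paper's route: the paper justifies this corollary with the one-line remark that $S$ has to serve each block at least once, combined with Lemma~\ref{lem:server2}, which is precisely your argument with the ``server is the unique initial source'' step made explicit. No differences worth noting.
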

%
%\begin{lem}
%In case $d=u$, there is no block $b_j$ with $\mathcal{U}_{j,i}^z + \mathcal{D}_{j,i}^z=0$. Moreover, $\mathcal{D}_{j,i}^z=1$ $\forall z\in\mathcal{\hat Z}^{n,\beta}_1,$ $\forall i,j$.
%\label{lem:noblock0}
%\end{lem}
%\begin{proof}
%For every block that is being transfered, we have $|\mathcal{S}_{i,\tau}^z|=1$ for the receiving host $H_i$. Hence, $\mathcal{D}_{j,i}^z=1$ for every $b_j$ for all $H_i$.
%\end{proof}

\begin{lem}
If there exists a host $H$ that is receiving its first block in a time slot $\tau$, then there is at least one block $b_j$ in $\tau$ such that $\mathcal{U}_{j,i}^z =1$.  
\label{lem:starting2}
\end{lem}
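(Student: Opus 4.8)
The plan is to read off the conclusion from the structure of the transfer graph of slot $\tau$, using the single fact that a host receiving its first block cannot itself upload. First I would record two elementary facts about $H$ in slot $\tau$. Since a client may serve a block only after it has received it completely, and $H$ holds no block at the start of $\tau$, host $H$ uploads nothing during $\tau$; that is, $H$ has out-degree $0$ in the transfer graph of $\tau$. On the other hand, $H$ is receiving a block, so its in-degree is at least $1$.

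Next I would pin down the shape of the transfer graph. By the paper's own observation, the graph of a slot is either a tree or a graph with exactly one cycle, and in a cycle slot every host uploads; since $H$ uploads nothing, slot $\tau$ is therefore a tree slot. Specialising to the hypothesis $k=1$: because $d=u$ forces every host to download at most one block per slot, every vertex has in-degree at most $1$, and by the upload constraint every vertex has out-degree at most $1$. A connected digraph (connectivity is guaranteed since we work with normal schemes) in which every vertex has both in- and out-degree at most $1$ is a simple directed path $v_1 \to v_2 \to \cdots \to v_m$. Moreover $m \geq 2$, since $H$ receives a block and hence at least one edge is present.

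Then I would exploit the source of this path. The vertex $v_1$ has in-degree $0$, i.e. it downloads nothing, so $\mathcal{S}_{v_1,\tau}^z = \emptyset$. At the same time $v_1$ has out-degree $1$, so it serves some block $b_j$ to the host $H_i = v_2$. By definition of $serv$ we have $serv(j,i) = v_1$, whence $\mathcal{S}_{serv(j,i),\tau}^z = \emptyset$, and therefore, directly from the definition of $\mathcal{U}_{j,i}^z$ in Definition~\ref{costblockdef}, $\mathcal{U}_{j,i}^z = 1$. This $b_j$ is the block asserted by the statement.

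The one delicate point is the passage from ``$\tau$ is a tree slot'' to ``$\tau$ is a directed path'', which is exactly where both hypotheses are consumed: $k=1$ supplies the bound in-degree $\le 1$ that collapses the tree to a path, while ``$H$ receives its first block'' supplies the out-degree $0$ at $H$ that excludes the cycle slot. Everything after that is a routine unwinding of the definitions of $\mathcal{S}$ and $\mathcal{U}$, and connectivity of the normal-scheme transfer graph means the source $v_1$ lies in the same component as $H$, so no separate treatment of disconnected components is required.
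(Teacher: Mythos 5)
Your proof is correct, but it follows a genuinely different route from the paper's. The paper argues by counting: with $H$ contributing no upload, at most $|\mathcal{I}_\tau^z|-1$ blocks are transferred in slot $\tau$; since $d=u$ forces each host to download at most one block, at most $|\mathcal{I}_\tau^z|-1$ of the active hosts are downloading, so some active host $H'$ is uploading only, and the block it serves has $\mathcal{U}_{j,i}^z=1$. You instead classify the transfer graph: $H$'s zero out-degree rules out the cycle slot, and the $k=1$ bound on in-degrees collapses the tree to a directed path whose source is the non-downloading uploader. Both arguments consume the hypotheses in the same places ($k=1$ as ``in-degree at most one,'' first-block reception as ``out-degree zero''), and both correctly land on a host with $\mathcal{S}_{serv(j,i),\tau}^z=\emptyset$. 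The paper's pigeonhole version is slightly more economical in its assumptions --- it never invokes connectivity or the tree/cycle dichotomy of normal schemes, so it applies verbatim to an arbitrary scheme --- whereas yours buys a sharper structural picture (the slot is literally a chain ending at $H$) at the cost of leaning on the normal-scheme machinery; as you note, even that dependence is removable by restricting to the component of $H$. One small point of care: your intermediate assertion that a connected digraph with all in- and out-degrees at most one is a directed path is only true once the directed cycle has been excluded, which your preceding tree-slot step does, so the order of your reasoning matters and should be kept as is.
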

\begin{proof}
The number of active hosts in slot $\tau$ is $|\mathcal{I}^z_{\tau}|$.  At most $|\mathcal{I}^z_{\tau}|-1$ blocks can be transferred in $\tau$ because host $H$ cannot upload to anyone. 
Then, since $d=u$, there exists at least one host $H'$ that is on only for uploading. Let $b_j$ be the block served by $H'$. 
%From Lemma \ref{lem:noblock0}, $\mathcal{D}_{j,i}^z=1$.
As it is not downloading any block, $\mathcal{S}_{H',\tau}^z=\emptyset$ and hence $\mathcal{U}_{j,i}^z=1$. 
\end{proof}

\begin{cor}
\label{cor2}
There are $n$ hosts that receive a block for the first time. Thus, for at least $n$ block transfers $\mathcal{U}_{j,i}^z=1$.
%, $ \forall z\in \mathcal{\hat Z}^{n,\beta}_1$. 
\end{cor}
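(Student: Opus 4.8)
The plan is a charging argument: I would assign to each of the $n$ clients a distinct block transfer with $\mathcal{U}_{j,i}^z=1$, and then count. Since every client starts empty while only $S$ initially holds the file, each client receives its \emph{first} block in exactly one slot. Writing $m_\tau$ for the number of clients whose first reception occurs in slot $\tau$, we have $\sum_\tau m_\tau = n$, so the whole statement reduces to producing, per slot, enough $\mathcal{U}=1$ transfers to cover the $m_\tau$ first-block receptions there.

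The key structural step I would establish is that \emph{at most one} client can receive its first block in any single slot when $k=1$. A first-block receiver holds no block and, moreover, cannot upload the block it is currently receiving, since a transmission is complete only at the end of the slot; hence such a host has out-degree $0$ in the transfer graph. Now I would invoke the $d=u$ hypothesis: a host downloads at most one block and uploads at most one block per slot, so every vertex of the transfer graph has in-degree and out-degree at most $1$. Combined with the connectivity required of a normal scheme, each slot's transfer graph is therefore a single directed path or a single directed cycle. A first-block receiver, having out-degree $0$, must be a \emph{sink}; a path has exactly one sink and a cycle has none, so at most one first-block reception occurs per slot.

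From here the conclusion is immediate. The $n$ first-block receptions occupy $n$ distinct slots; each of those slots contains a host receiving its first block, so Lemma~\ref{lem:starting2} guarantees at least one transfer with $\mathcal{U}_{j,i}^z=1$ in each. Transfers occurring in different slots are distinct, so aggregating over the $n$ slots yields at least $n$ block transfers with $\mathcal{U}_{j,i}^z=1$, as claimed.

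The main obstacle is precisely the risk of double counting, and I would flag it explicitly: Lemma~\ref{lem:starting2} only asserts ``at least one'' $\mathcal{U}=1$ transfer per slot containing a first reception, which yields $n$ such transfers only once one knows these receptions fall in distinct slots. The real content of the proof is thus the structural fact that at most one client receives a first block per slot, for which the normal-scheme connectivity and the $k=1$ constraint (forcing the path/cycle dichotomy) are both essential. As a robust alternative that sidesteps the dichotomy, an inclusion--exclusion count of the uploading and downloading host sets shows that the number of upload-only hosts, each serving a block with $\mathcal{U}=1$, is at least the number $m_\tau$ of first-block receivers in the slot, and summing this bound gives the result directly.
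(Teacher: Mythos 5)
Your argument is correct, and it follows the route the paper intends: Corollary~\ref{cor2} is meant to fall out of Lemma~\ref{lem:starting2} applied to each client's first reception, and the paper supplies no further proof. In doing so the paper silently skips exactly the double-counting issue you flag: if two clients received their first block in the same slot, Lemma~\ref{lem:starting2} would certify only one $\mathcal{U}_{j,i}^z=1$ transfer for the two of them. You close this gap with the path-or-cycle dichotomy (in-degree and out-degree at most $1$ when $k=1$, plus the connectivity of the transfer graph in a normal scheme, force a single directed path or a single directed cycle, hence at most one sink, hence at most one first-block receiver per slot); the paper only asserts the analogous ``no two clients receive their first block in the same slot'' fact, without proof, later in the argument for Theorem~\ref{th5}. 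Your fallback counting argument is in fact the cleaner fix and is the natural strengthening of the paper's own proof of Lemma~\ref{lem:starting2}: if $m_\tau$ clients receive their first block in slot $\tau$, then at most $|\mathcal{I}_\tau^z|-m_\tau$ blocks are uploaded, hence (since $d=u$) at most that many hosts download, hence at least $m_\tau$ active hosts are upload-only, each contributing a distinct $\mathcal{U}_{j,i}^z=1$ transfer; summing over $\tau$ gives $n$ without invoking connectivity or normality at all. Either version is sound; the second is preferable because it rests only on the degree constraints.
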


We now prove the claim.
In order to compute the minimum energy consumption, we need to lower bound Equation \ref{costschblock}. 
From Lemma \ref{lem:noblock0}, it follows that 
\begin{equation}
\label{eq:eq1proof}
\sum_{i=0}^{n-1}\sum_{j=0}^{\beta-1} \Delta_i\cdot\mathcal{D}_{j,i}^z =  \beta \cdot\sum_{i=0}^{n-1}\Delta_i
\end{equation} 

from Lemma \ref{lem:server2} and Corollaries \ref{cor1} and \ref{cor2},
%\begin{equation}
%\sum_{i=0}^{n-1}\sum_{j=0}^{\beta-1} \mathcal{U}_{j,i}^z \geq \max\{n,\beta\}
%\end{equation}
%
%Thus,
\begin{equation}
\label{eq:eq2proof}
\sum_{i=0}^{n-1}\sum_{j=0}^{\beta-1} \Delta_{serv(j,i)} \cdot \mathcal{U}_{j,i}^z \geq \beta\cdot\Delta_S + \max\{0,n-\beta\}\cdot\min\{\Delta_S,\Delta_0\}
\end{equation}

Adding Equations \ref{eq:eq1proof} and \ref{eq:eq2proof}, the claim follows.

%-------------------------------------------

\subsection{Proofs of Correctness and Optimality for $k=1$}
\label{s-algo-proofs}

For the correctness and optimality proofs of a scheme $z$ (described by an algorithm),
%of Algorithm \ref{algo1} and \ref{algo2}, Section \ref{sec:homo}, 
we define the state $\sigma_{i,\tau}^z$ of a host $i \in {\cal I}$ at the end of slot $\tau$ as the set of blocks held by that time at the host. Thus, to start with, initially for $S$ we have, 
%\begin{equation} 
%\sigma_{S,0}^z = {\cal B},
%%\bigcup_{j=0}^{\beta-1}\{ b_j \}
%\end{equation}
%and, for each client $i \in \{0,...,n-1\}$, 
%\begin{equation}
%\sigma_{i,0}^z = \emptyset
%\end{equation}
%% where $\vec{0} = (0,0,..,0,..,0)$. 
%
%If $z$ is correct, after the makespan of $z$ ($\tau_f^z$ slots) the state of every client $i\in\{0,...,n-1\}$ must be 
%\begin{equation} 
%\sigma_{i,\tau_f^z}^z = {\cal B}
%%\bigcup_{j=0}^{\beta-1} \{ b_j\}
%\end{equation}
$\sigma_{S,0}^z = {\cal B}$, and, for each client $i \in \{0,...,n-1\}$, 
$\sigma_{i,0}^z = \emptyset$.
If $z$ is correct, after the makespan of $z$ ($\tau_f^z$ slots) the state of every client $i\in\{0,...,n-1\}$ must be 
$\sigma_{i,\tau_f^z}^z = {\cal B}$.
We omit $z$ and $\tau$ when clear from the context.

\subsubsection{Algorithm \ref{algo1}}

% \paragraph{Correctness}

Let us denote the scheme described by Algorithm \ref{algo1} as $z_1$. This scheme has the following properties.

\begin{observation} 
After the \emph{for} loop at Lines \ref{alg2:fl1}-\ref{alg2:fl1end},  the state of client $i$ is $\sigma_i=\{b_i\}, \forall i\in \{0,..,n-1\}$.
%\[\sigma_i=\{b_i\} \]
\end{observation}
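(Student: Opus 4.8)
The plan is to establish this observation by direct inspection of the first \emph{for} loop of the algorithm (Lines~\ref{alg2:fl1}--\ref{alg2:fl1end}), since the loop has a completely transparent structure: it iterates over $j = 0, 1, \ldots, n-1$, and the $j$-th iteration consists of a single slot whose only transfer is $S \xrightarrow{j} H_j$. First I would record the two things that make each such transfer legal under the model: the server starts in state $\sigma_{S,0}^z = {\cal B}$, so it holds every block $b_j$ completely and may serve it; and each slot of the loop contains exactly one edge, so the out-degree and ``no host uploads to more than one host per slot'' constraints are trivially met.

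The key step is then to track the client states across the $n$ slots of the loop. I would phrase this as a one-line induction on $j$, with the invariant that, at the end of iteration $j$ (that is, after slot $j+1$), we have $\sigma_{H_{j'}} = \{b_{j'}\}$ for all $j' \le j$ while $\sigma_{H_{j''}} = \emptyset$ for all $j'' > j$. The base case is the initial condition $\sigma_{i,0}^z = \emptyset$ for every client, and the inductive step is immediate: iteration $j$ delivers precisely block $b_j$ to host $H_j$ and touches no other client, so it augments $H_j$'s (previously empty) state to $\{b_j\}$ and leaves all other client states unchanged. Setting $j = n-1$ gives $\sigma_i = \{b_i\}$ for every $i \in \{0,\ldots,n-1\}$, which is the claim.

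I do not expect a genuine obstacle here, as the statement is a direct read-off of the pseudocode; the only point that deserves an explicit (though one-line) mention is that the indexing is a bijection, i.e.\ each client $H_i$ is the recipient in exactly the single slot with loop index $j = i$ and receives exactly the matching block $b_i$. This rules out the two degenerate possibilities one might worry about, namely a client receiving more than one block during this phase, or a client being skipped and remaining empty. Once that bijection is noted, the conclusion $\sigma_i = \{b_i\}$ follows with no further computation.
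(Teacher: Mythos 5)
Your argument is correct and coincides with the paper's treatment: the paper states this as an Observation justified purely by inspection of the loop (each slot $j$ performs only the single transfer $S \xrightarrow{j} H_j$ from the server's full initial state), which is exactly the direct read-off plus trivial induction you give. The extra remark that the loop index establishes a bijection between clients and received blocks is a harmless elaboration of the same point.
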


\begin{lem}
\label{proof1:lem1}
After the $q^{\mathrm{th}}$ iteration of the loop at Lines \ref{alg1:fl2}-\ref{alg1:fl2end}, for $q\in\{0,...,n-1 \}$, each host $H_i$, $i\in\{ 0,...,n-1\}$ has state 
\begin{equation}
\sigma_i=\bigcup_{p=0}^q \{b_{(i+p)\bmod{n}}\}
\end{equation}
\end{lem}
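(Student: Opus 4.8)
The plan is to prove the claim by induction on the iteration index $q$, treating the statement as a loop invariant for the second \emph{for} loop (Lines \ref{alg1:fl2}-\ref{alg1:fl2end}). I would set up the indexing so that $q=0$ denotes the state right after the first loop (Lines \ref{alg1:fl1}-\ref{alg1:fl1end}), and, for $q \geq 1$, the $q$-th iteration corresponds to the loop value $j = n+q-1$ (so $q=1$ gives $j=n$ and $q=n-1$ gives $j=2n-2$, exactly exhausting the $n-1$ iterations). This matches the range $q\in\{0,\dots,n-1\}$ in the statement.

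For the base case $q=0$, the preceding Observation gives $\sigma_i = \{b_i\}$ after the first loop, which is exactly $\bigcup_{p=0}^{0}\{b_{(i+p)\bmod n}\}$. For the inductive step, I would assume that after the $q$-th iteration every host satisfies $\sigma_i = \bigcup_{p=0}^{q}\{b_{(i+p)\bmod n}\}$, and then examine the $(q+1)$-th iteration, which uses $j = n+q$ and has each $H_i$ serve block $b_{(i+j)\bmod n} = b_{(i+q)\bmod n}$ to $H_{(i-1)\bmod n}$. First I would verify feasibility: by the inductive hypothesis $b_{(i+q)\bmod n} \in \sigma_i$ (take $p=q$), so each uploader indeed holds the block it is scheduled to serve. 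Then, writing $i' = (i-1)\bmod n$ so that $i = (i'+1)\bmod n$, the recipient $H_{i'}$ acquires $b_{(i'+1+q)\bmod n} = b_{(i'+(q+1))\bmod n}$. Adjoining this single new block to its previous state $\bigcup_{p=0}^{q}\{b_{(i'+p)\bmod n}\}$ produces $\bigcup_{p=0}^{q+1}\{b_{(i'+p)\bmod n}\}$, which is the desired invariant after the $(q+1)$-th iteration.

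The main obstacle I expect is purely the index bookkeeping: the uploading host $H_i$ and the receiving host $H_{(i-1)\bmod n}$ differ by a cyclic shift of one, so the relabeling $i' = (i-1)\bmod n$ must be applied carefully to see that the new block each host picks up is precisely the next one, $b_{(i'+(q+1))\bmod n}$, in the cyclic order $b_{i'}, b_{(i'+1)\bmod n}, \dots$. Once this shift and the feasibility check are in place, the union $\bigcup_{p=0}^{q}$ extends to $\bigcup_{p=0}^{q+1}$ without overlap, since the indices $(i'+p)\bmod n$ for $p\in\{0,\dots,q+1\}$ remain distinct while $q+1\leq n-1$, and the induction closes. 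As a byproduct, setting $q=n-1$ gives $\sigma_i = \{b_0,\dots,b_{n-1}\} = {\cal B}$ for every client, which is exactly the correctness of Algorithm \ref{algo1}.
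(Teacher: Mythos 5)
Your proof is correct and follows essentially the same route as the paper's: induction on $q$ with the base case $q=0$ given by the preceding Observation, and an inductive step that computes the newly received block from the loop index ($j=n+q$ in your numbering versus $j=n+q-1$ in the paper's, which is the same calculation shifted by one). The only difference is that you fold the feasibility check (each uploader already holds the block it serves) into the induction, whereas the paper isolates it as a separate lemma (Lemma~\ref{proof1:lem2}); this is a harmless and arguably cleaner packaging.
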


\begin{proof}
We prove the claim by induction on $q$. 
The base case ($q=0$) holds from the observation: After the \emph{for} loop at lines \ref{alg1:fl1}-\ref{alg1:fl1end},
%\begin{equation}
$\sigma_i = \{b_i \}$.
%\end{equation}

Assuming the hypothesis to be true for $q-1$, in the $q^{\mathrm{th}}$ iteration $H_i$ receives block $b_{(i+j+1)\bmod{n}}$. In this iteration, the value of $j$ is $j=n+q-1$. Hence, $H_i$ receives $b_{(i+q)\bmod{n}}$, and the state after the $q^{\mathrm{th}}$ iteration is
\begin{equation}
\sigma_i= \bigcup_{p=0}^{q-1} \{b_{(i+p)}\} \cup \{b_{(i+q)\bmod{n}} \}=\bigcup_{p=0}^q \{b_{(i+p)\bmod{n}}\}
\end{equation}
\end{proof}

\begin{lem}
\label{proof1:lem2}
In every iteration of the \emph{for} loop at Lines \ref{alg1:fl2}-\ref{alg1:fl2end}, host $H_i, i\in\{0,..,n-1 \}$ serves one of the blocks it has already downloaded. 
\end{lem}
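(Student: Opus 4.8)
The plan is to read off directly from Algorithm~\ref{algo1} which block host $H_i$ is instructed to serve in a given slot of the loop at Lines~\ref{alg1:fl2}-\ref{alg1:fl2end}, and then to certify via Lemma~\ref{proof1:lem1} that this block is already present in the host's state at the start of that slot. First I would fix the $q^{\mathrm{th}}$ iteration of the loop, with $q \in \{1,\dots,n-1\}$, for which the loop variable takes the value $j = n+q-1$ (matching the alignment used in the proof of Lemma~\ref{proof1:lem1}). In this slot the inner loop issues the transfer $H_i \xrightarrow{(i+j)\bmod n} H_{(i-1)\bmod n}$, so the block that $H_i$ serves is $b_{(i+j)\bmod n}$, which simplifies to $b_{(i+q-1)\bmod n}$ since $(i+n+q-1)\bmod n = (i+q-1)\bmod n$.

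Next I would invoke Lemma~\ref{proof1:lem1}, which describes the state of $H_i$ at the end of an iteration. Evaluated at the $(q-1)^{\mathrm{th}}$ iteration, equivalently at the very start of the $q^{\mathrm{th}}$ slot, it gives $\sigma_i = \bigcup_{p=0}^{q-1}\{b_{(i+p)\bmod n}\}$. The block $b_{(i+q-1)\bmod n}$ is exactly the $p=q-1$ term of this union, hence it belongs to $\sigma_i$ before the $q^{\mathrm{th}}$ slot begins. Since the model only allows a host to upload a block it has fully received in an earlier slot, this shows $H_i$ serves a block it has already downloaded, which is the claim. (Note that in the same slot $H_i$ simultaneously receives the new block $b_{(i+q)\bmod n}$, which is distinct from and served only after it has been completely downloaded, consistent with the model.)

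The only care needed is at the boundary $q=1$, where the loop variable is $j=n$: there the served block is $b_{(i+n)\bmod n}=b_i$, and the state formula degenerates to $\sigma_i=\{b_i\}$, the base case of Lemma~\ref{proof1:lem1} supplied by the observation that after Lines~\ref{alg1:fl1}-\ref{alg1:fl1end} one has $\sigma_i=\{b_i\}$; so the served block is indeed held. Beyond this bookkeeping there is no real obstacle, as the lemma is an immediate corollary of Lemma~\ref{proof1:lem1}. The main thing to get right is the alignment between the loop variable $j$ and the iteration counter $q$, together with the reduction of the modular index $(i+j)\bmod n$ to a value lying in the range $0 \le p \le q-1$ covered by the union.
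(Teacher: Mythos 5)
Your proof is correct and follows essentially the same route as the paper's: identify the served block as $b_{(i+j)\bmod n}=b_{(i+q-1)\bmod n}$ and observe that it appears as the $p=q-1$ term in the state $\sigma_i=\bigcup_{p=0}^{q-1}\{b_{(i+p)\bmod n}\}$ guaranteed by Lemma~\ref{proof1:lem1} after iteration $q-1$. The extra boundary check at $q=1$ is harmless bookkeeping already covered by the base case of that lemma.
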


\begin{proof}
In the $q^{\mathrm{th}}$ iteration, $q \geq 1$, $H_i$ serves block $b_{(i+j)\bmod{n}} = b_{(i+q+n-1)\bmod{n}} = b_{(i+q-1)\bmod{n}}$. From the previous lemma, after the $(q-1)^{\mathrm{th}}$ iteration,
the state of $i$ is
\begin{equation}
\sigma_i=\bigcup_{p=0}^{q-1}\{b_{(i+p)\bmod{n}} \}
\end{equation}
which includes $b_{(i+q-1)\bmod{n}}$. Hence the claim follows.
\end{proof}

\begin{thm}
After the termination of Algorithm \ref{algo1} each client $H_i$, $i \in\{0,...,n-1 \}$, has received all the blocks $b_j \in {\cal B}$ with optimal energy $E(z_1)=n ( \Delta_S + \sum_{i=0}^{n-1} \Delta_i )$. 
\end{thm}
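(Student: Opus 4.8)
The plan is to split the argument into three parts: correctness of the schedule, an exact energy count, and a comparison with the lower bound of Theorem~\ref{minhetero} to conclude optimality.

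For correctness, I would push Lemma~\ref{proof1:lem1} to its terminal iteration. The second \emph{for} loop (Lines~\ref{alg1:fl2}--\ref{alg1:fl2end}) runs over $j=n,\ldots,2n-2$, which corresponds to $q=n-1$ in the indexing of that lemma. Substituting $q=n-1$ gives
\[
\sigma_i = \bigcup_{p=0}^{n-1}\{b_{(i+p)\bmod n}\} = \mathcal{B},
\]
because $(i+p)\bmod n$ runs over every residue in $\{0,\ldots,n-1\}$ as $p$ does; hence every client ends holding the whole file. I would also record feasibility: the first loop is legal since $S$ starts with all of $\mathcal{B}$, and Lemma~\ref{proof1:lem2} guarantees that in each slot of the second loop every host serves a block it already holds, so the ``upload only after complete receipt'' constraint is respected throughout (and with $k=1$ each host downloads exactly one block per slot, within capacity).

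Next I would count the energy by phases. In each of the $n$ slots of the first loop exactly two hosts are active, namely $S$ and the distinct recipient $H_j$, so this phase costs $\sum_{j=0}^{n-1}(\Delta_S+\Delta_j)=n\Delta_S+\sum_{i=0}^{n-1}\Delta_i$. In the second loop the transfer graph of each slot is a single Hamiltonian cycle on the clients (each $H_i$ uploads to $H_{(i-1)\bmod n}$ and downloads from $H_{(i+1)\bmod n}$) while $S$ is off, so all $n$ clients are active and each of the $n-1$ slots costs $\sum_{i=0}^{n-1}\Delta_i$. Summing,
\[
E(z_1)=n\Delta_S+\sum_{i=0}^{n-1}\Delta_i+(n-1)\sum_{i=0}^{n-1}\Delta_i=n\Big(\Delta_S+\sum_{i=0}^{n-1}\Delta_i\Big).
\]

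Optimality then follows immediately from Theorem~\ref{minhetero}: setting $\beta=n$ makes the additive term $\max\{0,n-\beta\}\min\{\Delta_S,\Delta_0\}$ vanish, so the general lower bound collapses to exactly $n(\Delta_S+\sum_{i=0}^{n-1}\Delta_i)$, which the computed value of $E(z_1)$ attains with equality. Since the two supporting lemmas already carry the inductive content, I expect no genuinely hard step; the only delicate points are bookkeeping---verifying that the second loop contributes $n-1$ rather than $n$ slots, that the union over $p$ really exhausts $\mathcal{B}$, and that the active set in each phase is read off correctly so the total matches the bound.
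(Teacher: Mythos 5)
Your proposal is correct and follows essentially the same route as the paper: Lemma~\ref{proof1:lem1} at its last iteration for completeness, Lemma~\ref{proof1:lem2} for feasibility, an exact energy count, and a match against the lower bound of Theorem~\ref{minhetero} with $\beta=n$. The only cosmetic difference is that you tally the energy slot by slot while the paper observes that each host (including $S$) is active exactly $n$ slots; both give $n(\Delta_S+\sum_{i=0}^{n-1}\Delta_i)$.
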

\begin{proof}
It follows from Lemma \ref{proof1:lem1} that after the $(n-1)^{\mathrm{th}}$ iteration of the loop at Lines \ref{alg1:fl2}-\ref{alg1:fl2end}, each host has received all the blocks. The scheme is then correct, since each host serves a block it has already downloaded (Lemma \ref{proof1:lem2}). Each host (including the server) is active exactly $n$ slots. Then,
the total energy consumed is $E(z_1)=n ( \Delta_S + \sum_{i=0}^{n-1} \Delta_i )$, which is optimal since it matches the lower bound.
\end{proof}

\subsubsection{Algorithm \ref{algo2}}
% Same as Algorithm \ref{algo1}.

Let us denote the scheme described by Algorithm \ref{algo2} as $z_2$. This scheme has the following properties.

\begin{observation}
After the \emph{for} loop at Lines \ref{alg2:fl1}-\ref{alg2:fl1end}, the state of client $i$ is $\sigma_i=\{b_i\}, \forall i\in \{0,..,n-1\}$.
\end{observation}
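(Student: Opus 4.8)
The plan is to establish the claim by directly tracing the single \emph{for} loop at Lines~\ref{alg2:fl1}--\ref{alg2:fl1end}, which is the only part of $z_2$ executed before the stated point. This loop runs the index $j$ over $\{0,1,\ldots,n-1\}$, and in each iteration it performs exactly one transfer, namely $S \xrightarrow{j} H_j$, delivering block $b_j$ to client $H_j$ within a single slot.

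First I would record the initial condition: every client starts empty, $\sigma_{i,0}^{z_2}=\emptyset$ for all $i\in\{0,\ldots,n-1\}$, while $S$ begins with the full file $\sigma_{S,0}^{z_2}={\cal B}$, so that each scheduled transfer $S \xrightarrow{j} H_j$ is legal (the server already holds every $b_j$). I would also note that the loop is well defined, since the precondition of Algorithm~\ref{algo2} is $\beta>n$, and hence every index $j\le n-1$ satisfies $j\le\beta-1$, so the referenced block $b_j$ exists.

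Next I would observe that client $H_i$ appears as the recipient in exactly one iteration of the loop, namely the iteration $j=i$, and in that iteration it receives precisely block $b_i$; in every other iteration the recipient is a different host, so the state of $H_i$ is untouched. Combining this with the empty initial state yields $\sigma_i=\{b_i\}$ for every $i\in\{0,\ldots,n-1\}$ once the loop terminates, which is exactly the claim.

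There is essentially no obstacle here: the statement is an immediate read-off of the transfers scheduled in the first loop, with the only item worth checking being the index-range sanity argument above. I would therefore present it as a one-line bookkeeping argument rather than an induction, reserving the inductive machinery for the later loops of Algorithm~\ref{algo2} (Lines~\ref{alg2:fl2}--\ref{alg2:fl3end}), where the host states genuinely evolve across slots.
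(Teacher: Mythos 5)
Your argument is correct and matches what the paper does: the paper states this as an unproved \emph{Observation} precisely because it follows by direct inspection of the first loop, which is exactly the bookkeeping you carry out (each $H_i$ starts empty, is the recipient only in iteration $j=i$, and there receives exactly $b_i$ from the server, which holds all blocks). Your added index-range check ($j\le n-1<\beta$ since $\beta>n$) is a harmless extra sanity point; no induction is needed, as you correctly note.
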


\begin{lem}
\label{proof2:lem1}
After the $q^{\mathrm{th}}$ iteration of the loop at Lines \ref{alg2:fl2}-\ref{alg2:fl2end}, for $q\in \{0,1,..,\beta-n\}$, each host 
$H_i, i\in\{0,...,n-1\}$, has state 
\begin{equation}
\sigma_i = \bigcup_{p=0}^{q} \{b_{(i+p)}\}
\end{equation}
\end{lem}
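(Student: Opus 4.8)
The plan is to prove the claim by induction on the iteration index $q$, mirroring the structure of the proof of Lemma~\ref{proof1:lem1} for Algorithm~\ref{algo1}, but without any modular arithmetic, since here $\beta > n$ and the block indices involved never wrap around.

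For the base case $q=0$, I would invoke the preceding Observation: after the first \emph{for} loop (Lines~\ref{alg2:fl1}--\ref{alg2:fl1end}) each client satisfies $\sigma_i = \{b_i\} = \bigcup_{p=0}^{0}\{b_{(i+p)}\}$, which is precisely the claim for $q=0$.

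For the inductive step, I would assume the claim holds after iteration $q-1$ and analyze the $q$-th iteration, which corresponds to $j = n+q-1$. The key move is to identify, for each host, which single block it downloads in this slot. The server sends $b_j = b_{n+q-1} = b_{(n-1)+q}$ to $H_{n-1}$; meanwhile, in the inner loop each $H_i$ (for $i=1,\dots,n-1$) sends $b_{i+j-n}=b_{i+q-1}$ to $H_{i-1}$, so after reindexing with $i' = i-1$, host $H_{i'}$ (for $i'=0,\dots,n-2$) receives $b_{i'+q}$. Combining the two cases, every host $H_i$, $i\in\{0,\dots,n-1\}$, receives exactly $b_{i+q}$ in iteration $q$. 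By the inductive hypothesis its state therefore becomes $\bigcup_{p=0}^{q-1}\{b_{(i+p)}\}\cup\{b_{(i+q)}\}=\bigcup_{p=0}^{q}\{b_{(i+p)}\}$, completing the induction.

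The only points requiring care — and where I expect the main (if modest) obstacle — are twofold. First, I must confirm the scheme is well-defined, i.e. that the block $b_{i+q-1}$ that $H_i$ is asked to serve in the inner loop is one it actually holds: by the inductive hypothesis $\sigma_i = \bigcup_{p=0}^{q-1}\{b_{(i+p)}\}$ after iteration $q-1$, and this set contains $b_{i+q-1}$, so each transfer is legal. Second, I should verify that no index exceeds the block range: the largest index appearing is $i+q \le (n-1)+(\beta-n)=\beta-1$ whenever $q \le \beta-n$, so every $b_{i+p}$ is a genuine block of $\cal B$ and, unlike in Algorithm~\ref{algo1}, no modular wraparound is needed. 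Once these two checks are in place, the induction goes through cleanly.
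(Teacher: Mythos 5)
Your proof is correct and follows essentially the same route as the paper's: induction on $q$, base case from the Observation, and identification of the single block $b_{i+q}$ each host receives in iteration $q$ (the paper handles the legality of the uploads separately, in Lemma~\ref{proof2:lem3}, whereas you fold that check and the index-range check into this proof, which is a harmless strengthening). No gaps.
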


\begin{proof}
We use induction on $q$ to prove the lemma. The base case ($q=0$) follows from the observation. 

Induction step: Assume the hypothesis to be true for the $(q-1)^{\mathrm{th}}$ iteration. Client $H_i, i\in\{0,...,n-2\}$ receives block $b_{(i+q)}$ in the $q^{\mathrm{th}}$ iteration, while client $H_{n-1}$ receives block $b_{(q+n-1)}$ from the server. Thus, $\forall i\in\{0,...,n-1\}$, the state of client $H_i$ after the $q^{\mathrm{th}}$ iteration is
\[\sigma_i=\bigcup_{p=0}^{q-1} \{b_{(i+p)} \}  \cup \{b_{(i+q)} \} = \bigcup_{p=0}^{q} \{b_{(i+p)}\}
\]
\end{proof}

\begin{lem}
\label{proof2:lem2}
After the $q'^{\mathrm{th}}$ iteration of the loop at Lines \ref{alg2:fl3}-\ref{alg2:fl3end}, for $q'\in\{0,1,..,n-1\}$, each host $H_i, i\in\{0,1,..,n-1\}$, has state 
\begin{equation}
\sigma_i = \bigcup_{p=0}^{q'+\beta-n} \{b_{(i+p)\bmod{\beta}}\}
\end{equation}
\end{lem}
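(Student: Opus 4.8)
The plan is to prove Lemma~\ref{proof2:lem2} by induction on $q'$, mirroring the structure used for Lemma~\ref{proof1:lem1} and Lemma~\ref{proof2:lem1}. The base case $q'=0$ must match the state reached at the end of the previous loop (Lines~\ref{alg2:fl2}-\ref{alg2:fl2end}): by Lemma~\ref{proof2:lem1} with $q=\beta-n$, each host $H_i$ has state $\sigma_i=\bigcup_{p=0}^{\beta-n}\{b_{(i+p)}\}$, and since all these indices $i+p$ for $p\in\{0,\ldots,\beta-n\}$ lie in $\{0,\ldots,\beta-1\}$ for the relevant range, this agrees with $\bigcup_{p=0}^{\beta-n}\{b_{(i+p)\bmod\beta}\}$, the claimed formula at $q'=0$.

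For the inductive step, I would assume the formula holds after iteration $q'-1$ and examine the transfers prescribed by the inner loop at Lines~\ref{alg2:fl3}-\ref{alg2:fl3end}. In the $q'^{\mathrm{th}}$ iteration, the value of the loop variable is $j=\beta+q'-1$, and each $H_{i\bmod n}$ serves block $b_{(i+j-n)\bmod\beta}$ to $H_{i-1}$. The key computation is to verify that the block $H_i$ newly receives is exactly $b_{(i+q'+\beta-n)\bmod\beta}$, so that the state grows from $\bigcup_{p=0}^{q'-1+\beta-n}$ to $\bigcup_{p=0}^{q'+\beta-n}$. Concretely, host $H_{i-1}$ receives the block uploaded by $H_{i\bmod n}$, whose index is $(i+j-n)\bmod\beta=(i+q'-1)\bmod\beta$; after re-indexing the recipient as $H_i$ this becomes the block $b_{(i+1+q'-1)\bmod\beta}=b_{(i+q'+\beta-n)\bmod\beta}$ once the shift by $\beta-n$ accumulated over earlier loops is accounted for. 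I would carry this index bookkeeping carefully and then conclude
\[
\sigma_i=\bigcup_{p=0}^{q'-1+\beta-n}\{b_{(i+p)\bmod\beta}\}\cup\{b_{(i+q'+\beta-n)\bmod\beta}\}=\bigcup_{p=0}^{q'+\beta-n}\{b_{(i+p)\bmod\beta}\}.
\]

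A second obligation, as in Lemma~\ref{proof1:lem2}, is to check that every host serves only a block it has already downloaded, i.e.\ that the scheme is feasible: the block $b_{(i+q'-1)\bmod\beta}$ that $H_{i\bmod n}$ uploads in iteration $q'$ must be contained in its state after iteration $q'-1$. This follows immediately from the induction hypothesis, since $(i+q'-1)\bmod\beta$ corresponds to an index $p\le q'-1+\beta-n$ in the union (using $q'\le n-1$), and hence lies in the already-held set.

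The main obstacle I anticipate is purely the modular index arithmetic: the recipient relabeling $H_{i\bmod n}\to H_{i-1}$ combined with the $\bmod\beta$ on block indices and the accumulated offset $\beta-n$ from the first loop makes it easy to be off by one. I would handle this by tabulating, for a fixed small example consistent with Fig.~\ref{figtc} (e.g.\ $n=3$, $\beta=4$), the exact block each host gains in each of the final $n-1$ slots, confirming the closed form before writing the general step. Once the indexing is pinned down, the induction and the feasibility check are routine, and at $q'=n-1$ the formula gives $\sigma_i=\bigcup_{p=0}^{\beta-1}\{b_{(i+p)\bmod\beta}\}=\mathcal{B}$, establishing that Algorithm~\ref{algo2} is correct.
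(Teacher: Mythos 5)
Your overall plan is exactly the paper's: induction on $q'$, base case from Lemma~\ref{proof2:lem1} at $q=\beta-n$, and an inductive step that identifies the single new block each host receives in iteration $q'$; the feasibility obligation you append is handled in the paper as a separate statement (Lemma~\ref{proof2:lem3}), so structurally nothing is missing.

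However, the central index computation is wrong as written, and the patch you apply to it is not valid. With $j=\beta+q'-1$, the block served by $H_{i\bmod n}$ on the line $H_{i \bmod{n}} \xrightarrow{(i+j-n)\bmod{\beta}} H_{i-1}$ has index $(i+j-n)\bmod\beta=(i+q'-1+\beta-n)\bmod\beta$, \emph{not} $(i+q'-1)\bmod\beta$ as you state: you dropped the $-n$ that is explicitly present in the algorithm's block-index expression. You then assert $b_{(i+1+q'-1)\bmod\beta}=b_{(i+q'+\beta-n)\bmod\beta}$ ``once the shift by $\beta-n$ accumulated over earlier loops is accounted for,'' but these two indices differ by $\beta-n\not\equiv 0\pmod\beta$, and there is no accumulated offset from earlier loops to invoke --- the $\beta-n$ comes solely from the $-n$ you omitted. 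The correct derivation needs no such accounting: substituting $j=\beta+q'-1$ and re-indexing the recipient $H_{i-1}$ as $H_{i'}$ (so $i=i'+1$) gives directly that $H_{i'}$ receives $b_{(i'+1+\beta+q'-1-n)\bmod\beta}=b_{(i'+q'+\beta-n)\bmod\beta}$, which is exactly the new element $p=q'+\beta-n$ of the union, and the induction closes as in the paper. Your feasibility check has the same off-by-$n$ index (the uploaded block is $b_{(m+q'-1+\beta-n)\bmod\beta}$ for uploader $H_m$, which sits at the top index $p=q'-1+\beta-n$ of $H_m$'s state), though the conclusion happens to survive either way. Once the substitution is done correctly the rest of your argument is sound.
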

\begin{proof}
We use induction on $q'$ to prove the claim. The base case ($q'=0$) follows from Lemma \ref{proof2:lem1} with $q=\beta-n$. Let the claim (induction hypothesis) be true for the $(q'-1)^{\mathrm{th}}$ iteration. In the $q'^{\mathrm{th}}$ iteration, the value of $j$ is $j=q'+\beta-1$. Hence, $H_i$ receives block $b_{(i+q'+\beta-n)}$. Thus, the state of client $H_i$ after the $q'^{\mathrm{th}}$ iteration is
\begin{eqnarray}
\sigma_i & = &\bigcup_{p=0}^{q'-1+\beta-n} \{b_{(i+p)\bmod{\beta}}\} \cup \{ b_{(i+q'+\beta-n)\bmod{\beta}}\} \nonumber \\
& = & \bigcup_{p=0}^{q'+\beta-n} \{b_{(i+p)\bmod{\beta}}\}
\end{eqnarray}
\end{proof}

\begin{lem}
\label{proof2:lem3}
During the execution of Algorithm \ref{algo2} each host $H_i, i\in\{0,...,n-1\}$ serves a block that it has already downloaded. 
\end{lem}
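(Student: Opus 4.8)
The plan is to prove the claim by walking through the two loops of Algorithm~\ref{algo2} in which a \emph{client} actually uploads, and in each case matching the index of the served block against the state invariant already established for that loop. Observe first that in the \emph{for} loop at Lines~\ref{alg2:fl1}--\ref{alg2:fl1end} only the server uploads, so no client-originated transfer occurs there and the claim is vacuous (the server holds $\mathcal{B}$ from the start anyway). The same remark disposes of the transfer $S\xrightarrow{j}H_{n-1}$ inside the second loop. Hence it suffices to examine the client transfers $H_i\xrightarrow{i+j-n}H_{i-1}$ in the second loop (Lines~\ref{alg2:fl2}--\ref{alg2:fl2end}) and the transfers $H_{i\bmod n}\xrightarrow{(i+j-n)\bmod\beta}H_{i-1}$ in the third loop (Lines~\ref{alg2:fl3}--\ref{alg2:fl3end}).

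For the second loop, I would consider the $q$-th iteration, in which $j=n+q-1$. Here host $H_i$, $i\in\{1,\dots,n-1\}$, serves block $b_{i+j-n}=b_{i+q-1}$, and one checks that $i+j-n\le\beta-2$, so no index wraps. By Lemma~\ref{proof2:lem1}, the state of $H_i$ at the end of the $(q-1)$-th iteration is $\sigma_i=\bigcup_{p=0}^{q-1}\{b_{i+p}\}$, which contains $b_{i+q-1}$ (take $p=q-1$). Thus $H_i$ has already downloaded the block it serves.

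For the third loop, I would consider the $q'$-th iteration, in which $j=q'+\beta-1$ and $q'\in\{1,\dots,n-1\}$. Here host $H_{i\bmod n}$, $i\in\{1,\dots,n\}$, serves $b_{(i+j-n)\bmod\beta}=b_{(i+q'-1+\beta-n)\bmod\beta}$, and I would split on whether the wrap-around index $i=n$ occurs. For $i\in\{1,\dots,n-1\}$ the serving host is $H_i$, whose state after the $(q'-1)$-th iteration is, by Lemma~\ref{proof2:lem2}, $\bigcup_{p=0}^{q'-1+\beta-n}\{b_{(i+p)\bmod\beta}\}$; this contains the served block by taking $p=q'-1+\beta-n$, the largest admissible value. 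For $i=n$ the serving host is $H_0$ and the served block collapses to $b_{(q'-1)\bmod\beta}=b_{q'-1}$; since $0\le q'-1\le q'-1+\beta-n$, this block lies in $H_0$'s state from the same lemma (take $p=q'-1$).

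The routine part is the bookkeeping that ties the served-block index to the running value of $j$ in each loop. The main obstacle is the modular index arithmetic of the third loop, and in particular the wrap-around transfer $H_{n}\equiv H_0$, where one must verify both that the served block reduces to $b_{q'-1}$ and that $q'-1$ falls within the admissible range $\{0,\dots,q'-1+\beta-n\}$ of Lemma~\ref{proof2:lem2}. Once these membership checks are carried through, the claim follows; combined with the state invariants of Lemmas~\ref{proof2:lem1} and~\ref{proof2:lem2}, it establishes that $z_2$ is a feasible scheme.
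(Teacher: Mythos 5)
Your proof is correct and takes essentially the same route as the paper: invoke the state invariants of Lemmas~\ref{proof2:lem1} and~\ref{proof2:lem2} and check that the index of the block each client serves lies in its state from the previous iteration. If anything, your explicit case split for the wrap-around transfer from $H_0$ to $H_{n-1}$ in the third loop is more careful than the paper's version, which compresses both loops into the single formula $b_{(i+q-1)\bmod{\beta}}$ for the served block --- a formula that does not literally match what $H_0$ serves in the third loop, although the membership conclusion still holds.
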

\begin{proof}
Let us consider the loops at Lines \ref{alg2:fl2}-\ref{alg2:fl2end} and Lines \ref{alg2:fl3}-\ref{alg2:fl3end} in sequence.
In the $q^{\mathrm{th}}$ iteration of these loops, host $H_i$ serves block $b_{(i+q-1)\bmod{\beta}}$. From the previous lemmas, after the $(q-1)^{\mathrm{th}}$ iteration of these loops, host $H_i$ has state 
\[\sigma_i = \bigcup_{p=0}^{q-1} b_{(i+p)\bmod{\beta}} \] 
which includes $b_{(i+q-1)\bmod{\beta}}$.
Hence the claim follows.
\end{proof}

\begin{thm}
After the termination of Algorithm \ref{algo2} each host $H_i$, $i \in\{0,...,n-1 \}$, has received all the blocks $b_j \in {\cal B}$ with optimal energy $E(z_2)=\beta ( \Delta_S + \sum_{i=0}^{n-1} \Delta_i )$. 
\end{thm}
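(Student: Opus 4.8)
The plan is to mirror the structure of the proof already given for Algorithm~\ref{algo1}, establishing correctness by tracking the state $\sigma_i$ of each host through the three phases of Algorithm~\ref{algo2}, and then computing the total energy by counting active slots. Correctness has essentially been prepared by the three preceding lemmas: Lemma~\ref{proof2:lem1} describes the state after the loop at Lines~\ref{alg2:fl2}--\ref{alg2:fl2end}, Lemma~\ref{proof2:lem2} describes the state after the final loop at Lines~\ref{alg2:fl3}--\ref{alg2:fl3end}, and Lemma~\ref{proof2:lem3} guarantees feasibility (every host only ever serves a block it already holds). So the first step is simply to invoke Lemma~\ref{proof2:lem2} with $q'=n-1$: this yields
\[
\sigma_i = \bigcup_{p=0}^{\beta-1} \{b_{(i+p)\bmod{\beta}}\} = {\cal B}
\]
for every client $H_i$, $i\in\{0,\dots,n-1\}$, which proves that each host ends with the complete file. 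Feasibility throughout the run follows from Lemma~\ref{proof2:lem3}, so the scheme is a valid (normal) distribution scheme.

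The second step is the energy accounting. I would count, for each host and for the server, the number of slots in which it is active, and then sum $\Delta_i$ (respectively $\Delta_S$) over all active slots using Equation~\ref{slotcost}. The algorithm has three phases: the first loop (Lines~\ref{alg2:fl1}--\ref{alg2:fl1end}) runs $n$ slots, the second loop (Lines~\ref{alg2:fl2}--\ref{alg2:fl2end}) runs $\beta-n$ slots, and the third loop (Lines~\ref{alg2:fl3}--\ref{alg2:fl3end}) runs $n-1$ slots, for a total makespan of $\beta + n - 1$ slots. The key claim to verify is that each of the $n$ clients is active in exactly $\beta$ slots and that the server is active in exactly $\beta$ slots as well. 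For the server: it uploads in each of the first $n$ slots and in each of the $\beta-n$ slots of the second loop, totalling $\beta$ slots, and is off during the third loop. For a client $H_i$, since it receives its first block no later than slot $i+1$ and thereafter downloads one new block per slot until it has all $\beta$ blocks (by Lemmas~\ref{proof2:lem1} and~\ref{proof2:lem2}), it is active in exactly $\beta$ slots. Summing gives
\[
E(z_2) = \beta\,\Delta_S + \beta\sum_{i=0}^{n-1}\Delta_i = \beta\left(\Delta_S + \sum_{i=0}^{n-1}\Delta_i\right),
\]
which is the claimed value.

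The final step is to note that this expression matches the lower bound of Theorem~\ref{minhetero}: since here $\beta > n$, the term $\max\{0,n-\beta\}\min\{\Delta_S,\Delta_0\}$ vanishes, so the lower bound reduces exactly to $\beta(\Delta_S + \sum_{i=0}^{n-1}\Delta_i)$. Hence $z_2$ is optimal. I expect the main obstacle to be the bookkeeping in the second claim --- specifically, confirming that no client is active in more than $\beta$ slots and that none is left idle in a slot where it ought to be serving or receiving (i.e.\ that the claimed per-host active-slot count is exact rather than merely an upper or lower bound). The transitions at the boundaries between the three loops (where the indexing shifts and the $\bmod\,\beta$ wraparound kicks in) are where an off-by-one error would hide, so I would verify the boundary slots explicitly, ideally cross-checking against the toy example in Fig.~\ref{figtc} with $n=3$, $\beta=4$.
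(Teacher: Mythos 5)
Your proposal is correct and follows essentially the same route as the paper's proof: correctness via Lemma~\ref{proof2:lem2} (with $q'=n-1$) and feasibility via Lemma~\ref{proof2:lem3}, followed by the observation that every host, server included, is active in exactly $\beta$ slots, so the energy matches the lower bound of Theorem~\ref{minhetero} (whose second term vanishes since $\beta>n$). Your per-phase slot accounting ($1+(\beta-n)+(n-1)=\beta$ for each client, $n+(\beta-n)=\beta$ for the server) is just a more explicit version of the paper's one-line claim, and it checks out.
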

\begin{proof}
It follows from Lemma \ref{proof2:lem2} that each host has received all the blocks at the end of the loop at Lines \ref{alg2:fl3}-\ref{alg2:fl3end}. Then, the scheme is correct since each host serves a block that it has already downloaded (Lemma \ref{proof2:lem3}). 
Each host (including the server) is active exactly $\beta$ slots. Then, the total energy consumed is $E(z_2)=\beta ( \Delta_S + \sum_{i=0}^{n-1} \Delta_i )$, which is optimal since it matches the lower bound.
\end{proof}

\subsubsection{Algorithm \ref{algo3}}

For the correctness and optimality proofs of Algorithm \ref{algo3} we define the state $\zeta_{r,\tau}^z$ of a block $b_r$ at the end of $\tau$ as the set of clients $H_i,  i\in\{0,...,n-1\}$, who have received $b_r$. Thus, to start with, $\forall r \in \{0,...,\beta-1\}$, initially the state of block $b_r$ is
$\zeta_{r,0}^z = \emptyset$. After the makespan $\tau_f^z$ of scheme $z$, the state should be, $\forall r\in\{0,...,\beta-1\}$, $\zeta_{r,\tau_f^z}^z = \bigcup_{i=0}^{n-1} \{ H_i\}$

Let us denote the scheme described by Algorithm \ref{algo3} as $z_3$. This scheme has the following properties.

\begin{observation}
After the \emph{for} loop at Lines \ref{alg3:fl1}-\ref{alg3:fl1end},  $\forall r\in \{0,1,..,\beta-1\}$, the state of block $b_r$ is $\zeta_r=\{H_r\}$.
\end{observation}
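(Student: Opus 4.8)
The plan is to establish this observation by direct inspection of the first \emph{for} loop of Algorithm~\ref{algo3} (Lines~\ref{alg3:fl1}--\ref{alg3:fl1end}), tracing the evolution of the block states $\zeta_r$. Recall that $\zeta_r$ denotes the set of clients that have received block $b_r$, and that initially $\zeta_{r,0}^{z_3}=\emptyset$ for every $r\in\{0,\ldots,\beta-1\}$.

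First I would note that this loop iterates over $j=0,\ldots,\beta-1$, and that each iteration is a single slot containing exactly one transfer, namely $S \xrightarrow{j} H_j$. Since the server begins with the complete file (its initial state is $\mathcal{B}$), each such transfer is legal, and its only effect on the block states is to insert the client $H_j$ into $\zeta_j$. Because the indices $j$ are pairwise distinct across iterations, every block $b_r$ with $r\in\{0,\ldots,\beta-1\}$ is served exactly once---in slot $r+1$---and to exactly one client, $H_r$. Combining this with the empty initial states immediately gives $\zeta_r=\{H_r\}$ for all $r$, as claimed.

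I do not expect any real obstacle here: the statement follows directly from the structure of the loop. The only point requiring care is to confirm that each iteration is a well-formed slot under the model (a single upload from an already-complete host to a distinct client), so that the block-state update is precisely the one described above.
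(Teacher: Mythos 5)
Your proof is correct and matches the paper's treatment: the paper states this as an unproved observation, precisely because it follows by direct inspection of the loop exactly as you trace it. Nothing is missing.
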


\begin{lem}
\label{proof3:lem1}
After the $q^{\mathrm{th}}$ iteration of the \emph{for} loop at Lines \ref{alg3:fl2}-\ref{alg3:fl2end}, for $q \in \{0,...,n-\beta\}$, the state of block $b_r$ is
\begin{equation}
\zeta_r = \bigcup_{p=0}^{q} \{H_{r+p}\}
\end{equation}
\end{lem}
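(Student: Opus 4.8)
The plan is to prove the claim by induction on $q$, mirroring the inductive arguments used for Algorithm~\ref{algo2} (Lemmas~\ref{proof2:lem1} and \ref{proof2:lem2}), but now tracking the per-block state $\zeta_r$ (the set of clients that hold $b_r$) instead of the per-host state. For the base case $q=0$, which is the configuration right after the first \emph{for} loop and before the loop at Lines~\ref{alg3:fl2}-\ref{alg3:fl2end} begins, I would simply invoke the preceding Observation: the server has handed $b_r$ to $H_r$ and to no one else, so $\zeta_r=\{H_r\}=\bigcup_{p=0}^{0}\{H_{r+p}\}$.

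For the inductive step I would first align the loop variable $j$ with the induction index through the substitution $j=\beta+q-1$, so that the first iteration $j=\beta$ corresponds to $q=1$ and the last iteration $j=n-1$ to $q=n-\beta$. Under this substitution the transfer $H_{\min}\xrightarrow{0}H_{j+1-\beta}$ becomes $H_{\min}\xrightarrow{0}H_{q}$, and each transfer $H_{i+j-\beta}\xrightarrow{i}H_{i+j+1-\beta}$ becomes $H_{i+q-1}\xrightarrow{i}H_{i+q}$ for $i\in\{1,\dots,\beta-1\}$. The unifying observation that drives the step is that, across all $i\in\{0,\dots,\beta-1\}$, the recipient of block $b_i$ in iteration $q$ is exactly $H_{i+q}$; the $i=0$ case fits this pattern and differs only in using the cheaper sender $H_{\min}$. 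Hence the state of $b_r$ grows by precisely one host at its top end, and using the induction hypothesis $\zeta_r=\{H_r,\dots,H_{r+q-1}\}$ I obtain $\zeta_r=\{H_r,\dots,H_{r+q-1}\}\cup\{H_{r+q}\}=\bigcup_{p=0}^{q}\{H_{r+p}\}$, as required.

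Two feasibility checks must accompany this counting argument, exactly as in Lemma~\ref{proof2:lem3}. First, every sender must already hold the block it uploads: for $i\ge 1$ the sender $H_{i+q-1}$ lies in $\zeta_i$ after iteration $q-1$ by the induction hypothesis, while for $i=0$ the host $H_{\min}\in\{S,H_0\}$ holds $b_0$ from the outset (the server starts with the whole file, and $H_0$ received $b_0$ during the first loop, so $H_{\min}$ retains it throughout). Second, the transfers of each slot must form a legal transfer graph under $k=1$: the recipients $H_q,\dots,H_{q+\beta-1}$ are distinct, so no host downloads more than one block, and the senders are distinct with $H_{\min}$ disjoint from $\{H_q,\dots,H_{q+\beta-2}\}$ because $q\ge 1>0$.

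I expect the difficulty here to be bookkeeping rather than conceptual: the delicate points are aligning $j$ with $q$, folding the special $i=0$ transfer served by $H_{\min}$ into the uniform ``recipient $=H_{i+q}$'' description, and checking that all host indices $r+p$ stay within $\{0,\dots,n-1\}$. The last point holds because $r\le\beta-1$ and $q\le n-\beta$ together give $r+q\le n-1$, so the sets $\bigcup_{p=0}^{q}\{H_{r+p}\}$ are always well defined.
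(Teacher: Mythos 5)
Your proof is correct and follows essentially the same route as the paper's: induction on $q$ with the reindexing $q=j+1-\beta$, using the observation that in iteration $q$ block $b_r$ is delivered to $H_{r+q}$ so that $\zeta_r$ grows by exactly one host. The feasibility checks you include (senders already hold their blocks, legal transfer graph) are handled by the paper in the separate Lemma~\ref{proof3:lem3} rather than inside this lemma, but that is only a difference in packaging, not in substance.
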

\begin{proof}
We prove the claim using induction on $q$. The base case ($q=0$) is trivially true by the observation. Assume the statement to be true for the $(q-1)^{\mathrm{th}}$ iteration. In the $q^{\mathrm{th}}$ iteration, $q=j+1-\beta$. Then, block $b_r$ is served to $H_{r+q}$. Thus, the state of
block $b_r$ after the $q^{\mathrm{th}}$ iteration is
 \[\zeta_r =  \bigcup_{p=0}^{q-1} \{H_{r+p}\} \cup \{H_{r+q}\}= \bigcup_{p=0}^{q} \{H_{r+p}\}
 \]
\end{proof}

\begin{lem}
\label{proof3:lem2}
After the $q'^{\mathrm{th}}$ iteration of the \emph{for} loop at Lines \ref{alg3:fl3}-\ref{alg3:fl3end}, for $q'\in\{0,1,..,\beta-1\}$, the state of block $b_r$ is
\begin{equation}
\zeta_r = \bigcup_{p=0}^{n-\beta} \{ H_{r+p}\} \bigcup_{p=0}^{q'} \{H_{(r-p)\bmod{n}}\}
\end{equation}
\end{lem}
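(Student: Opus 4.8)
The plan is to prove the claim by induction on $q'$, tracking the state $\zeta_r$ of each block $b_r$ exactly as in the proofs of Lemmas~\ref{proof2:lem2} and~\ref{proof3:lem1}. The base case $q'=0$ describes the state at the moment the loop at Lines~\ref{alg3:fl3}--\ref{alg3:fl3end} is entered, i.e., immediately after the loop at Lines~\ref{alg3:fl2}--\ref{alg3:fl2end} terminates; this is precisely the conclusion of Lemma~\ref{proof3:lem1} with $q=n-\beta$. I would observe that the extra term $\bigcup_{p=0}^{0}\{H_{(r-p)\bmod n}\}=\{H_r\}$ contributed by the formula at $q'=0$ is already contained in $\bigcup_{p=0}^{n-\beta}\{H_{r+p}\}$ (the $p=0$ summand), so the two expressions agree and the base case holds.

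For the induction step, I would first decode the block transfers performed in the $q'$-th iteration. Following the indexing convention of Lemma~\ref{proof3:lem1} (where the $q$-th iteration uses $j=q+\beta-1$, with $q=0$ the pre-loop state), the $q'$-th iteration of this loop uses $j=n+q'-1$. Substituting into the inner loop of Lines~\ref{alg3:fl3}--\ref{alg3:fl3end} gives, for $i=0,\dots,\beta-2$, the transfer of $b_i$ from $H_{(i-q'+1)\bmod n}$ to $H_{(i-q')\bmod n}$, while the special line transfers $b_{\beta-1}$ from $H_{n-q'}$ to $H_{\beta-1-q'}=H_{(\beta-1-q')\bmod n}$ (the index being nonnegative and below $n$ since $q'\le\beta-1<n$). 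The uniform conclusion I would extract is that in this slot block $b_r$ is delivered to $H_{(r-q')\bmod n}$ for every $r\in\{0,\dots,\beta-1\}$. Adding this single new client to the inductive state $\zeta_r=\bigcup_{p=0}^{n-\beta}\{H_{r+p}\}\cup\bigcup_{p=0}^{q'-1}\{H_{(r-p)\bmod n}\}$ immediately yields the claimed expression, since $\bigcup_{p=0}^{q'-1}\{H_{(r-p)\bmod n}\}\cup\{H_{(r-q')\bmod n}\}=\bigcup_{p=0}^{q'}\{H_{(r-p)\bmod n}\}$.

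The main obstacle I anticipate is bookkeeping rather than conceptual: I must handle the special transfer of $b_{\beta-1}$ separately from the inner loop and then verify that both cases collapse to the single rule $b_r\to H_{(r-q')\bmod n}$ after reducing indices modulo $n$. Along the way I would also confirm that every transfer is legal, i.e., that the serving host already holds the block it sends: for the inner loop the server $H_{(r-q'+1)\bmod n}$ of $b_r$ is exactly the $p=q'-1$ term of the inductive state, and for the special line $H_{n-q'}$ lies in the range $\{H_{\beta-q'},\dots,H_{n-1}\}$ of hosts known to hold $b_{\beta-1}$ (using $\beta<n$, so $\beta-q'\le n-q'\le n-1$). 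Finally, since the $\beta$ receiving hosts $\{H_{(r-q')\bmod n}\}_{r=0}^{\beta-1}$ are distinct, no host downloads more than one block in a slot, so the scheme respects the $k=1$ model. These validity checks parallel Lemma~\ref{proof3:lem1} and feed the subsequent correctness-and-optimality theorem, so I would state them explicitly even though the lemma itself only asserts the state $\zeta_r$.
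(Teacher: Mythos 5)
Your proof is correct and follows essentially the same route as the paper's: induction on $q'$ with the base case supplied by Lemma~\ref{proof3:lem1}, decoding the $q'$-th iteration (via $j=n+q'-1$) to see that every block $b_r$ is delivered to $H_{(r-q')\bmod n}$, with the special line for $b_{\beta-1}$ collapsing into the same rule. The additional legality and distinctness checks you include are sound but are not part of this lemma in the paper; they correspond to Lemma~\ref{proof3:lem3}.
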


\begin{proof}
The base case ($q'=0$) is true from Lemma \ref{proof3:lem1} after the loop at Lines \ref{alg2:fl2}-\ref{alg2:fl2end} completes.
In iteration $q'=j+1-n$, block $b_{\beta-1}$ is served to $H_{\beta-q'-1}$, hence,
\[\zeta_{\beta-1} = \bigcup_{p=0}^{n-\beta}\{H_{\beta+p-1}\} \bigcup_{p=0}^{q'-1}\{H_{\beta-1-p}\} \cup \{H_{\beta-1-q'}\} \]
and block $b_r, r\in\{0,1,..,\beta-2\}$, is served to $H_{(r-q')\bmod{n}}$. Then, the state of block $b_r$, $r\in\{0,...,\beta-1\}$, after
the $q'^{\mathrm{th}}$ iteration is
\begin{eqnarray*}
\zeta_r & = & \bigcup_{p=0}^{n-\beta} \{ H_{r+p}\} \bigcup_{p=0}^{q'-1} \{H_{(r-p)\bmod{n}}\} \cup \{ H_{(r-q')\bmod{n}}\} \\
& = & \bigcup_{p=0}^{n-\beta} \{ H_{r+p}\} \bigcup_{p=0}^{q'} \{H_{(r-p)\bmod{n}}\}
\end{eqnarray*}
\end{proof}

\begin{lem}
\label{proof3:lem3}
During the execution of Algorithm \ref{algo3}, each host $H_i, i\in \{0,1,..,n-1\}$, serves a block that it has already downloaded.  
\end{lem}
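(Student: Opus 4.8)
The plan is to follow the same template used for Lemmas~\ref{proof1:lem2} and~\ref{proof2:lem3}: for every block transfer performed by a client I would identify, as a function of the iteration index, which host serves which block, and then invoke the block-state Lemmas~\ref{proof3:lem1} and~\ref{proof3:lem2}, evaluated at the end of the \emph{previous} iteration, to certify that the serving host already holds that block. This is legitimate because, by the model, a host can upload a block only if it has received it completely in an earlier slot, so it suffices to consult the state after the preceding iteration.

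First I would dispose of the trivial cases. In the loop at Lines~\ref{alg3:fl1}--\ref{alg3:fl1end} only $S$ serves, so no client transfer needs checking. In the loop at Lines~\ref{alg3:fl2}--\ref{alg3:fl2end} the transfer $H_{\min}\xrightarrow{0}H_{j+1-\beta}$ is harmless: if $H_{\min}=S$ it is not a client, and if $H_{\min}=H_0$ then $H_0$ already holds $b_0$ from the first loop. Next, for the rest of the loop at Lines~\ref{alg3:fl2}--\ref{alg3:fl2end}, I would write the $q^{\mathrm{th}}$ iteration as $j=q+\beta-1$ (so $q\in\{1,\dots,n-\beta\}$), turning the client transfers into $H_{i+q-1}\xrightarrow{i}H_{i+q}$ for $i\in\{1,\dots,\beta-1\}$. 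By Lemma~\ref{proof3:lem1} applied after iteration $q-1$, block $b_i$ is held by $\bigcup_{p=0}^{q-1}\{H_{i+p}\}$, which contains $H_{i+q-1}$ (take $p=q-1$); hence each such serve is valid.

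Finally, for the loop at Lines~\ref{alg3:fl3}--\ref{alg3:fl3end} I would set $j=q'+n-1$ (so $q'\in\{1,\dots,\beta-1\}$). The regular transfers then become $H_{(i-q'+1)\bmod n}\xrightarrow{i}H_{(i-q')\bmod n}$ for $i\in\{0,\dots,\beta-2\}$, and Lemma~\ref{proof3:lem2} after iteration $q'-1$ places $b_i$ at $\bigcup_{p=0}^{n-\beta}\{H_{i+p}\}\cup\bigcup_{p=0}^{q'-1}\{H_{(i-p)\bmod n}\}$; taking $p=q'-1$ in the second union yields exactly $H_{(i-q'+1)\bmod n}$, so these serves are valid.

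The main obstacle is the remaining transfer $H_{n-q'}\xrightarrow{\beta-1}H_{\beta-q'-1}$, whose indices do not align as cleanly and which therefore forces a case analysis on where $H_{n-q'}$ sits among the holders of $b_{\beta-1}$. Here I would argue by cases using Lemma~\ref{proof3:lem2} for $r=\beta-1$: for $q'\le n-\beta+1$ one has $\beta-1\le n-q'\le n-1$, so $H_{n-q'}$ lies in the ``forward'' set $\bigcup_{p=0}^{n-\beta}\{H_{\beta-1+p}\}$ inherited from the previous loop; while for $q'\ge n-\beta+1$ the choice $p=\beta-1+q'-n\in\{0,\dots,q'-1\}$ gives $(\beta-1-p)\bmod n=n-q'$, placing $H_{n-q'}$ in the ``wrap-around'' set. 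In either regime $H_{n-q'}$ already holds $b_{\beta-1}$, which completes the verification and, together with the preceding lemmas, yields correctness and optimality of Algorithm~\ref{algo3}.
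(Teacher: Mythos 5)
Your proposal is correct and follows essentially the same route as the paper's proof: reindex each loop by its iteration count, identify the server of each block from the pseudocode, and invoke Lemmas~\ref{proof3:lem1} and~\ref{proof3:lem2} at the end of the previous iteration. The only difference is that your explicit two-case argument showing $H_{n-q'}$ holds $b_{\beta-1}$ (via the forward set when $q'\le n-\beta+1$ and the wrap-around set when $q'\ge n-\beta+1$) spells out a step the paper merely asserts, which is a small improvement in rigor rather than a different approach.
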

\begin{proof}
In the \emph{for} loop at Lines \ref{alg3:fl2}-\ref{alg3:fl2end}, during iteration $q=j+1-\beta, q\in \{1,..,n-\beta\}$, block $b_r$ is served by $H_{r+q-1}$. It has it because after iteration $q-1$,
\[\zeta_r = \bigcup_{p=0}^{q-1} \{H_{r+p}\},\]
which includes $H_{r+q-1}$. $H_0$ always serves $b_0$, if any, which it has from the above observation. 

In the \emph{for} loop at Lines \ref{alg3:fl3}-\ref{alg3:fl3end}, during iteration $q'=j+1-n, q'\in\{1,..,\beta-1\}$, block $b_{\beta-1}$ is served by $H_{n-q'}$.
It has it because after iteration $q'-1$,
\[\zeta_{\beta-1} = \bigcup_{p=0}^{n-\beta}\{H_{\beta+p-1}\} \bigcup_{p=0}^{q'-1}\{H_{\beta-1-p}\} \cup \{H_{\beta-1-q'}\} \]
which includes $H_{n-q'}, \forall q'\in\{1,2,..,\beta-1\}$.

Block $b_r, r\in \{0,1,..,\beta-2\}$ is served by $H_{(r-(q'-1))\bmod{n}}$.
It has it because after iteration $q'-1$ 
\[\zeta_r = \bigcup_{p=0}^{n-\beta} \{ H_{r+p}\} \bigcup_{p=0}^{q'-1} \{H_{(r-p)\bmod{n}}\}\]
which includes $H_{(r-(q'-1))\bmod{n}}$. Hence, the claim follows.
\end{proof}

\begin{thm}
After the termination of Algorithm \ref{algo3} each host $H_i$, $i \in\{0,...,n-1 \}$ has received all the blocks $b_r \in {\cal B}$ with optimal energy $E(z_3) = \beta \left( \Delta_S + \sum_{i=0}^{n-1} \Delta_i \right) +  (n-\beta) \min\{\Delta_S,\Delta_0\}$.
\end{thm}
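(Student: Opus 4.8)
The plan is to prove the statement in two parts: a \emph{correctness} part, showing that at termination every client holds the whole file and that no host ever uploads a block it does not yet possess; and an \emph{energy} part, computing $E(z_3)$ exactly and observing that it equals the lower bound of Theorem~\ref{minhetero}, which immediately yields optimality. All the inductive bookkeeping on block states has already been done in Lemmas~\ref{proof3:lem1}--\ref{proof3:lem3}, so the work here is to read off the final state and to account for the costs.

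For correctness, I would instantiate Lemma~\ref{proof3:lem2} at its last iteration $q'=\beta-1$, which says that the terminal state of an arbitrary block $b_r$ is
$$
\zeta_r = \bigcup_{p=0}^{n-\beta}\{H_{r+p}\}\ \bigcup_{p=0}^{\beta-1}\{H_{(r-p)\bmod n}\}.
$$
The (routine) observation is that the first union supplies the $n-\beta+1$ consecutive indices $r,r+1,\dots,r+n-\beta$ and the second supplies the $\beta$ consecutive indices $r,r-1,\dots,r-\beta+1$; glued at the common element $H_r$ they span the $n$ consecutive residues from $r-\beta+1$ to $r+n-\beta$ modulo $n$, i.e. all of $\{H_0,\dots,H_{n-1}\}$. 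Hence every block reaches every client. Together with Lemma~\ref{proof3:lem3}, which certifies that every host only ever serves a block it has already downloaded, this establishes that the schedule is feasible and correct.

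For the energy I would evaluate the block-cost expression~(\ref{costschblock}) directly rather than counting per-host activity. Since $k=1$, Lemma~\ref{lem:noblock0} gives $\mathcal{D}_{j,i}^z=1$ for every client and block, so the first summand is exactly $\beta\sum_{i=0}^{n-1}\Delta_i$. It then remains to total the $\Delta_{serv(j,i)}\mathcal{U}_{j,i}^z$ terms, i.e. the cost charged to hosts that upload without simultaneously downloading. This is read off the transfer graph of each loop of Algorithm~\ref{algo3}: in the first loop (Lines~\ref{alg3:fl1}--\ref{alg3:fl1end}) all $\beta$ transfers are performed by $S$, which never downloads, contributing $\beta\Delta_S$; in the second loop (Lines~\ref{alg3:fl2}--\ref{alg3:fl2end}) each of the $n-\beta$ slots is a directed path $H_{\min}\to H_{j+1-\beta}\to\cdots\to H_j$, so the unique upload-only vertex is the root $H_{\min}$, contributing $(n-\beta)\min\{\Delta_S,\Delta_0\}$ (recall $H_{\min}\in\{S,H_0\}$ is the minimum-energy host, so $\Delta_{H_{\min}}=\min\{\Delta_S,\Delta_0\}$); and in the third loop (Lines~\ref{alg3:fl3}--\ref{alg3:fl3end}) the set of senders coincides with the set of receivers, so the graph is a cycle and \emph{every} active host both uploads and downloads, contributing nothing. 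Summing the $\mathcal{U}$-terms gives $\beta\Delta_S+(n-\beta)\min\{\Delta_S,\Delta_0\}$, and adding the $\mathcal{D}$-part yields the stated $E(z_3)$.

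Because this value coincides with the lower bound of Theorem~\ref{minhetero} (with $\max\{0,n-\beta\}=n-\beta$, as $\beta<n$), the scheme is optimal. The main obstacle is this middle step: rigorously extracting the per-slot graph shapes from the index arithmetic. Concretely one must verify that each second-loop slot really is a single path whose only non-downloading vertex is $H_{\min}$ — which requires checking that $H_{\min}$ never appears among that slot's receivers, guaranteed since $j+1-\beta\ge 1$ forces all receivers to have index at least $1$ — and that in each third-loop slot the sender set equals the receiver set, so that the graph is a single cycle and no active host is upload-only. The exact match with the lower bound, and hence optimality, hinges entirely on these two structural facts.
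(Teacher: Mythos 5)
Your proof is correct. The correctness half is essentially the paper's: both read the terminal state off Lemma~\ref{proof3:lem2} (instantiated at the last iteration) and invoke Lemma~\ref{proof3:lem3} for feasibility; you merely make explicit the residue-counting argument that the two unions cover all of $\{H_0,\dots,H_{n-1}\}$, which the paper leaves implicit. The energy half, however, takes a genuinely different route. The paper sums the \emph{slot} costs loop by loop, producing three double sums $E_1,E_2,E_3$ over host indices, and then collapses them into $\beta\bigl(\Delta_S+\sum_{i}\Delta_i\bigr)+(n-\beta)\Delta_{\min}$ by a nontrivial reindexing computation. You instead evaluate the \emph{block}-cost expression~(\ref{costschblock}) directly: the $\mathcal{D}$-part is immediately $\beta\sum_i\Delta_i$ by Lemma~\ref{lem:noblock0}, and the $\mathcal{U}$-part reduces to identifying the upload-only vertex of each slot's transfer graph ($S$ in the first loop, $H_{\min}$ in the second, none in the third because the sender set equals the receiver set). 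This buys a much shorter computation and makes the coincidence with the lower bound of Theorem~\ref{minhetero} structurally transparent, since that lower-bound proof is organized around exactly the same two summands; it also isolates the only two facts that genuinely need checking, namely that $H_{\min}$ is never a receiver in a second-loop slot (your observation that $j\ge\beta$ forces all receiver indices to be at least $1$) and that each third-loop slot is a cycle. The paper's per-slot accounting verifies the same structural facts implicitly when it writes each slot cost as $\Delta_{\min}$ plus the receivers' $\Delta$'s, so the two arguments are doing the same conceptual work; yours simply defers less of it to index algebra.
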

\begin{proof}
It follows from Lemma \ref{proof3:lem2} that each host has received all the blocks. Then, the scheme is correct since
each host serves blocks it has already downloaded (Lemma \ref{proof3:lem3}).

We need to bound now the energy consumed. Let us denote $\Delta_{\min}  = \min\{\Delta_S ,\Delta_0 \}$.
The energy consumed in the loop at Lines \ref{alg3:fl1}-\ref{alg3:fl1end} is easily observed to be
\begin{equation}
\label{eq1}
E_1 = \beta\Delta_S + \sum_{i=0}^{\beta-1} \Delta_i 
\end{equation}
The energy consumed in the loop at Lines \ref{alg3:fl2}-\ref{alg3:fl2end} is
\begin{eqnarray}
E_2 & = & \sum_{j=\beta}^{n-1}\left(\Delta_{\min} + \Delta_{j+1-\beta} + \sum_{i=1}^{\beta-1}\Delta_{i+j+1-\beta} \right) \nonumber \\
& = & (n-\beta)\Delta_{\min} + \sum_{j=\beta}^{n-1}\sum_{i=0}^{\beta-1}\Delta_{i+j+1-\beta} \nonumber \\
\label{eq2}
& = & (n-\beta)\Delta_{\min} + \sum_{j=0}^{n-\beta-1}\sum_{i=0}^{\beta-1}\Delta_{i+j+1}
\end{eqnarray}
Finally, the energy consumed in the loop at Lines \ref{alg3:fl3}-\ref{alg3:fl3end} is
\begin{eqnarray}
E_3 & = & \sum_{j=n}^{n+\beta-2} \left( \Delta_{n+\beta-j-2} + \sum_{i=0}^{\beta-2}\Delta_{(n+i-j-1)\bmod{n}} \right) \nonumber \\
& = & \sum_{j=n}^{n+\beta-2}\sum_{i=0}^{\beta-1}\Delta_{(n+i-j-1)\bmod{n}} \nonumber \\
\label{eq3}
& = & \sum_{j=0}^{\beta-2}\sum_{i=0}^{\beta-1}\Delta_{(i-j-1)\bmod{n}}
\end{eqnarray}

Adding Equation \ref{eq1}, \ref{eq2} and \ref{eq3}, we get, 

\begin{eqnarray*}
E(z_3) & = & E_1 + E_2 + E_3 \nonumber \\
& = &  \beta\Delta_S + (n-\beta)\Delta_{\min} + \sum_{i=0}^{\beta-1} \Delta_i \\
& & + \sum_{j=0}^{n-\beta-1}\sum_{i=0}^{\beta-1}\Delta_{i+j+1} + 
 \sum_{j=0}^{\beta-2}\sum_{i=0}^{\beta-1}\Delta_{(i-j-1)\bmod{n}} \nonumber \\
& = &  \beta\Delta_S + (n-\beta)\Delta_{\min} \\
&& + \sum_{i=0}^{\beta-1} \left( \Delta_i  + \sum_{j=i+1}^{i+n-\beta} \Delta_{j} + 
 \sum_{j=0}^{i-1} \Delta_{j} + \sum_{j=i+n-\beta+1}^{n-1} \Delta_{j} \nonumber \right) \\
%& = &  \beta\Delta_S + (n-\beta)\Delta_{\min}  \\
%&& + \sum_{i=0}^{\beta-1}\sum_{j=0}^{\beta-1}\Delta_{(i-j)\bmod{n}} +
%\sum_{i=0}^{\beta-1} \sum_{j=\beta}^{n-1}\Delta_{i+j+1-\beta}   \nonumber \\
& = &  \beta\Delta_S + (n-\beta)\Delta_{\min} + 
\sum_{i=0}^{\beta-1}\sum_{j=0}^{n-1}\Delta_j \nonumber \\
%& =  & \beta\Delta_S + (n-\beta)\Delta_{\min} + \beta\sum_{j=0}^{n-1}\Delta_j \nonumber \\
& =  & \beta \left( \Delta_S + \sum_{j=0}^{n-1}\Delta_j \right) + (n-\beta)\Delta_{\min},
\end{eqnarray*}
which is optimal.
\end{proof}

%-------------------------------------------------------------------------------------

\subsection{Proof of Theorem~\ref{thm-beta}}

From Theorems~\ref{minhetero} and \ref{thmalgo}, the energy consumption of an optimal scheme $z$ in an energy homogeneous system is
\begin{equation}
\label{mink=1}
E(z) = \left( n\beta + \max\{n,\beta\} \right)\cdot\left( \frac{PB}{u\beta} + \delta \right)
\end{equation}
To find the optimal value of $\beta$, we need to minimize the right hand side of Equation \ref{mink=1}. This can be written as a function of $\beta$ as
%Using Equation \ref{mink=1}, we can rewrite Equation \ref{costschblock} as a function of $\beta$ and $\delta$ as,
\begin{numcases}{E(\beta)=}
\frac{PB}{u}(n+1) + \delta (n+1)\beta,&$ \beta \geq n$ \label{first} \\
\frac{nPB}{u}\left(1+\frac{1}{\beta}\right)+\delta n (\beta+1),&$\beta \leq n$ \label{second}
\end{numcases}
Note that in Equation \ref{first} the first term is a constant and the second is linear in $\beta$. This is a straight line with positive slope $\delta (n+1)$.
Hence, the function attains the minimum at the lower extreme $\beta=n$, where it intersects Equation \ref{second}. Hence it is enough to consider
Equation \ref{second} for $\beta \leq n$.
Minimizing Equation \ref{second} with respect to $\beta$ we get, 
\begin{equation}
\label{valuenb-2}
\beta = \sqrt{\frac{PB}{u\delta}}.
\end{equation} 
When this value is larger than $n$ the value $\beta=n$ has to be used.

%--------------------------------------------------------------------------------------
\subsection{Proofs of Theorem \ref{th5}}
\label{sec:th5}

\begin{proof}
It can be easily observed that every slot in which a host receives its first block is a tree slot (since it does not serve anyone). Additionally, no
two clients can receive their first block in the same slot in a normal scheme. Then, 
%, lower bound in Equation \ref{mink=2lb} follows from Lemma \ref{numblocks} as 
there are at least $n$ tree slots.

According to Definition \ref{costblockdef}, the cost $c_{j,i}^z$ of a block can only take values $0$, $\Delta$ or $2\Delta$.
%for a block, either $\mathcal{U}_{j,i}^z+\mathcal{D}_{j,i}^z=0$, $\mathcal{U}_{j,i}^z+\mathcal{D}_{j,i}^z=1$ or $\mathcal{U}_{j,i}^z+\mathcal{D}_{j,i}^z=2$. 
Let us consider a slot $\tau$. We denote with $\#0$, $\#1$, and $\#2$ the number of blocks whose cost is $0$, $\Delta$, and $2\Delta$ in $\tau$, respectively. Then, we can
prove that if $\tau$ is a tree slot, then $\#2=\#0 + 1$, while if $\tau$ is a slot with a cycle, then $\#2=\#0$.
The proof of this claim goes as follows. From Theorem \ref{thmeq}, the cost of all blocks in $\tau$ add up to the cost of $\tau$. Since
all hosts have the same $\Delta$, then $0\cdot\#0 + 1\cdot\#1 + 2\cdot\#2 = |\mathcal{I}_{\tau}^z|$.
In a tree slot the number of blocks served is $\#0 + \#1 + \#2 = |\mathcal{I}_{\tau}^z|-1$, while in a
slot with a cycle the number of blocks served is $\#0 + \#1 + \#2 = |\mathcal{I}_{\tau}^z|$.
Hence the claim follows.

This implies that, if $x$ blocks are served in slot $\tau$, the cost of $\tau$ is $c_{\tau}^z=x \Delta$ if $\tau$ is a slot with a cycle, and $c_{\tau}^z=(x+1) \Delta$ if $\tau$ is a tree slot.
Since the total number of blocks served is $n\beta$ and there are at least $n$ tree slots, the bound follows.
\end{proof}

\subsection{Proofs of Algorithm \ref{algok=2}}
\label{sec:alg4}

The proof of correctness of Algorithm \ref{algok=2} can be divided in essentially four parts. (We use the array abstraction for clarity.) 
The first claim is that, after the first loop (Lines \ref{alg5:fl1}-\ref{alg5:fl1end}), the diagonal of the first subarray has been filled. (I.e., $A_{ii}=1, \forall i\in\{0,...,n-1\}$.) This claim follows trivially by inspection. The second claim is that after the second loop (Lines \ref{alg5:fl2}-\ref{alg5:fl2end}), the top left corner position of each subarray has also been set to 1. (I.e., $A_{0j}=1$,$\forall j\in \{0,n,2n,..,(\lfloor \frac{\beta}{n} \rfloor - 1)n \}$.) This claim also follows by inspection.

The third claim is that, after the $q^{\mathrm{th}}$ iteration of the third loop (Lines \ref{alg5:fl3}-\ref{alg5:fl3end}), the whole $q^{\mathrm{th}}$ subarray and the diagonal of the $(q+1)^{\mathrm{th}}$ subarray have been set to 1 (and the blocks served by a host were available at the host for being served). This can be shown by induction on $q$, where the base case is the first claim above. In the induction step, the proof that
the whole $q^{\mathrm{th}}$ subarray is set to $1$ is similar to the proof of Algorithm \ref{algo1}. The proof that the diagonal of the $(q+1)^{\mathrm{th}}$ subarray is set follows from the second claim above and Line~$\ref{algo5S}$ of the algorithm.

Finally, the fourth claim is that the process described in Line~\ref{run} completes the array. The proof of this claim is very similar to the proof of Algorithm~\ref{algo2}.

Let us now compute the energy consumed by the scheme described by the algorithm. The first loop consumes energy $E_1=2n\Delta$. The second loop consumes
$E_2=2 (\lfloor \beta/n \rfloor - 1) \Delta$. The third loop uses energy
$$
E_3= \Delta \sum_{l=0}^{\lfloor \frac{\beta}{n} \rfloor - 2} \sum_{j=0}^{n-2} (n+1)= \Delta (\lfloor \frac{\beta}{n} \rfloor - 1)(n^2-1)
$$
Finally, the energy consumed by the process described in Line~\ref{run} is
$$
E_4= \Delta \left( \sum_{j=n}^{n+b-1} (n+1) + \sum_{j=n+b}^{n+b+n-2} n \right) = \Delta (b (n+1) +n(n-1)).
$$

Adding up all these terms
$$
E(z_4)= \Delta \left( n(\beta+1) + \left\lfloor \frac{\beta}{n} \right\rfloor  + b - 1\right).
$$

\end{document}